\def\final{1}
\def\full{1}
\def\extras{0}
\newcommand{\mynote}[1]{\marginpar{\tiny\sf #1}}
\newcommand{\mynote}[1]{}
\newcommand{\anote}[1]{\mynote{\color{BrickRed} Adam: {#1}}}
\newcommand{\snote}[1]{\mynote{Sofya: {#1}}}
\newcommand{\mypar}[1]{\paragraph{#1}}
\newcommand{\drop}[1]{}
\def \R {\mathbb{R}}
\newcommand{\E}{\mathbb{E}}
\newcommand{\cala}{\mathcal A}
\newtheorem{theorem}{Theorem}[section]
\newtheorem{lemma}[theorem]{Lemma}
\newtheorem{prop}[theorem]{Proposition}
\newtheorem{corollary}[theorem]{Corollary}
\newtheorem{assumption}[theorem]{Assumption}
\newtheorem{definition}[theorem]{Definition}
\newtheorem{defn}[theorem]{Definition}
\newcommand{\ie}{{\it i.e.,\ }}
\newcommand{\eps}{\epsilon}
\def\E{\mathop{\mathbb{E}}\displaylimits}
\def\Lap{\mathop{\rm{Lap}}\nolimits}
\newcommand{\re}{\mathbb{R}}
\newcommand{\defeq}{\overset{\text{def}}{=}}
\newcommand{\argmin}{\operatorname*{argmin}}
\newcommand{\ip}[1]{{\left \langle {#1} \right\rangle}}
\newcommand{\A}{\mathcal{A}}
\newcommand{\G}{\mathcal{G}}
\newcommand{\Gn}{\mathcal{G}_n}
\newcommand{\Gd}{\mathcal{G}^{\thresh}}
\newcommand{\Gnd}{\mathcal{G}_n^\thresh}
\newcommand{\dwire}{d_{\text{\rm node}}}
\newcommand{\Av}{\bar{d}} %{\text{\rm Av}}
\newcommand{\cum}{P} %{\text{\rm Av}} %cumulative degree distribution
\newcommand{\dedge}{d_{\text{\rm edge}}}
\newcommand{\paren}[1]{\left({#1} \right)}
\newcommand{\onesmat}[1]{{\mathbf{1}_{{#1}}}}
\newcommand{\outfl}[1]{{#1}_{s\bullet}}
\newcommand{\infl}[1]{{#1}_{\bullet t}}
\newcommand{\inoutfl}[1]{{#1}_{s\bullet,\bullet t}}
\newcommand{\Dvec}{\vec{\thresh}_{2n}}
\newcommand{\thresh}{{D}}
\newcommand{\vnew}{v^{\text{new}}}
\newcommand{\FG}{{\text{\sf FG}}}
\newcommand{\deglist}{\text{deg-list}}
\newcommand{\dlext}{\hat f_\thresh}
\newcommand{\dlexthat}{\hat f_{\hat \thresh}}
\begin{document}
\title{Efficient Lipschitz Extensions \\for High-Dimensional Graph
  Statistics \\ and Node Private Degree Distributions}
%\title{Efficient Lipschitz Extensions \\for High-Dimensional Graph
%  Statistics \\ with Applications to Data Privacy}

\author{Sofya Raskhodnikova\thanks{Computer Science and Engineering
    Department, Pennsylvania State
    University. \url{{asmith,sofya}@cse.psu.edu}. Supported by NSF
    awards  CDI-0941553 and IIS-1447700 and a Google Faculty Award. Part of this work was done
    while visiting Boston University's Hariri Institute for Computation.} \and Adam Smith\footnotemark[1]}

\thispagestyle{empty}
\maketitle

\begin{abstract}
Lipschitz extensions were recently proposed as a tool for designing node differentially private algorithms. However, efficiently computable Lipschitz extensions were known only for 1-dimensional functions (that is, functions that output a single real value). In this paper, we study efficiently computable Lipschitz extensions for multi-dimensional (that is, vector-valued) functions on graphs. We show that, unlike for 1-dimensional functions, Lipschitz extensions of higher-dimensional functions on graphs do not always exist, even with a non-unit stretch. We design Lipschitz extensions with small stretch for the sorted degree list and for the degree distribution of a graph. Crucially, our extensions are efficiently computable.

We also develop new tools for employing Lipschitz extensions in the design of differentially private algorithms.  Specifically, we generalize the exponential mechanism, a widely used tool in data privacy. The exponential mechanism is given a collection of score functions that map datasets to real values. It attempts to return the name of the function with nearly minimum value on the data set. Our generalized exponential mechanism provides better accuracy when the sensitivity of an optimal score function is much smaller than the maximum sensitivity of score functions.

We use our Lipschitz extension and the generalized exponential mechanism to design a node-differentially private algorithm for releasing an approximation to the degree distribution of a graph. Our algorithm is much more accurate than algorithms from previous work.

%~\cite{BBDS13,KNRS13}.
%
%
%  We provide new algorithms for estimating the degree distribution of
%  a graph while ensuring node-level differential privacy. The main
%  technical tool we introduce is a Lipschitz extension $\hat
%  f_\thresh$ of the degree distribution, with the following
%  properties: given a graph $G$
%
\end{abstract}

\ifnum\full=1
\newpage
\thispagestyle{empty}
\tableofcontents
\fi

\newpage
\pagenumbering{arabic}
\section{Introduction}\label{sec:intro}
The area of {\em differential privacy} studies how to output global information contained in a database while protecting privacy of individuals whose information it contains. Typically, the datasets considered are {\em tabular databases}, containing one row of information per person. While the area came a long way in the last decade in terms of the richness of information that can be released with differential privacy for tabular databases, we are lagging behind in our understanding of {\em graph} datasets that also contain relationships between various participants. Such datasets are used, for example, to capture relationships between people in a social network, communication patterns, and romantic relationships.

There are two natural variants of differential privacy that are suited
for graph datasets: {\em edge} differential privacy and {\em node}
differential privacy. Intuitively, the former protects relationships
among individuals, while the latter protects each individual, together
with all his/her relationships.
Edge privacy is a weaker notion and has been studied more extensively, with
algorithms now known for the release of subgraph counts and related scalar-valued functions
\citep{NRS07,RHMS09,KRSY11,MirW12,LuM14exponential,KarwaSK14}, the
degree distribution
\citep{HayLMJ09,HayRMS10,KarwaS12,LinKifer13,KarwaS15}, cut densities
\citep{GRU12,BlockiBDS12jl} and the
parameters of generative graph models
\citep{MirW12,KarwaSK14,LuM14exponential,KarwaS15,XiaoCT14}.
% Node differential privacy is more in
% the spirit of protecting each individual's data, but is much harder to
% achieve because it guards against larger changes in the input.
% Node privacy, however, provides
% more meaningful semantics when nodes correspond to individuals: it
% ensures that \emph{no matter what an analyst observing the released
% information knows ahead of time}, she learns the
% same things about an individual Alice regardless of whether
% Alice's data are used or not (see \cite{KS08} for a formal
% statement). In particular, no assumptions are needed on the way the
% individuals' data are generated (they need not even be
% independent).\snote{Why is this in the discussion of node vs. edge differential privacy? Most of it applies to both.} Unfortunately, differential privacy's stringency
% makes the design of accurate, \emph{node}-private algorithms challenging.\snote{Why is this sentence here? What was the main idea in this paragraph? What is in the next one?}
%
Node differential privacy is a much stronger privacy guarantee,  but is much harder to
attain because it guards against larger changes in the input.
Until recently, there were no known
differentially private algorithms that gave accurate answers on
sparse graphs, even for extremely simple statistics. %The two  known
%techniques were proposed %in a set of independent works that came out
In 2013, \citet{BBDS13,KNRS13,ChenZ13} proposed two new techniques for node private algorithms:
\begin{inparaenum}\renewcommand{\labelenumi}{(\roman{enumi})}
\item using
  projections whose smooth sensitivity could be bounded (combined with
  mechanisms that add noise tailored to the smooth sensitivity~\citep{NRS07}), and
\item using {\em
    Lipschitz extensions} (combined with the standard Laplace mechanism).
\end{inparaenum}
The latter technique yielded much more accurate
algorithms than the former. In particular, it was used to obtain
accurate node differentially private
algorithms for computing subgraph counts and related statistics.

However, efficiently computable Lipschitz extensions were known only
for 1-dimensional functions (that is, functions that output a single
real value). In this paper, we study efficiently computable Lipschitz extensions
for multi-dimensional (that is, vector-valued) functions. We show that, unlike for 1-dimensional functions, Lipschitz extensions of higher-dimensional functions do not always exist, even with a non-unit stretch. We design Lipschitz extensions with small stretch for the sorted degree list and for the degree distribution of a graph. Our extensions can be computed in polynomial time.

We also develop new tools for employing Lipschitz extensions in the
design of differentially private algorithms.  Specifically, we
generalize the exponential mechanism of \citet{MT07}, a widely used
tool in data privacy.
% The exponential mechanism is given a collection
% of score functions that map datasets to real values. It attempts to
% return the name of the function with nearly minimum value on the data
% set.
Our generalized mechanism provides better accuracy
when the sensitivity of an optimal score function is much smaller than
the maximum sensitivity of score functions.

% We combine these tools\snote{Which tools? Tools for employing Lipschitz extensions? What are we combining them with?} to provide a differentially
% private algorithm for releasing an approximation to the degree distribution of a graph
% that is much more accurate than those from previous
% work~\cite{BBDS13,KNRS13}.

We use our Lipschitz extension and the generalized exponential mechanism to design a node differentially
private algorithm for releasing an approximation to the degree distribution of a graph. Our algorithm is much more accurate than those from previous
work~\cite{BBDS13,KNRS13}.

\paragraph{Lipschitz extensions.} Lipschitz extensions are basic
mathematical objects studied in functional analysis.

\begin{definition}[Lipschitz constant]\label{def:lip-const}
  Let $f: X\to Y$ be a function from a domain $X$ to a range $Y$ with
  associated distance measures $d_X$ and $d_Y$.  Function $f$ has
  Lipschitz constant $c$ (equivalently, is $c$-Lipschitz) if
  $d_Y(f(x),f(x'))\leq c\cdot d_X(x,x')$ for all $x,x'\in X$.
\end{definition}

\begin{definition}[Lipschitz extension]\label{def:lip-ext}
  Consider a domain $X$ and a range $Y$ with associated distance
  measures $d_X$ and $d_Y$, and let $X'\subset X$. Fix constants $c>0$
  and $s\geq 1$.  Given a $c$-Lipschitz function $f':X'\to Y$, a
  function $f: X\to Y$ is a {\em Lipschitz extension of $f'$ from $X'$
    to $X$ with stretch $s$} if
\begin{enumerate}
\item $f$ is an {\em extension of $f'$}, that is, $f(x)=f'(x)$ on all $x\in X'$ and
\item $f$ is $s\cdot c$-Lipschitz.
\end{enumerate}
If $s=1$, then we call $f$ a {\em Lipschitz extension of $f'$ from $X'$ to $X$} (omitting the stretch).
\end{definition}

Functional analysts have devoted considerable attention to
determining, for given metric spaces $X,X'$ and $Y$, whether
Lipschitz extensions with stretch 1 exist for all functions $f:X\to
Y$. In contrast to this paper, the focus is mostly on continuous
function spaces.

Lipschitz extensions of real-valued 1-dimensional functions with stretch 1 always exist~\cite{McShane34}. We show that it is not true, in general, for multi-dimensional functions on graphs, even with non-unit stretch. The technical core of this paper
is the construction of an efficiently computable extension of the
degree distribution, a
high-dimensional function on graphs, with small stretch.

\paragraph{Metrics on Graphs.}
Let $\G$ denote the set of all finite labeled, unweighted undirected graphs.
When the input data set is a graph in $\G$, there are two natural notions of
``neighbor'' (or adjacency).
Two graphs $G$ and $G'$ are {\em edge neighbors} if they differ in one
edge. Two graphs $G$ and $G'$ are {\em node neighbors} if one can be
obtained from the other by removing one node and its adjacent edges.
These two notions of neighbor induce two metrics on $\G$, node
distance ($\dwire$) and edge distance ($\dedge$).

\paragraph{Why are Lipschitz Extensions Useful for Privacy?}
A randomized algorithm $\A$ is
\emph{node differentially private} if, for any two datasets that are ``neighbors''
in an appropriate sense, the \emph{distributions} on the algorithms
outputs are close in a multiplicative sense.
Notions of stability and sensitivity play a key role in the design of
differentially private algorithms. Differential privacy itself can be
seen as a stability requirement, since the algorithm must map
neighboring graphs to nearby distributions on outputs.

The two basic building blocks for designing differentially private algorithms,
the Laplace and exponential mechanisms, rely on the \emph{global
  sensitivity} of a function $f$, which is the Lipschitz constant of
$f$ viewed as a map from data sets (e.g., $\G$ equipped with $\dwire$)
to $\ell_1^p$ (\ie $\re^p$ equipped with $\ell_1$).
The \emph{Laplace mechanism} \citep{DMNS06} shows that one can satisfy differential
privacy by releasing $f(G)$ with additive noise proportional to the
node global sensitivity in each coordinate.

The difficulty with employing the Laplace mechanism directly is that many useful functions on graphs are highly
sensitive to the insertion or removal of a well-connected vertex. For
example, the number of connected components of a graph may go from $n$
to 1 with the insertion of a single vertex. The degree distribution of
a graph can also change drastically, shifting up by 1 in every coordinate (as
one vertex can increase the degree of all other vertices). This
difficulty generally remains even if we shift from global sensitivity
to more local notions (as in \citep{NRS07}) (roughly, interesting
graphs such as those with low average degree are ``near'' other graphs
with a vastly different value for the function).

One can get around this by focusing on a
``nice'' or ``typical'' subset of the space $\G$ where the function $f$ has low
global sensitivity\citep{BBDS13,KNRS13,ChenZ13}. For example, let $\Gd$ be the set
\emph{$\thresh$-bounded} graphs, that is, graphs of maximum degree at most
$\thresh$. Many functions have bounded sensitivity (Lipschitz
constant) on $\Gd$. The
number of triangles in a graph, for instance, changes by at most
$\binom \thresh 2$ among node-neighboring graphs of degree at most
$\thresh$, and the degree list changes by at most $2\thresh$ in
$\ell_1$.

Given a function $f$ that has low Lipschitz constant on ``nice''
graphs, if we find an efficiently computable Lipschitz extension $\hat
f$ that is defined on all of $\G$, then we can use the Laplace
mechanism to release $\hat f(G)$ with relatively small additive
noise. The lower the stretch of the extension, the lower the overall
noise. The result will be accurate when the input indeed falls into,
or near, the class of ``nice'' graphs.  Interestingly, the class of
``nice'' graphs need not contain the input for the answer to be
accurate---in our main application, we use $\Gd$ as the set of
``nice'' graphs, but $\thresh$ is set much lower than the
actual maximum degree of the input.

% Instead; focus on a class of ``likely'' or ``nice'' inputs (this
% paper: low-degree graphs). Look at functions that are smooth on those
% graphs. Extend to $\hat f$ defined on all graphs. Add noise to $\hat
% f$. Stretch $c$ is the extra amount you pay in noise.

\paragraph{Existence and efficiency of Lipschitz extensions.}
Motivated by this methodology, we ask: when do Lipschitz extensions exist, and when
do they admit efficient algorithms? The existence question has drawn
interest from functional analysis and combinatorics for nearly a
century~\citep{McShane34,Kirszbraun1934,Wells75embeddingsbook,MarcusP84,JohnsonLS86,Matousek90,Ball92,BenyaminiL98book,LangPS00,Naor01phase,LeeN05,MakarychevM10}; see
\citet{LeeN05} for an overview.
For any real-valued function $f:\Gd\to\re$, there exists an
extension $\hat f:\G\to\re$ whose node sensitivity is the same as that
of $f$.
\citet{KNRS13,ChenZ13} constructed \emph{polynomial-time} computable Lipschitz extensions from $\Gd$ to
$\G$ of several real-valued functions on graphs. The techniques
in~\cite{BBDS13,KNRS13,ChenZ13} apply to functions that count structures in a
graph, possibly with weights (for example, the number of edges in a
graph, the number of triangles in a simple graph; in a graph where
vertices and edges have attributes, one could count edges that link
nodes labeled by different genders in a social network, or triangles
involving vertices labeled with different scientific fields in a
collaboration graph).

Prior work on constructions of higher-dimensional
extensions focused on extending functions on a metric space $X$, where $X$ is given
explicitly as
input (say, as a distance matrix)~\cite{LeeN05,MakarychevM10}. Such constructions can, at best,
run in time polynomial in the size of $X$. The size of
$\Gd$ is infinite, and even restricting to graphs on at most $n$
vertices leaves a set that is exponentially large in $n$. Moreover,
generic constructions have stretch at least polynomial in the log of
the metric's cardinality, at least $\sqrt{n}$ in our case.

\subsection{Our Contributions} In this paper, we demonstrate that efficient and nontrivial
constructions of Lipschitz extensions for high-dimensional graph
summaries are possible. We also develop new machinery for using these
extensions in the context of differentially private algorithms.

\paragraph{Lipschitz Extension of the Degree List%
\ifnum\full=1 \ (Section~\ref{sec:degree-list-extension})\fi.} Our main technical
contribution is a polynomial-time, constant-stretch Lipschitz
extension of the \emph{sorted degree list}, viewed as a function from $\Gd$
to $\ell_1^*$, to all of $\G$. Here $\ell_1^*$ denotes the $\ell_1$ metric on the space
of finite-length real sequences, where sequences of different length
are padded with 0's to compute distance.

Given an arbitrary graph $G$, our function $\hat f_\thresh(G)$
outputs a nonincreasing real sequence of length $|V_G|$. If the maximum degree of $G$ is $\thresh$ or
less, the output is the sorted list of degrees in $G$.
 The output can be thought of as a list of ``fractional degrees'',
where ``fractional edges'' are real weights in $[0,1]$ and the ``fractional
degree'' of a vertex is the sum of the weights of its adjacent edges.  The weights
are selected by minimizing a quadratic function over the polytope of
$s$-$t$ flows in a directed graph closely related to $G$. Previous
work \citep{KNRS13} had shown that the \emph{value} of the maximum
flow in the graph has low sensitivity; by introducing the quadratic
penalty, we give a way to select an optimal flow that changes slowly
as the graph itself changes.
Introducing a strongly convex penalty (or \emph{regularizer}) to make
the solution of an optimization problem stable to changes in the
loss function
is common in machine learning. In our setting, however, it is the \emph{constraints} of the
convex program that change with data, and not the loss function.

\begin{theorem}\label{thm:lip-ext-degree-list-intro}
There is a Lipschitz extension of $\deglist$, viewed as a function
taking values in $\ell_1^*$, from $\Gd$ to $\G$ with stretch 3/2 that can be computed in polynomial time.
\end{theorem}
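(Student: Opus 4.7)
The plan is to define $\dlext(G)$ via a lexicographically optimal flow in an auxiliary network. For $G = (V,E) \in \G$ consider the fractional $b$-matching polytope
$$P_\thresh(G) = \Bigl\{ x \in [0,1]^E : \sum_{e \ni v} x_e \leq \thresh \text{ for all } v \in V \Bigr\},$$
which can be expressed as the $s$-$t$ flow polytope of a bipartite network of size $O(|V|+|E|)$ derived from $G$. Let $x^*(G) \in P_\thresh(G)$ be the unique point that first maximizes the total value $\sum_e x_e$ and, among such maximizers, minimizes the strongly convex quadratic $\sum_e x_e^2$. Both steps are polynomial-time computable (max flow, followed by a QP on the optimal face). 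Define fractional degrees $d^*_v(G) = \sum_{e \ni v} x^*_e(G)$ and let $\dlext(G)$ be the sorted nonincreasing arrangement of $(d^*_v(G))_{v \in V}$.

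The extension property is immediate: for $G \in \Gd$, the all-ones vector $\mathbf{1}$ lies in $P_\thresh(G)$, uniquely maximizes $\sum_e x_e$ over $[0,1]^E$, and has fractional degrees equal to the actual degrees, so $\dlext(G) = \deglist(G)$.

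For the Lipschitz bound, consider node-neighbors $G$ and $G' = G - v$ with optima $x^* = x^*(G)$ and $y^* = x^*(G')$. Write $E_v$ for the edges of $G$ incident to $v$ and $E' = E \setminus E_v$. The restriction $\tilde x = x^*|_{E'}$ lies in $P_\thresh(G')$ and the zero-padding $\tilde y = y^* \oplus 0_{E_v}$ lies in $P_\thresh(G)$. The node-sensitivity result of \citet{KNRS13} on max flow yields $\bigl|\,|x^*| - |y^*|\,\bigr| \leq \thresh$, so each of $\tilde x, \tilde y$ is within $\thresh$ of optimal in its own polytope. Combined with strong convexity of the quadratic tiebreaker, this should give $\|x^*|_{E'} - y^*\|_1 \leq \thresh/2$. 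A per-vertex bookkeeping inequality then yields
$$\|\vec d^{\,*}(G) - \vec d^{\,*}(G')\|_1 \leq 2 d^*_v(G) + 2\|x^*|_{E'} - y^*\|_1 \leq 2\thresh + \thresh = 3\thresh,$$
where the first term captures the degree change at $v$ and at $v$'s former neighbors (bounded by $d^*_v(G) \leq \thresh$ on each side), and the factor of $2$ in the second term arises because each edge in $E'$ is incident to two vertices of $V\setminus\{v\}$. Since sorting is $1$-Lipschitz in $\ell_1^*$ (by the rearrangement inequality), this descends to $\|\dlext(G) - \dlext(G')\|_{\ell_1^*} \leq 3\thresh = \tfrac{3}{2} \cdot 2\thresh$, i.e.\ stretch $3/2$ over the $2\thresh$-Lipschitz constant of $\deglist$ on $\Gd$.

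The main obstacle will be the flow-stability step $\|x^*|_{E'} - y^*\|_1 \leq \thresh/2$. Strong convexity of the quadratic tiebreaker is essential: without it, value-optimal flows can differ arbitrarily along large cycles, destroying $\ell_1$ stability. The proof should use KKT or complementary-slackness conditions for the tie-breaking QP to decompose $x^*|_{E'} - y^*$ into flow-augmenting and flow-canceling paths and cycles in the auxiliary network, and to show that any rearrangement of mass beyond the $\thresh$ units forced by removing $v$ would contradict the joint optimality of $x^*$ and $y^*$ in their respective polytopes.
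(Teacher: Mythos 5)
Your high-level plan — express the degree list via a flow/$b$-matching polytope, break ties with a strongly convex quadratic, and argue the resulting optimum moves slowly under node removal — is indeed the right shape and closely matches the paper's construction. But there is a genuine gap at exactly the step you flag as ``the main obstacle,'' and the route you sketch for closing it does not work.

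First, the stability claim $\|x^*|_{E'} - y^*\|_1 \leq \thresh/2$ is stated with no derivation, and the suggested mechanism (``each is within $\thresh$ of optimal in its own polytope, combined with strong convexity'') is not a valid argument. Strong convexity of $\sum_e x_e^2$ only yields an \emph{$\ell_2$} bound on the distance to the unique minimizer \emph{within a fixed feasible set}, and only if you control the tiebreaker objective gap — which you do not. You only control the gap in \emph{flow value}, a linear functional unrelated to the quadratic tiebreaker. Moreover, $x^*|_{E'}$ is not even on the max-value face of $P_\thresh(G')$, so $y^*$ and $x^*|_{E'}$ are not two points of the same constrained optimization problem and cannot be compared by strong convexity at all. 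Even if you patched that, $\ell_2$ control does not give $\ell_1$ control without a dimension-dependent factor $\sqrt{|E|}$, which would destroy the stretch bound. So the central inequality of the proposal is neither proved nor plausibly provable by the stated method, and without it the whole $3\thresh$ bound collapses.

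Second, the choice of tiebreaker matters more than the sketch acknowledges. You minimize $\sum_e x_e^2$, the $\ell_2$ norm of the \emph{edge} flows, whereas the paper's regularizer $\Phi(f) = \|\inoutfl{f} - \Dvec\|_2^2$ penalizes the \emph{source/sink} (degree) flows. The paper shows (their Lemma~\ref{lem:optimality}) that $\Phi$-optimality already forces maximum flow, so no lexicographic two-stage optimization is needed; and the uniqueness only concerns the degree vector, which is what they actually release. Crucially, the quadratic being on the degree vector is what makes their key lemma go through: they decompose the flow difference $\Delta = f_2 - f_1$ into three subflows $\Delta^s$, $\Delta^t$, $\Delta^0$ according to whether a path or cycle uses $(s,\vnew_\ell)$ or $(\vnew_r,t)$, bound the first two combinatorially by $2\thresh$ and $\thresh$ respectively, and then show $\|\outfl{\Delta^0}\|_1 = 0$ by a first-order-optimality (obtuse-angle) argument with the $\Phi$ objective. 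That is, the ``extra'' rerouted flow is shown to be exactly zero, not merely small, by exploiting that $\Delta^0$ is a feasible perturbation direction for both $f_1$ and $f_2$ that would improve $\Phi$ in at least one of the two problems if it had any mass on source edges. Your proposal has no analogue of this step, and it is not clear the edge-flow tiebreaker admits one, since moving flow around a cycle in $E'$ can change $\sum_e x_e^2$ without changing any degree or the flow value, so the ``face-then-project'' optimum of the edge objective is free to jump when the feasible face shifts. To salvage the approach you would essentially have to replicate the paper's decomposition argument, at which point the tiebreaker should be placed on degrees, not edges.

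Your bookkeeping inequality $\|\vec d^{\,*}(G) - \vec d^{\,*}(G')\|_1 \leq 2 d^*_v(G) + 2\|x^*|_{E'} - y^*\|_1$ is correct, and the observation that sorting is $1$-Lipschitz is fine (the paper cites \citet{HayLMJ09} for this). The extension property on $\Gd$ is also correct. The gap is entirely in the stability bound, which is the heart of the theorem.
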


The sorted degree list has $\ell_1$ sensitivity $\thresh$ on
$\Gd$. The extension $\hat f_\thresh(G)$ has $\ell_1$ sensitivity at
most $3\thresh$ (the stretch is thus at most $3/2$). Previous results
on Lipschitz extensions only imply the existence of an extension with
stretch at least $n$; see
\ifnum\full=1 Section~\ref{sec:general-lip-ext}
\else the full version
\fi for
discussion of the general results.

We use our Lipschitz extension of the sorted degree list to get a Lipschitz extension of the
degree distribution (a list of counts of nodes of each
degree) and the degree CDF (a list of counts of nodes of at
least each given degree). These functions condense the information to a
$\thresh$-dimensional vector (regardless of the size of the graph),
making it easier to release with node-differential privacy.

\paragraph{Generalized Exponential Mechanism for Scores of Varying
  Sensitivity%
\ifnum\full=1\  (Section~\ref{sec:newexpmech})\fi.}
One of the difficulties that arises in using Lipschitz extensions for
differentially private algorithms is selecting a good class of inputs
from which to extend. For example, to apply our degree distribution
extension, we must select the degree bound $\thresh$. More generally, we
are given a collection of possible extensions $\hat f_1,...,\hat f_k$,
each of which agrees with $f$ on a different set and %, and each of which
has different sensitivity $\Delta_i$.

For a large class of extensions, we can abstract the task we are faced
with as a private optimization problem: given a set of
\emph{real-valued} functions $q_1,...,q_k$, the goal is to output the
index $\hat \i$ of a function with approximately minimal value on the
data set $x$ (so that $q_{\hat\i}(x) \approx \min_i q_i(x)$). (In our
setting, the $q_i$ functions are related to the error of the
approximation $\hat f_i$ on $x$). Suppose that each $q_i$ has a known
(upper bound on) global sensitivity $\Delta_i$. The \emph{error} of an output
$\hat \i$ on input $x$ is the difference $q_{\hat\i}(x) - \min_i q_i(x)$.

The exponential mechanism of \citet{MT07}, a widely used tool in
differentially private algorithms, achieves error that scales with the
\emph{largest} of the sensitivities. Specifically, for every
$\beta>0$, with probability $1-\beta$, the output $\hat \i$ satisfies
$q_{\hat\i}(x)\leq \min_i q_i(x) + \Delta_{max}\cdot
\frac{2\log(k/\beta)}{\eps}$ where $\Delta_{max} = \max_i \Delta_i$.

In contrast, we give an algorithm whose accuracy scales with the
sensitivity of the \emph{optimal} score function $\Delta_{i^*}$ where
$i^*=\argmin_i q_i(x)$. Our mechanism requires as input an upper bound
$\beta>0$ on the desired probability of a ``bad'' outcome; the
algorithm's error guarantee depends on this $\beta$.

\begin{theorem}[Informal]\label{thm:genEM}
  For all settings of the input parameters $\beta\in (0,1)$, $\eps>0$, the
  Generalized Exponential Mechanism is
  $\eps$-differentially private. For all inputs $x$, the output $\hat \i$ satisfies
  \begin{displaymath}
  q_{\hat \i}(x) \leq \min_i \paren{ q_i(x) +  \Delta_i \cdot
  \tfrac{4\log(k/\beta)}{\epsilon} }\, .
\end{displaymath}
\end{theorem}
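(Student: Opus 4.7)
The plan is to design a distribution over indices $i \in [k]$ that can be analyzed as a standard exponential mechanism applied to a suitably constructed surrogate score $s_i(x)$ of unit global sensitivity. Concretely, I would sample $\hat\i$ with probability proportional to $\exp(-\epsilon s_i(x)/2)$, where $s_i$ is designed so that the privacy follows immediately from the classical analysis of the exponential mechanism, while the utility statement can be converted back to the original $q_i$'s and the weights $\Delta_i$.

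The key design choice is to pick $s_i$ such that (i) it has sensitivity $1$ \emph{uniformly} over $i$, despite the underlying $q_i$'s having highly non-uniform sensitivities $\Delta_i$; (ii) the penalized optimum $i^\dagger = \argmin_i\bigl(q_i(x) + \Delta_i\cdot \tfrac{4\log(k/\beta)}{\epsilon}\bigr)$ satisfies $s_{i^\dagger}(x)\le 0$ (or at least a small universal constant); and (iii) whenever $q_i(x)$ substantially exceeds $\min_j(q_j(x) + \Delta_j\cdot \tfrac{4\log(k/\beta)}{\epsilon})$, the score $s_i(x)$ is correspondingly large. A natural candidate that normalizes away the sensitivity gap is something like
\[
  s_i(x) \;=\; \max_{j\in[k]} \frac{q_i(x) - q_j(x) - \Delta_j\cdot T_0}{\Delta_i+\Delta_j}\,,
\]
for a parameter $T_0$ chosen as a constant multiple of $\log(k/\beta)/\epsilon$. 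One checks that $|s_i(x)-s_i(x')|\le 1$ for neighboring $x,x'$, because the numerator changes by at most $\Delta_i+\Delta_j$ for every fixed $j$; and at the penalized optimum $i^\dagger$, the defining inequality $q_{i^\dagger}(x)+\Delta_{i^\dagger}T\le q_j(x)+\Delta_j T$ forces each term inside the $\max$ to be non-positive when $T_0$ is tuned appropriately.

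With this setup, privacy is immediate: $s_i$ is a family of $1$-sensitive scores, so the classical exponential-mechanism argument gives $\epsilon$-differential privacy. For utility, I would invoke the standard EM utility bound to obtain $s_{\hat\i}(x) \le s_{i^\star}(x) + \tfrac{2\log(k/\beta)}{\epsilon}$ for the minimizer $i^\star$ of $s$, which by property (ii) is at most a controlled constant. Plugging the resulting upper bound on $s_{\hat\i}(x)$ back into its definition and taking $j$ to be the minimizer of the penalized objective then translates into the claimed inequality $q_{\hat\i}(x)\le \min_i\bigl(q_i(x) + \Delta_i\cdot \tfrac{4\log(k/\beta)}{\epsilon}\bigr)$.

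The main obstacle I anticipate is calibrating the normalization and the additive offset $T_0$ in $s_i$ so that all three properties hold simultaneously with matching constants, i.e., so that the sensitivity is exactly $1$, the surrogate at $i^\dagger$ is at most $0$, and the conversion from a bound on $s_{\hat\i}$ back to $q_{\hat\i}$ produces exactly the factor $4\log(k/\beta)/\epsilon$ rather than a looser $\Delta_{\hat\i}+\Delta_{i^\dagger}$ mixture. Getting this tight may require more carefully choosing the normalization in the denominator (for instance, only maximizing over $j$ with $\Delta_j \le \Delta_i$, or using $\max(\Delta_i,\Delta_j)$ instead of the sum), and verifying in each case that the three properties continue to hold with the desired constants.
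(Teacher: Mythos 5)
You have the right architecture: construct unit-sensitivity surrogate scores $s_i$, run the classical exponential mechanism on them, and translate the EM utility bound back to the original $q_i$'s. And you correctly anticipate exactly where the argument can break. But the formula you propose,
\[
  s_i(x) \;=\; \max_{j\in[k]} \frac{q_i(x) - q_j(x) - \Delta_j T_0}{\Delta_i+\Delta_j}\,,
\]
does not actually close the gap, and neither would the denominator modifications you float. Tracing through the conversion step with your $s_i$: EM gives $s_{\hat\i}(x)\le \tfrac{2\ln(k/\beta)}{\eps}$ (using $s_{i^\dagger}(x)\le 0$), and for any $j$, multiplying the definition of $s_{\hat\i}$ through by $\Delta_{\hat\i}+\Delta_j$ yields
\[
  q_{\hat\i}(x)\;\le\; q_j(x)+\Delta_j T_0 + \tfrac{2\ln(k/\beta)}{\eps}\bigl(\Delta_{\hat\i}+\Delta_j\bigr)\,.
\]
The stray $\tfrac{2\ln(k/\beta)}{\eps}\Delta_{\hat\i}$ term has no counterweight: $\Delta_{\hat\i}$ can be arbitrarily large, so the bound is vacuous. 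Replacing $\Delta_i+\Delta_j$ by $\max(\Delta_i,\Delta_j)$ or restricting the max to $\Delta_j\le\Delta_i$ doesn't remove this term either; it just reshuffles which cross-term survives or breaks the $s_{i^\dagger}\le 0$ property.

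The paper's fix is in the \emph{numerator}, not the denominator: it penalizes $i$ symmetrically, defining
\[
  s(i;x)\;=\;\max_{j\in[k]} \frac{\bigl(q_i(x)+t\Delta_i\bigr)-\bigl(q_j(x)+t\Delta_j\bigr)}{\Delta_i+\Delta_j}
  \quad\text{with } t=\tfrac{2\ln(k/\beta)}{\eps}\,.
\]
The $+t\Delta_i$ term preserves sensitivity $1$ (it is data-independent), still gives $s(\tilde\i;x)=0$ at the penalized minimizer $\tilde\i$, and—crucially—in the conversion step produces
\[
  q_{\hat\i}(x)\;\le\; q_j(x)+\Delta_j\bigl(t+s(\hat\i;x)\bigr)+\Delta_{\hat\i}\bigl(s(\hat\i;x)-t\bigr)\,,
\]
so the $\Delta_{\hat\i}$ coefficient vanishes exactly when $s(\hat\i;x)\le t$, which is precisely what the EM guarantee provides. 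That cancellation is the single missing idea in your proposal; once you add the symmetric $+t\Delta_i$ penalty, the rest of your outline goes through verbatim and gives the stated $4\log(k/\beta)/\epsilon$ factor.
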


This guarantee can be much tighter than that
of the usual exponential mechanism. For instance, in our setting, the
$\Delta_i$'s grow exponentially with $i$ yet on sparse graphs, the
best choice of $\Delta_i$ is for $i$ relatively small. (Also, the
issue is not merely with the error guarantee. The exponential
mechanism provides bad outputs for many inputs where the true
minimizer has low sensitivity.)

We can use our algorithm for selecting the sensitivity parameter for the
Lipschitz extensions of graph functions in
\citep{BBDS13,KNRS13,ChenZ13} and in this work. (These parameters are
sometimes interpretable as a degree bound, as in the case of the
degree distribution, but not always; for example, when
computing the number of triangles, the parameter is a bound on the
number of triangles involving any one vertex). This allows the
algorithm to adapt to the specific input. The guarantee we get is
that the error of the overall algorithm (that is approximating some
function of an $n$-node graph) is at most $O(\log \log n)$ times higher than one
would get with the best Lipschitz constant.
In contrast, the
parameter selection method of \citet{ChenZ13} provides only a $O(\log
n)$ guarantee on the error blow-up, and is specific to the extensions
they construct.

% \paragraph{Bits and pieces on generalized exponential
%   mechanism.}Specifically, Lipschitz extensions provide approximations
% to the desired statistic. However, there is a challenge in picking,
% via a differentially private algorithm, the Lipschitz extension that
% will provide the most accurate result. While \cite{BBDS13} did not
% provide a theoretical tool for finding the right Lipschitz extension,
% \cite{KNRS13,ChenZ13} give a way to select it that results in loss in
% accuracy (compared to guessing the best extension) of a factor $O(\log
% n)$, where $n$ is the number of considered extensions (and also the
% number of nodes in the graph in the corresponding application to
% privacy).

\paragraph{Differentially Private Algorithms for Releasing the Degree
  Distribution\ifnum\full=1\  (Section~\ref{sec:decayerror})\fi.}

We can combine the Lipschitz extension of the degree list and the parameter selection
algorithm to get a differentially private mechanism for releasing the
degree distribution of a graph that automatically adapts to the
structure of the graph.

We show that our algorithm provides an accurate estimate on a large
class of graphs, including graphs with low average degree whose degree distribution is
heavy-tailed. We measure accuracy in the $\ell_1$ norm, normalized by
the number of nodes in the graph --- i.e., we
deem the algorithm accurate if the total variation distance between
the true degree distribution and the estimate is small.

This measure goes to 0 for graphs of low average degree in which the
tail of the degree distribution decreases slightly more quickly than
what trivially holds for all graphs.\snote{I am not sure this paragraph is clear, also need a ref?} If $\Av$ is the average degree
in a graph, Markov's inequality implies that the fraction of nodes
with degree above $t\cdot \Av$ is at most $1/t$. We assume that this
fraction goes down as $1/t^\alpha$ for a constant $\alpha>1$. The
condition is called \emph{$\alpha$-decay}. Our
algorithm need not be given $\alpha$ or the average degree of the
graph; these are implicitly taken into account by parameter
selection. Our assumption is satisfied by
all the well-studied social network models we know of, including
so-called~\emph{scale-free} graphs~\cite{CSN09}.

\ifnum\full = 0

\subsection{This Extended Abstract}

This short version of the document includes proofs of our two
main technical results: the Lipschitz extension of the degree list, in Section~\ref{sec:degree-list-extension}, and
the generalized exponential mechanism, in
Section~\ref{sec:newexpmech}.

The full version contains more background, as well as our
lower bounds on the stretch of extensions from $\Gd$ to $\G$ and
details of how to use the combine the main technical results to design
differentially private algorithms.

\fi

\ifnum\full=1 %%%MATCHES DEFINITIONS

\section{Definitions}\label{sec:definitions}
\mypar{Notation.}  We use $[n]$ to denote the set $\{1,\dots,n\}$. For a graph, $(V,E)$, $\Av(G) = 2|E|/|V|$ is the average degree of the graph $G$ and $\deg_v(G)$ denotes the degree of node $v \in V$ in $G$. When the graph referenced is clear, we drop $G$ in the notation. The asymptotic notation $O_n(\cdot), o_n(\cdot)$ is defined with respect to growing $n$. Other parameters are assumed to be functions independent of $n$ unless specified otherwise.

%Let $\G$ denote the set of unweighted, undirected finite \emph{labeled} graphs, and let $\G_n$ denote the set of graphs on at most $n$ nodes and $\G_{n,\thresh}$ be the set of all graphs in $\G_n$ with maximum degree $\thresh$.

\subsection{Graphs Metrics and Differential Privacy}
\label{sec:graphsDP}

\begin{defn}[$(\eps, \delta)$-edge/node-privacy]\label{def:dif-privacy}
A randomized algorithm $\A$ is \emph{$(\eps, \delta)$-edge-private} (respectively, \emph{node-private}) if for all events $S$ in the output space of $\A$, and edge (respectively, node) neighbors $G_1,G_2$,
$$\Pr[\A(G_1) \in S] \leq \exp(\eps)\times \Pr [\A(G_2) \in S]+\delta\,.$$
When $\delta=0$, the algorithm is $\eps$-edge-private (respectively, $\eps$-node-private). In this paper, if node or edge privacy is not specified, we mean node privacy by default.
\end{defn}

For simplicity of presentation, we assume that $n=|V|$, the number of nodes of the input graph $G$, is publicly known. This assumption is justified since, as we will see, one can get an accurate estimate of $|V|$ by running a node-private algorithm.

Both variants of differential privacy ``compose'' well, in the sense that privacy is preserved (albeit with slowly degrading parameters) even when the adversary gets to see the outcome of multiple differentially private algorithms run on the same data set.

\begin{lemma}[Composition, post-processing~\cite{MM09,DL09}]\label{lem:composition}
If an algorithm $\cala$ runs $t$ randomized algorithms $\cala_{1},\dots,\cala_{t}$, each of which is $(\eps,\delta)$-differentially private, and applies a randomized algorithm $g$ to the outputs, i.e., $\cala(G) = g(\cala_{1}(G),\dots,\allowbreak\cala_{t}(G)),$  then $\cala$ is $(t\eps,t\delta)$-differentially private. \drop{This holds even if for each $i>1$, $\cala_i$ is selected adaptively based on  $\cala_1(G),\ldots,\cala_{i-1}(G)$.}
\end{lemma}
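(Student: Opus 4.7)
The plan is to split the proof into two standalone claims. A \emph{parallel composition} claim: if $\cala_1, \ldots, \cala_t$ are each $(\eps, \delta)$-differentially private and use independent random coins, then $G \mapsto (\cala_1(G), \ldots, \cala_t(G))$ is $(t\eps, t\delta)$-differentially private. A \emph{post-processing} claim: applying any randomized $g$ to the output of an $(\eps', \delta')$-DP algorithm yields an $(\eps', \delta')$-DP algorithm. Instantiating post-processing with $\eps' = t\eps$ and $\delta' = t\delta$ on the tuple output of parallel composition gives the lemma.

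For parallel composition I would induct on $t$, with base case $t = 2$ doing the real work. In the pure-DP case ($\delta = 0$), independence of coins factors the joint density and two pointwise $e^\eps$ bounds immediately multiply to $e^{2\eps}$, completing the argument. For $\delta > 0$, naive chaining (condition on $y_1$, apply DP for $\cala_2$; then reverse for $\cala_1$) gives only a $(2\eps, (e^\eps + 1)\delta)$ bound, which is slightly weaker than claimed. To achieve the tight $(2\eps, 2\delta)$ bound, I would use the standard decomposition characterization of approximate DP: for each $\cala_i$ and each neighbor pair $(G_1, G_2)$, one can write the distribution of $\cala_i(G_1)$ as a sum of a ``good'' part pointwise dominated by $e^\eps$ times $\cala_i(G_2)$ and a ``bad'' remainder of total mass at most $\delta$. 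Applying this to $\cala_1$ and $\cala_2$, union-bounding the two bad parts under the joint $G_1$-distribution (total mass $\leq 2\delta$), and multiplying the pointwise $e^\eps$ bounds on the joint good part yields $(2\eps, 2\delta)$-DP. Induction on $t$ extends the conclusion.

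For post-processing, I would condition on the random coins $r$ of $g$, so that $g_r$ is deterministic and the event $\{g(\tilde\cala(G)) \in S\}$ becomes $\{\tilde\cala(G) \in g_r^{-1}(S)\}$; applying the $(\eps', \delta')$-DP hypothesis to $\tilde\cala$ slice-wise in $r$ and integrating over $r$ gives the claim. The main obstacle in the overall argument is the decomposition characterization of approximate DP: it is standard in the differential privacy literature but requires a small measure-theoretic justification in general output spaces, and it is precisely where the entire gap between the naive and tight constants in the $\delta$ parameter hides.
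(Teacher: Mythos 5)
The paper does not prove Lemma~\ref{lem:composition}; it cites it as a known result of \citet{MM09,DL09}, so there is no in-paper argument to compare against. Your blind reconstruction is correct and is the standard route: split into a composition step and a post-processing step, and for the composition step with $\delta>0$, use the decomposition characterization of approximate DP (write each $\cala_i(G_1)$ as a sub-probability ``good'' part pointwise dominated by $e^\eps$ times $\cala_i(G_2)$ plus a residual of mass at most $\delta$, then multiply the good parts and union-bound the residuals to get $(t\eps,t\delta)$). You are right that the naive chaining only gives $(t\eps, \frac{e^{t\eps}-1}{e^\eps-1}\delta)$, and right that the decomposition lemma is exactly where the constant is tightened; its proof is short---with a common dominating measure $\mu$ and densities $p,q$, take the good part to be $\min(p,e^\eps q)\,d\mu$ and note the residual has mass $\int \max(0,p-e^\eps q)\,d\mu = P(S)-e^\eps Q(S)\le\delta$ for $S=\{p>e^\eps q\}$. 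The post-processing argument (condition on $g$'s coins, pull back events through $g_r^{-1}$, integrate) is also the standard one. One terminological slip: what you call ``parallel composition'' is usually just called (basic or sequential) composition; in the differential-privacy literature, ``parallel composition'' conventionally refers to mechanisms run on \emph{disjoint} subsets of the data, where the privacy parameters do not accumulate.
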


\subsection{Basic Tools}\label{sec:sens}

\mypar{Global Sensitivity and the Laplace Mechanism.} In the most basic framework for achieving differential privacy, Laplace noise is scaled according to the {\em global sensitivity} of the desired statistic $f$. This technique extends directly to graphs as long as we measure sensitivity with respect to the metric used in the definition of the corresponding variant of differential privacy. Below, we explain this (standard) framework in terms of node privacy. Let $\G$ denote the set of all graphs.

\begin{definition}[Global Sensitivity~\cite{DMNS06}]\label{def:global-sensitivity}
The $\ell_1$-global node sensitivity of a function $f: \G \rightarrow \R^p$ is:
$$\displaystyle \Delta f = \max_{G_1,G_2\, \text{node neighbors}}\|f(G_1)-f(G_2)\|_1\,.$$
Equivalently, $\Delta f$ is the Lipschitz
constant of a function viewed as a map from $(\G,\dwire$) to
$\ell_1^p$.
\end{definition}

For example, the number of edges in an $n$-node graph has node
sensitivity $n$, since adding or deleting a node and its adjacent
edges can add or remove at most $n$ edges. In contrast, the number of
nodes in a graph has node sensitivity 1.

A {\em Laplace} random variable with mean $0$ and standard deviation $\sqrt{2}\lambda$ has density  $h(z)=(1/(2\lambda))e^{-|z|/\lambda}$. We denote it by
$\Lap(\lambda)$.

\begin{theorem} [Laplace Mechanism~\cite{DMNS06}]\label{thm:DMNS}
The algorithm $\cala(G)=f(G)+ \Lap(\Delta f/\eps)^p$ (which adds i.i.d.\ noise $\Lap(\Delta f/\eps)$ to each entry of $f(G)$) is $\eps$-node-private.
\end{theorem}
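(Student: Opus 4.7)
The plan is to verify the definition of $\eps$-differential privacy directly by comparing the output densities on two node-neighboring graphs $G_1, G_2 \in \G$. Since the noise is added coordinatewise and independently, the output $\A(G)$ has a product density on $\R^p$, namely $h_G(z) = \prod_{j=1}^p \frac{\eps}{2\Delta f} \exp\!\paren{-\frac{\eps}{\Delta f}\, |z_j - f(G)_j|}$. Because this density is strictly positive everywhere, it suffices to bound the pointwise ratio $h_{G_1}(z)/h_{G_2}(z)$ uniformly in $z$; integrating that pointwise bound over any measurable set $S$ then yields $\Pr[\A(G_1)\in S] \leq e^{\eps}\Pr[\A(G_2)\in S]$.

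First I would write the log-ratio as
\[
\log\frac{h_{G_1}(z)}{h_{G_2}(z)} \;=\; \frac{\eps}{\Delta f} \sum_{j=1}^{p}\paren{|z_j - f(G_2)_j| - |z_j - f(G_1)_j|}.
\]
Next, I would apply the reverse triangle inequality coordinatewise, $|z_j - f(G_2)_j| - |z_j - f(G_1)_j| \leq |f(G_1)_j - f(G_2)_j|$, so that the sum is at most $\|f(G_1) - f(G_2)\|_1$. By \defref{global-sensitivity} and the assumption that $G_1, G_2$ are node neighbors, this is at most $\Delta f$, giving $\log(h_{G_1}(z)/h_{G_2}(z)) \leq \eps$ for every $z\in\R^p$.

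Finally I would integrate the inequality $h_{G_1}(z) \leq e^{\eps} h_{G_2}(z)$ over an arbitrary event $S$ in the output space to conclude $\Pr[\A(G_1)\in S] \leq e^{\eps}\Pr[\A(G_2)\in S]$, which is exactly $\eps$-node-privacy (with $\delta=0$) per \defref{dif-privacy}. There is no real obstacle here; the only care needed is to be explicit that independence of the per-coordinate Laplace noises turns the multidimensional density into a product, so that a single application of the reverse triangle inequality plus the $\ell_1$ bound on $\|f(G_1)-f(G_2)\|_1$ handles all coordinates simultaneously.
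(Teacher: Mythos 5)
The paper does not supply a proof of this theorem — it is stated as a citation to \citet{DMNS06} — so there is nothing internal to compare against, but your argument is precisely the standard proof of the Laplace mechanism (pointwise density-ratio bound via the reverse triangle inequality, then the $\ell_1$ global-sensitivity bound, then integration over the event $S$), and it is correct.
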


Thus, we can release the number of nodes $|V|$ in a graph with noise of expected magnitude $1/\eps$ while satisfying node differential privacy. Given a public bound $n$ on the number of nodes, we can release the number of edges $|E|$ with additive noise of expected magnitude $n/\eps$.

% \mypar{Local Sensitivity.} The magnitude of noise added by the Laplace mechanism depends on $\Delta f$ and the privacy parameter $\eps$, but not on the database $G$. For many functions, this approach results in high noise, not reflecting the function's typical insensitivity to individual inputs. Nissim~\etal~\cite{NRS07} proposed a local measure of sensitivity, defined next.

% \begin{definition}[Local Sensitivity~\cite{NRS07}]\label{def:local-sens}
% For a function $f:\G \to\mathbb{R}^p$ and a graph $G\in \G$, the local sensitivity of $f$ at $G$ is
% $
% \displaystyle
% \lsf{G}{f} = \max_{G_2:\text{node neighbor of $G_1$}} \|f(G_1)-f(G_2)\|_1.$
% \end{definition}
% \noindent Note that, by Definitions~\ref{def:global-sensitivity} and~\ref{def:local-sens}, the global sensitivity $\Delta f=\max_{G\in\G} \lsf{G}{f} $. One may think of the local sensitivity as a discrete analogue of the magnitude of the gradient of~$f$.

% A straightforward argument shows that every differentially private
% algorithm must add distortion at least as large as the local
% sensitivity on many inputs. However, finding algorithms whose error
% matches the local sensitivity is \emph{not} straightforward:  an
% algorithm that releases $f$ with noise magnitude proportional to
% $\lsf{G}{f} $ on input $G$ is not, in general, differentially
% private~\cite{NRS07}, since the noise magnitude itself can leak
% information.

\mypar{Exponential Mechanism.}

Suppose data sets are members of a universe $U$ equipped with a neighbor
relation (for example, $U=\G$ with
vertex neighbors).
Suppose we are given a collection of functions $q_1,...,q_k$, from $U$
to $\R$ such that for
each $i\in [k]$, the function $q_i(\cdot)$ has sensitivity at most $\Delta$. The
exponential mechanism  (\citet{MT07}) takes a data set and aims
to output an index $\hat\i$ for which $q_{\hat \i}(G)$ has nearly minimal value
at $G$, that is, such that $q_{\hat \i}(G) \approx \min_i q_i(G)$. The
algorithm $\A$ samples an index $i$ such that $\Pr(\A(G)=i)
\propto \exp\paren{\frac{\eps}{2\Delta}q_i(G)}\,.$

\begin{lemma}[Exponential Mechanism \citep{MT07}]
\label{lem:exp-mech}
The algorithm $\A$
  is $\eps$-differentially private. Moreover, with probability at
  least $1-\eta$, its output $\hat\i$ satisfies $q_{\hat \i}( G) \leq
  \min_i\paren{q_i(G)} + \frac{2 \Delta \ln(k/\eta)}{\eps}\,. $
\end{lemma}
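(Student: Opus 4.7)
The plan is to prove the two claims separately, following the standard template of the exponential mechanism but adapted to the minimization formulation used here. For minimization, the correct sampling density should be $\Pr(\A(G)=i)\propto\exp(-\frac{\eps}{2\Delta}q_i(G))$ (the statement as written appears to have a sign slip), so I will carry out the analysis with this density; the algebra is otherwise identical.

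\textbf{Privacy.} For any two neighboring datasets $G,G'$ and any index $i\in[k]$, I would write
\[
\frac{\Pr[\A(G)=i]}{\Pr[\A(G')=i]}
= \exp\!\paren{\tfrac{\eps}{2\Delta}\bigl(q_i(G')-q_i(G)\bigr)}
\cdot \frac{\sum_{j}\exp\!\paren{-\tfrac{\eps}{2\Delta}q_j(G')}}{\sum_{j}\exp\!\paren{-\tfrac{\eps}{2\Delta}q_j(G)}}.
\]
The first factor is at most $\exp(\eps/2)$ because $|q_i(G)-q_i(G')|\leq\Delta$ by the sensitivity hypothesis. For the second factor, the same sensitivity bound applied termwise shows $\exp(-\frac{\eps}{2\Delta}q_j(G'))\leq\exp(\eps/2)\cdot\exp(-\frac{\eps}{2\Delta}q_j(G))$, so the ratio of sums is also at most $\exp(\eps/2)$. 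Multiplying gives $\exp(\eps)$ as required, establishing $\eps$-differential privacy.

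\textbf{Utility.} Let $q^\ast=\min_i q_i(G)$ and set the threshold $t=\frac{2\Delta\ln(k/\eta)}{\eps}$. Define the ``bad'' set $S_{\mathrm{bad}}=\{i:q_i(G)>q^\ast+t\}$; I want to bound $\Pr[\hat\i\in S_{\mathrm{bad}}]$. The numerator of this probability is
\[
\sum_{i\in S_{\mathrm{bad}}}\exp\!\paren{-\tfrac{\eps}{2\Delta}q_i(G)}\leq k\cdot\exp\!\paren{-\tfrac{\eps}{2\Delta}(q^\ast+t)},
\]
while the denominator is at least the single term achieving the minimum, namely $\exp(-\frac{\eps}{2\Delta}q^\ast)$. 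Taking the ratio and substituting $t$ collapses this to $k\cdot\exp(-\ln(k/\eta))=\eta$, which is exactly the desired failure probability.

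\textbf{Obstacle.} There is essentially no technical obstacle: the privacy argument is a direct two-sided sensitivity calculation, and the utility argument is a clean union bound using the sharp concentration provided by the exponential weighting. The only small subtlety to be careful about is the sign convention in the sampling density (since the statement concerns minimization rather than maximization); once that is fixed, the calculation above goes through verbatim.
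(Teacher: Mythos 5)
The paper does not actually give a proof of this lemma; it is stated as a known result cited from McSherry--Talwar, with the comment that it admits an efficient implementation via adding noise to each score and reporting the optimizer. Your proof is the standard argument for the exponential mechanism (a direct ratio calculation for privacy, a union bound against a ``bad set'' for utility), and it is correct. You are also right that the paper's displayed sampling density $\Pr(\A(G)=i)\propto\exp(\frac{\eps}{2\Delta}q_i(G))$ has the sign flipped relative to the stated goal of minimizing $q_i$; the density needs a minus sign in the exponent (equivalently, one can replace $q_i$ by $-q_i$), exactly as you assume. With that fix your derivation is complete and matches the standard treatment one would find in \citet{MT07} or in textbook accounts.

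Two very minor points of completeness, neither a gap: (i) the privacy calculation as written establishes only the upper bound $\Pr[\A(G)=i]\le e^{\eps}\Pr[\A(G')=i]$; $\eps$-differential privacy over arbitrary events $S$ follows by summing this pointwise bound over $i\in S$, and the reverse direction follows by symmetry in $G,G'$, both of which are immediate but worth a sentence. (ii) Lower-bounding the denominator by the single term at the minimizer is exactly what makes the union bound tight enough to get $\ln(k/\eta)$ rather than $\ln(k)+\ln(1/\eta)$ with worse constants; you do this correctly.
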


There is a simple, efficient implementation of the exponential
mechanism that adds exponential noise to each score function and
reports the maximizer of the noisy scores~(see, e.g., \citep[Sec. 3.4]{DworkRoth14book}).

\section{General Results on Lipschitz Extensions}
\label{sec:general-lip-ext}
A number of basic results from functional analysis apply to
our setting. Let $\ell_p^d$ denote the set $\re^d$ equipped with the $\ell_p$
metric.

When $Y=\re$ (with the usual metric), a Lipschitz extension always
exists~\cite{McShane34}. The classic construction, given a $c$-Lipschitz function
$f:X\to\re$, defines $\hat f:X'\to \re$ as
$$\hat f(y) = \inf_{x\in X} \paren{f(x) + c\cdot d_{X'}(x,y)}\,.$$
The function $\hat f$ is also $c$-Lipschitz, but need not necessarily
be easy to compute even if $f$ admits efficient algorithms.

\citet{BBDS13,KNRS13,ChenZ13} constructed \emph{polynomial-time} Lipschitz extensions from $\Gd$ to
$\G$ of several real-valued functions on graphs (see Introduction).

In this work, our focus is on higher-dimensional functions on graphs,
i.e., functions that map graphs into $\R^p$ for $p>1$.
As with one-dimensional functions, there always exist stretch-1
extensions of functions that take values in $\ell_\infty^p$ for any
dimension $p$, since one can separately find an extension for each
coordinate of $f$. It is also true for $\ell_1^2$, since $\ell_1^2$ is
isomorphic to $\ell_\infty^2$.  However, stretch-1 extensions need not
exist when $Y=\ell_2^p$ or $\ell_1^p$ for larger $p$. There is a
growing body of theory on the minimal stretch required for extensions
among different spaces; see \cite{BenyaminiL98book,LeeN05} for a concise summary of
known general results on the problem.

Our first result is that one cannot always get stretch-1 extensions
for functions from $\Gd$ to $\ell_1^p$ or $\ell_2^p$. We prove it at
the end of this section. It is the only lower bound on extendability for
these metrics we are aware of.

\begin{prop}\label{prop:constantstretch}
Consider the vertex distance on $\G$. There is an
  absolute constant $c>1$ such that: (1) for all $p\geq 3$, there exist
  symmetric functions from $\Gd$ to $\ell_1^p$ that do not admit a
  stretch-$c$ extension to $\G$; (2) for all $p\geq 2$, there exist
  symmetric functions from $\Gd$ to $\ell_2^p$ that do not admit a
  stretch-$c$ extension to $\G$.
\end{prop}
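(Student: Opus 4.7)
The plan is to establish both parts with a single construction: for each case, exhibit specific graphs $G_1,\ldots,G_k \in \Gd$ together with a common graph $H \in \G \setminus \Gd$ close to each $G_i$, and an explicit symmetric $1$-Lipschitz function $f$ on $\Gd$ whose values at the $G_i$ force any extension to have large Lipschitz constant at $H$. Choose $k=3,\, D=2$ for part (2), and $k=4,\, D=3$ for part (1). Let $H$ be the graph with a central vertex $u$ adjacent to distinguished vertices $v_1,\ldots,v_k$, where each $v_i$ carries a distinct pendant path of length $i{+}1$ (giving the $G_i$ pairwise distinct iso-classes) and every non-$u$ vertex has degree at most $2 \leq D$. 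Then $\deg_H(u) = k > D$ gives $H \notin \Gd$, while $G_i := H - v_i$ has max degree at most $D$ and lies in $\Gd$. A direct check yields $\dwire(H,G_i) = 1$ and $\dwire(G_i, G_j) = 2$ for $i \neq j$.

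For the target values, take: in part (2), the vertices of an equilateral triangle of side $2$ in $\R^2$ (covering radius $2/\sqrt{3}$); in part (1), the four points $\{e_1, e_2, e_3, e_1+e_2+e_3\} \subset \R^3$, which are pairwise at $\ell_1$-distance $2$ and have covering radius $3/2$ (attained at the center $(\tfrac12, \tfrac12, \tfrac12)$). In both cases embed into the ambient $\ell_q^p$ by zero-padding, which preserves all pairwise distances and the covering radius. Define the symmetric function by $f(G) := v_i$ whenever $G \cong G_i$, and extend $f$ to the remaining iso-classes by assigning each $G$ an iso-invariant choice of point in $\bigcap_i B_Y(v_i,\, \dwire(G, G_i))$.

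The hard part is verifying that this intersection is nonempty uniformly over $G \in \Gd$—i.e.\ that a $1$-Lipschitz extension to all of $\Gd$ genuinely exists. This reduces to a structural lemma: $H$ is the unique graph at $\dwire$-distance $1$ from every $G_i$. I would prove it by a case analysis on $|V_G|$. Any $G$ with $\dwire(G, G_i) = 1$ for all $i$ must satisfy $|V_G| \in \{|V_H|,\, |V_H|-2\}$. In the addition case ($|V_G| = |V_H|$), $V_G = V_H$ is forced, and each $v_i$'s edges in $G$ are pinned to those of $H$ via compatibility with every $G_j$ for $j \neq i$; hence $G = H$. In the removal case, $V_G = V_H \setminus \{v_i, w_i\}$ simultaneously for every $i$, which forces $w_1 = v_2 = v_3$, impossible for $k \geq 3$. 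Consequently, each $G \in \Gd$ has $\dwire(G, G_i) = 1$ for at most two indices, and a refinement of the same analysis identifies the ``doubly close'' graphs as exactly $H - v_i - v_j$, at distance $3$ from every other $G_\ell$. In each residual case one exhibits $y$ explicitly ($y = v_i$ when only one $d_i = 1$; $y = \tfrac12(v_i + v_j)$ when $d_i = d_j = 1$) and checks the remaining ball constraints via the triangle inequality.

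Finally, any stretch-$s$ extension $\hat f: \G \to Y$ of $f$ must satisfy $\|\hat f(H) - v_i\|_Y \leq s \cdot \dwire(H, G_i) = s$ for every $i$, so $\hat f(H)$ would be a common center witnessing $s \geq CR$. Taking $c = \min(2/\sqrt{3},\, 3/2) = 2/\sqrt{3} > 1$ yields the absolute constant claimed.
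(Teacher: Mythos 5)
Your approach matches the paper's at every structural level: place a tetrahedron (for $\ell_1^3$) or an equilateral triangle (for $\ell_2^2$) on a set of graphs in $\Gd$ that are pairwise at node-distance $2$ and share a common neighbor $H \in \G\setminus \Gd$ at distance $1$, then use the covering-radius obstruction to rule out low-stretch extensions.  Your graph gadget (a star with pendant paths of distinct lengths) is cleaner and more explicit than the paper's (a clique $\{t,u,v,w\}$ attached to a random automorphism-free graph $G_0$), and the ``same-parity $\Rightarrow$ even distance'' observation makes the $\dwire(G_i,G_j)=2$ claim and the structural lemma about $H$ easier to establish.  Those parts of your argument are sound and arguably tidier than the published version.

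However, there is a genuine gap in the part you yourself identify as ``the hard part.''  Showing that $\bigcap_i B_Y\bigl(v_i, \dwire(G, G_i)\bigr)$ is nonempty for every $G\in\Gd$ is \emph{not} the same as showing that a $1$-Lipschitz map $f:\Gd\to Y$ with $f(G_i)=v_i$ exists.  Nonempty ball intersections only ensure the Lipschitz constraints between $G$ and the four anchor graphs $G_i$; a Lipschitz function must also be compatible between \emph{arbitrary} pairs $G,G'\in\Gd$.  Since $\ell_1^3$ and $\ell_2^2$ are not hyperconvex (indeed, that non-extendability is precisely what the proposition exploits at $H$), you cannot extend pointwise from anchors and conclude Lipschitzness globally.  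For a concrete failure mode in your assignment: take $G'$ with $\dwire(G',G_1)=1$ and $G$ adjacent to $G'$ with $\dwire(G,G_i)=2$ for all $i$; setting $f(G')=v_1$ and $f(G)$ at the covering center gives $\|f(G)-f(G')\|_1 = 3/2 > 1 = \dwire(G,G')$.  Once the Lipschitz constant $c'$ of $f$ on $\Gd$ reaches $3/2$ the covering-radius argument only yields stretch $\geq 1$, which is vacuous.  What is needed is a genuine interpolation (e.g., partition-of-unity weights $\phi_i(G)=\max(0,1-\dwire(G,G_i)/2)$ combined with the bipartite structure of $\Gd$ and the fact that at most two $\phi_i$ can be simultaneously positive on $\Gd$, by your structural lemma) together with a case analysis over adjacent pairs.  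Notably, the paper's proof has the same soft spot --- it assigns $f(G)=(0,0,0)$ to all remaining graphs and asserts the $2$-Lipschitz property ``by inspection,'' which fails on a graph obtained by adding one pendant to $a$ --- so both arguments require this additional care to be made fully rigorous.
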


This lower bound extends to edge distance on $\G$ (we omit the
proof). Moreover, for edge distance, it is essentially tight: a result of \citet{BBDS13} on
smooth projections implies that every function on $\Gd$
which is Lipschitz under the \emph{edge distance} metric on $\Gd$
can be extended to all of $\G$ with stretch at most $3$, regardless
of the output metric. However, the construction does not apply to
vertex distance on graphs.

For the \emph{vertex distance} on $\Gd$, known results yield
extensions with stretch that is polynomial in either $p$ or $n$ (the
size of the graph). We outline these briefly: \citet[Theorem
1.6]{LeeN05} show that one can get extensions with stretch
$O(\rho(X))$, where $\rho(X)$ is the \emph{doubling dimension} of the
metric space $X$ (in our case, $\Gd$ or $\Gnd$).  Unfortunately, the
vertex metric on $\Gnd$ has doubling dimension at least $n$, even for
$\thresh=4$ and even if we identify isomorphic graphs (see
Appendix~\ref{sec:doubling} for formal definitions and a
proof). \citet{MakarychevM10} show that functions from any metric on
$N$ points can be extended to an arbitrary containing space with
stretch at most $O(\log N / \log \log N)$. Since $\log N$ is
approximately $n\thresh$ for $\Gnd$, this again yields large stretch.
Finally, another general approach, based on the dimension of the image
space, yields stretch $p$ and $\sqrt{p}$ for maps into $\ell_1^p$ and
$\ell_2^p$ respectively (in our case, one can obtain this by
separately extending each of the $p$ coordinates of the output).

\begin{proof}[Proof of Proposition~\ref{prop:constantstretch}] Our
proof is inspired by the example of \citet{BenyaminiL98book} of spaces
$X\subseteq X'$ and a function $f:X\to\ell_2^2$ such that there is no
stretch-1 extension of $f$ from $X$ to $X'$.

We start with the case of maps into $\ell_1$.
  Let $X'$
  denote the metric space $\{a,b,c,d,e\}$ with all pairwise distances
  among $X=\{a,b,c,d\}$ equal to 2, and distances $d_{X'}(x,e)=1$ for
  $x\in X$ (pictured as a graph below).
  Consider the function $f:X\to
  \re^3$ that maps $X$ to the corners of a particular tetrahedron:
  \begin{equation}
\vcenter{\hbox{\includegraphics[height=1in]{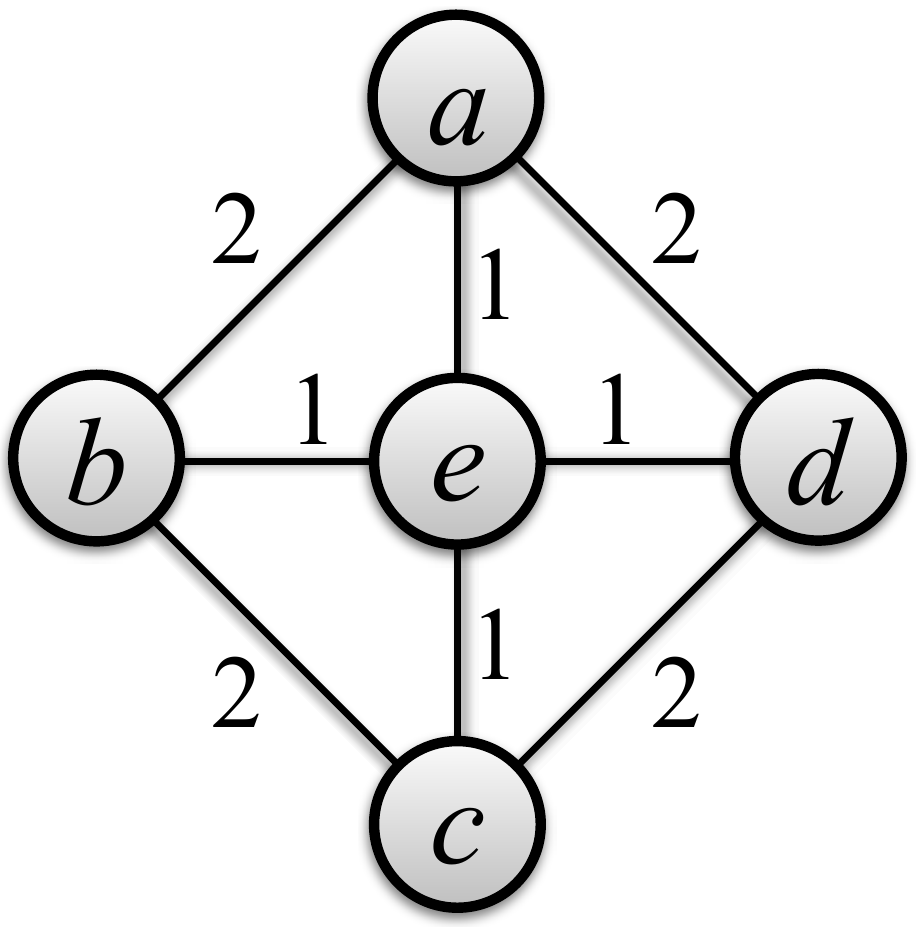}}}
\qquad \qquad \begin{array}{cccrrrc}
  f(a)&=&(&-1,&1,&1&) \\
  f(b)& =& (&1,&-1,&1&)\\
  f(c) &=&(&1,&1,&-1&) \\
  f(d) &=&(&-1,&-1,&-1&) \\
\end{array}\label{eq:counterex}
\end{equation}
The function $f$ is 2-Lipschitz if we view the image as $\ell_1^3$, but there is
no way to extend it to all of $X'$ in either metric without stretching
the Lipschitz constant. To satisfy the Lipschitz
constraint $f(e)$ has to be exactly halfway between every pair in the
set $\{f(a),f(b),f(c), f(d)\}$ (since it has to be at distance at most
2 from each of the points). The points that are halfway from $a$ to
$b$ have third coordinate 1; the points halfway from $c$ to $d$ have
third coordinate -1; there is no intersection between the two sets,
and hence no possible value for $f(e)$. Any value for $f(e)$ results
in a stretch of at least some absolute constant $c>1$.

We can lift this example to other domains $X\subset X'$. For example, we can take $X'$
to be $\ell_1^4$, and let $a=(1,0,0,0), b=(0,1,0,0), c=(0,0,1,0),  d= (
(0,0,0,1) $ and $e=
(0,0,0,0)$.

Lifting the example to $\Gd\subset \G$ is a bit messier. Fix $d$ at least
4. Let $G_0$ be a graph on at least $4(d-2)$ vertices with maximum
degree at most $d-1$ and no nontrivial automorphisms (a sufficiently
large random graph satisfies the criteria with high
probability~\cite[Chap. 9]{Bollobas98book}). We create a larger graph
$H$ by adding four vertices $\{t,u,v,w\}$ to $G_0$, among which all
possible edges exist, and such that $t,u,v$ and $w$ are connected to a
disjoint subsets of $d-2$ vertices in $G_0$ (this is possible since $G_0$
must have at least $4(d-2)$ vertices). The vertices $t,u,v,w$ have
degree $d+1$ in $H$.
\begin{center}
  \includegraphics[height=1.3in]{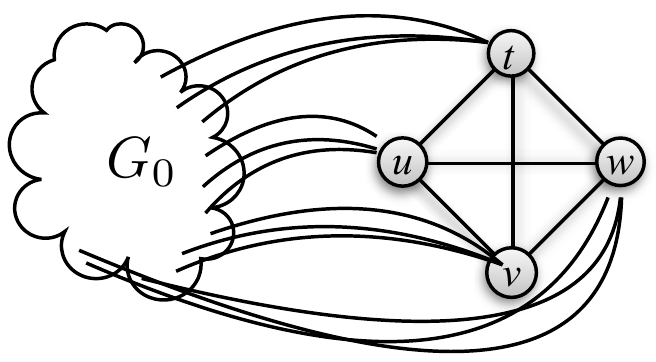}
\end{center}
To embed our counterexample in $\G$, let $e=H$, and let $\{a,b,c,d\}$
be the graphs obtained by deleting one of $t,u,v,w$ (respectively)
from $H$. The four graphs ${a,b,c,d}$ lie in $\Gd$, and no pair of
them is isomorphic (since $u,v,w,$ are connected to disjoint sets of a
graph with no automorphisms). The vertex distance between any pair of
graphs in $a,b,c,d$ is 2, and their distance from $e$ is 1. We can set
the values of $f$ on $a,b,c,d$ as in \eqref{eq:counterex} (this is
consistent with the requirement that $f$ be symmetric since the graphs
are not isomorphic). By the reasoning above, $f$ is 2-Lipschitz but
there is no way to assign a value to $f(e)$ without increasing the
stretch of $f$.

We must still show that it is possible to assign values to functions on
the remaining graphs in $\Gd$ without increasing the Lipschitz
constant.

For the graphs $G$ that can be obtained by exactly two of $t,u,v,w$
(along with corresponding edges) to $G_0$, there two of the graphs in
$\{a,b,c,d\}$ that are at distance 1 from $G$. We set $f(G)$ to be the
average of the values of $f$ at these two nearest graphs (for example,
$G+\{t,u\}$ is at distance 1 from graphs $c$ and $d$; we set
$f(G+\{t,u\})=(0,0,-1)$. Note that $f( G+\{t,u\})$ is at distance 2
from $f(c)$ and $f(d)$, as required. For all other graphs, $G\in \Gd$,
we set $f(G) = (0,0,0)$. One can verify by inspection that the
2-Lipschitz property is satisfied on all of $\Gd$ by $f$.

Finally, we note that an even simpler example works for maps into
$\ell_2^2$. Starting with the same spaces $X$ and $X'$, we can
consider a function $f:X\to\re^2$ that maps $\{a,b,c\}$ to the corners
of an equilateral triangle with side-length 1. The map is $\frac 1 2$ -Lipschitz
on $\{a,b,c\}$, but cannot be extended to all of $X'$ (since there is
no point at distance $\frac 1 2$ of all three corners. Lifting the
example to $\Gd\subset \G$ is similar to the $\ell_1$ case.
\end{proof}

\fi %%%MATCHES DEFINITIONS

\ifnum\full=1
\section{Lipschitz Extensions of the Degree List and Distribution}
\label{sec:degree-list-extension}

\subsection{Lipschitz Extension of the Degree List}
\else
\section{Lipschitz Extensions of the Degree List}
\label{sec:degree-list-extension}
\fi

In this section, we give a Lipschitz extension of the degree list. For an $n$-node graph $G$, let
$$\deglist(G) = sort(deg_1(G),...,deg_{|V(G)|}(G))$$ denote the list of degrees
of $G$ sorted in nonincreasing order.

We view the degree list as an element of $\re^*$ (the set of finite
sequences of real numbers). We equip the space with the $\ell_1$
distance, where the sequences of different lengths are padded with 0's
to allow comparison. This representation is
convenient for handling node additions and deletions.

The global $\ell_1$ sensitivity (under node insertion and removal) of
the degree list on $\thresh$-bounded graphs is $2\thresh$ because the
unsorted degree list has sensitivity $2\thresh$ and, as  \citet{HayLMJ09} observed, sorting does not
increase the $\ell_1$ distance between vectors. We construct an
extension that agrees with $\deglist$ on $\Gd$ and has global
sensitivity at most $3\thresh$.

\anote{Added an explanation of why naive constructions don't work.}
Before explaining our construction, we consider a simpler ``straw man''
construction to illustrate the problem's difficulty: suppose that
given the degree list $\deglist(G)$, we obtain $\hat f_\thresh(G)$ by
rounding all degrees above $\thresh$ down to $\thresh$. This will not
affect the degrees in graph with maximum degree $\thresh$, but it is
not $O(\thresh)$ Lipschitz: consider a star graph on $n$ vertices,
with one vertex of degree $n-1$ and $n-1$ vertices of degree 1. Simple
rounding would report $\hat f(G)$ as $(\thresh, 1,....,1)$. But the
graph has a neighbor $G'$ with no edges at all, for which the reported
degree list would be all 0's. Those vectors differ by $n+ \thresh-1$in
the $\ell_1$ norm. One can try simple ways of dropping very
high-degree vertices (an idea called ``projection'' in
\citep{BBDS13,KNRS13}), but those do not yield uniform bounds on the
sensitivity of the resulting degree sequence and result in more noise
being added for privacy.

Like in~\cite{KNRS13}, our starting point is the construction of the  flow graph $G'$ for graph $G$. \citet{KNRS13} proved that the value of the maximum flow in $G'$ is a Lipschitz extension of the number of edges in $G$. We will use the flow values on certain edges as a proxy for degrees of related vertices. The main challenge is that, whereas the value of the maximum flow in $G'$ is unique, the actual flow on specific edges is not.

\begin{definition}[Flow graph]\label{def:flow-graph}
Given a graph $G=(V,E)$, let $V_\ell=\{v_\ell \mid v\in V \}$ and $V_r=\{v_r \mid v\in V \}$ be two copies of $V$, called the {\em left} and the {\em right} copies, respectively. Let $\thresh$ be a natural number less than $n$.  The {\em flow graph} of $G$ with threshold $\thresh$, a source $s$ and a sink $t$ is a directed graph on nodes $V_\ell\cup V_r \cup \{s,t\}$ with the following capacitated edges: edges of capacity $\thresh$ from the source $s$ to all nodes in $V_\ell$ and from all nodes in $V_r$ to the sink $t$, and unit-capacity edges $(u_\ell,v_r)$ for all edges $(u,v)$ of $G$.
The flow graph of $G$ is denoted $\FG(G)$.
%Let $\fl(G)$ denote the value of the maximum flow in the flow graph of $G$.
\end{definition}

We would like our extension function to output the sorted
list of flows leaving the source vertex in some maximum flow. The
challenge is that there may be many maximum flows. If
we select a maximum flow arbitrarily, then the selected flow may be very
sensitive to changes in the graph, even though its value changes
little. We get around this by selecting a flow that
minimizes a
strictly convex function of the flow values.
%\anote{Potential alternate version. Old version at rev 5793}

\begin{definition}[Lipschitz extension of degree list]\label{def:lip-ext-degree-list}
Given a flow $f$ of $\FG(G)$, let $f(e)$ denote the flow on an edge $e$. Also, let $\outfl{f}$ be the vector of
flows on the edges leaving the source $s$, let $\infl{f}$ be the
vector of flows on the edges entering $t$, and let $\inoutfl{f}$ be the concatenation of the two vectors. We use $\Dvec$ to denote a vector of length $2n$, where all entries are $D$.
Let $\Phi(f)$ be the squared
$\ell_2$ distance between $\inoutfl{f}$ and $\Dvec$, that is,
$$\Phi(f) = \| \inoutfl{f} - \Dvec\|_2^2 = \sum_{v\in V}\paren{ (\thresh - f(s,v_{\ell}))^2 + (\thresh - f(v_{r}, t))^2}\, .$$
Let $f$ be the flow that minimizes the objective function $\Phi$ over all feasible flows in $\FG(G)$.
Define $\dlext(G)$ to be the sorted list of flows along the edges
leaving the source, that is, $\dlext(G)=sort(\infl{f})$.
\end{definition}

The function
$\dlext(G)$ is uniquely defined because the objective $\Phi$ is strictly convex
in the values $\inoutfl{f}$.
$\dlext(G)$ can be approximated to arbitrary precision in polynomial time, since it is the
minimum of a strongly convex function over a polytope with
polynomially many constraints. The
approximation may slightly increase the sensitivity; in our
application, one can account for this by adding slightly more than
$3\thresh/\eps$ noise in each coordinate.
\anote{What is the actual best known running time? I couldn't find it
  in \cite{LeeRS13}, which is the reference I was given.}

Theorem~\ref{thm:lip-ext-degree-list-intro} follows from the following theorem.
\begin{theorem}\label{thm:lip-ext-degree-list}
The function  $\dlext(G)$ is a Lipschitz extension of $\deglist(G)$ from $\Gd$ to $\G$ of stretch~3/2.
In other words,
\begin{enumerate}
\item   If $G$ is $\thresh$-bounded, then $\dlext(G)=\deglist(G)$.
\item   For any two graphs $G_1,G_2$ (not necessarily  $\thresh$-bounded) that are node neighbors, $$\|\dlext(G_1)-\dlext(G_2)\|_1 \leq 3\thresh\,.$$

\end{enumerate}
\end{theorem}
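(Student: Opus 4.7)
The plan is to prove the two parts in turn, relying throughout on the flow-graph formulation of Definition~\ref{def:lip-ext-degree-list}.

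For Part 1, I exhibit an explicitly optimal flow when $G \in \Gd$. Setting the middle-edge flow $f(u_\ell, w_r) = 1$ for every directed edge $(u, w)$ coming from an edge of $G$ gives, by flow conservation, $f(s, u_\ell) = f(u_r, t) = \deg_G(u) \leq D$, so every capacity constraint is satisfied. For any feasible flow, $f(s, u_\ell) \leq \deg_G(u)$ since only $\deg_G(u)$ unit-capacity middle edges leave $u_\ell$; because $(D - a)^2$ is decreasing on $[0, D]$, each coordinate of $\inoutfl{\cdot}$ in our constructed flow attains its maximum feasible value and $\Phi$ is simultaneously minimized term-by-term. Hence $\outfl{f}$ is exactly the degree sequence and its sorted version equals $\deglist(G)$.

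For Part 2, WLOG $G_2 = G_1 - v$. Let $f_i$ denote the $\Phi$-minimizer for $\FG(G_i)$, write $a^i_u = f_i(s, u_\ell)$ and $b^i_u = f_i(u_r, t)$, and construct two bridge flows: $\tilde f_1$, obtained from $f_1$ by zeroing every edge of $\FG(G_1)$ incident to $v$ (feasible in $\FG(G_2)$), and $\bar f_2$, obtained from $f_2$ by appending $v$ with zero flow on all new edges (feasible in $\FG(G_1)$).

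A direct flow-conservation calculation yields
\begin{displaymath}
\|\outfl{f_1} - \outfl{\tilde f_1}\|_1 \;=\; a^1_v + \sum_{u \in N(v)} f_1(u_\ell, v_r) \;=\; a^1_v + b^1_v \;\le\; 2D,
\end{displaymath}
where the second equality uses conservation at $v_r$ and the final bound uses the capacities $D$ on $(s, v_\ell)$ and $(v_r, t)$. It then suffices to prove $\|\outfl{\tilde f_1} - \outfl{f_2}\|_1 \le D$; combined with the above by the triangle inequality, and using the observation of \citet{HayLMJ09} that sorting does not increase $\ell_1$ distance between vectors, this gives $\|\dlext(G_1) - \dlext(G_2)\|_1 \le 3D$, i.e.\ stretch $3/2$ relative to the $2D$ node sensitivity of $\deglist$ on $\Gd$.

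The main obstacle is this last $\ell_1$ bound between the two feasible flows $\tilde f_1$ and $f_2$ in $\FG(G_2)$, only one of which is optimal. Strong convexity of $\Phi$ in $\inoutfl{\cdot}$ immediately yields $\|\inoutfl{\tilde f_1} - \inoutfl{f_2}\|_2^2 \le \Phi(\tilde f_1) - \Phi(f_2)$ from the optimality inequality, but converting this $\ell_2$ gap into $\ell_1$ via Cauchy--Schwarz would cost an unaffordable $\sqrt{n}$ factor. Instead, I plan to exploit the tripartite DAG structure of $\FG$---every $s$-$t$ path has length exactly three---so that the difference $\tilde f_1 - f_2$ decomposes in the residual graph into length-three augmenting paths and length-six cycles. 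Optimality of $f_2$ forbids residual cycles with strictly negative $\Phi$-gradient, reducing the difference to a collection of augmenting paths whose total magnitude is controlled by the only capacities that changed, namely those incident to $v$, delivering the required $\ell_1$ bound and, with it, the claimed Lipschitz stretch of $3/2$.
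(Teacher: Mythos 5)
Part~1 is fine and matches the paper. For Part~2, your overall strategy is genuinely different from the paper's. You split $\outfl{f_1}-\outfl{f_2}$ by the triangle inequality through an intermediate flow $\tilde f_1$ (the restriction of $f_1$ to $\FG(G_2)$ after deleting $v$), and try to bound each leg separately. The paper instead works directly with $\Delta=f_2-f_1$, where the smaller flow is extended by zero, decomposes $\Delta$ into simple paths and cycles, and groups them into three subflows $\Delta^s,\Delta^t,\Delta^0$ according to whether they use $(s,\vnew_\ell)$ and/or $(\vnew_r,t)$. Your first leg, $\|\outfl{f_1}-\outfl{\tilde f_1}\|_1 = a^1_v+b^1_v\le 2D$, is correct. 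The problem is the second leg.

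The claim $\|\outfl{\tilde f_1}-\outfl{f_2}\|_1\le D$ is the whole difficulty, and your sketch for it does not hold up. First, $\tilde f_1$ and $f_2$ are both feasible flows in $\FG(G_2)$, a single flow graph whose capacities are identical for both; there are no ``capacities that changed incident to $v$'' controlling their difference, so that part of the argument has nothing to bite on. Second, $\Phi$-optimality of $f_2$ only rules out residual cycles with \emph{strictly} negative $\Phi$-gradient at $f_2$; cycles through $s$ that balance two source edges carrying equal flow have zero first-order $\Phi$-gradient, are perfectly admissible in the decomposition of $\tilde f_1-f_2$, and each contributes $2$ per unit to $\|\outfl{\cdot}\|_1$. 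Nothing in the sketch bounds the total mass of such cycles by $D$. Third, and most importantly, your argument after constructing $\tilde f_1$ never invokes the $\Phi$-optimality of $f_1$ in $\FG(G_1)$. That is precisely the ingredient the paper needs: to show the ``remainder'' subflow $\Delta^0$ contributes nothing to $\|\outfl{\cdot}\|_1$, the paper observes that $f_1+\Delta^0$ is feasible in $\FG(G_1)$ and $f_2-\Delta^0$ is feasible in $\FG(G_2)$, and then uses first-order optimality of \emph{both} $f_1$ and $f_2$ (the two obtuse-angle inequalities $\langle\Delta^0,\Dvec-f_1\rangle\le 0$ and $\langle\Delta^0,\Dvec-(f_2-\Delta^0)\rangle>0$) to derive a sign contradiction with the fact that $\Delta^0$ and $\Delta^s+\Delta^t$ are subflows of the same $\Delta$. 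You would need an analogue of that two-sided convexity argument to make your second leg go through, and it is not clear that one exists once you have replaced $f_1$ by $\tilde f_1$, since $\tilde f_1$ is no longer optimal in any of the flow graphs in play.
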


\begin{proof}[Proof of Theorem~\ref{thm:lip-ext-degree-list} (item 1)]
  The flow that assigns 1 to all edges $(u_\ell, v_r)$ and $deg(v)$ to all edges $(s,v_\ell)$ and $(v_r,t)$
  strictly dominates all feasible flows. In particular, it minimizes $\Phi$
  since, for $x\in[0,D]$, function $(D-x)^2$ is decreasing in $x$.
\end{proof}

There are two distinct notions of optimality of a flow in $\FG(G)$: optimality with respect to $\Phi$, which we call \emph{$\Phi$-optimality}, and optimality of the net flow form $s$ to $t$, called \emph{net flow optimality}. Next, we show that $\Phi$-optimality implies net flow optimality.

\begin{lemma}\label{lem:optimality}
  For every graph $G$, if $f$ minimizes $\Phi$ among valid flows for the flow graph $\FG(G)$,
  then $f$ has maximum net flow from $s$ to $t$ in $\FG(G)$.
\end{lemma}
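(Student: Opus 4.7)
The plan is to prove the contrapositive by a standard augmenting-path argument. Specifically, I will show that if $f$ is not a maximum $s$-$t$ flow in $\FG(G)$, then there is another valid flow $f'$ with $\Phi(f') < \Phi(f)$, so $f$ cannot be $\Phi$-optimal.

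First I would invoke the Ford--Fulkerson characterization of maximum flow: if $f$ is not a maximum $s$-$t$ flow, then there exists a simple augmenting path $P$ from $s$ to $t$ in the residual graph of $f$, with some bottleneck residual capacity $\epsilon > 0$. The key structural observation about $\FG(G)$ is that in the original directed graph, $s$ has only outgoing edges (to $V_\ell$) and $t$ has only incoming edges (from $V_r$). Combined with simplicity of $P$, this forces the first edge of $P$ to be a forward residual edge $(s, v_\ell)$ for some $v \in V$, with $f(s, v_\ell) < D$, and the last edge of $P$ to be a forward residual edge $(v'_r, t)$ for some $v' \in V$, with $f(v'_r, t) < D$. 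No other edge incident to $s$ or $t$ appears on $P$.

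Next I would augment along $P$ by $\epsilon$ to obtain a valid flow $f'$. By the structural observation, the only coordinates of $\inoutfl{f}$ that change are $f(s, v_\ell)$ and $f(v'_r, t)$, which each strictly increase by $\epsilon$ (and remain in $[0,D]$ by choice of $\epsilon$). All other terms of the sum defining $\Phi$ are unchanged. Since $x \mapsto (D - x)^2$ is strictly decreasing on $[0, D]$, both altered terms strictly decrease, yielding $\Phi(f') < \Phi(f)$. This contradicts the assumption that $f$ minimizes $\Phi$, completing the proof. (If $v = v'$, the same argument applies: two distinct terms of $\Phi$ still strictly decrease.)

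I do not expect a major obstacle. The only subtle point is justifying that $s$ and $t$ each appear on $P$ exactly once and only as endpoints, so that the augmentation changes only the two terms $(D - f(s, v_\ell))^2$ and $(D - f(v'_r, t))^2$ in $\Phi$; this follows immediately from simplicity of $P$ together with the one-sided incidence structure of $s$ and $t$ in $\FG(G)$.
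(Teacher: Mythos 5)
Your proof is correct and takes essentially the same approach as the paper: both argue that if $f$ were not maximum, a simple augmenting path exists in the residual graph (the paper obtains simplicity by taking a shortest augmenting path), and augmenting along it strictly decreases $\Phi$, contradicting $\Phi$-optimality. You spell out more explicitly why only the two endpoint terms of $\Phi$ change and strictly increase (using the one-sided incidence structure of $s$ and $t$), a point the paper states more tersely as ``does not decrease the flow along any edge leaving $s$ or entering $t$.''
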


\begin{proof}
  If $f$ does not have maximum net flow, then we can find a shortest augmenting
  path $p$ from $s$ to $t$. Let  $c>0$ be the minimal residual capacity of the edges in $p$. Since $p$ is a
  shortest path, it is simple; thus, adding $cp$
  to $f$ results in a feasible flow, but does not decrease the flow along any edge leaving $s$ or entering $t$. This implies that $\Phi(f+cp)<\Phi(f)$ (since $\Phi$ is strictly decreasing in each argument), contradicting the $\Phi$-optimality of $f$.
\end{proof}

The flow graph $\FG(G)$ admits a simple symmetry: for any flow $f$, we can obtain a feasible flow $\pi(f)$ by swapping the roles of $s$ and $t$ and the roles of left and right copies of all vertices. That is, we define $\pi(f)(s,v_\ell):= f(v_r,t)$,
$\pi(f)(u_r,t):= f(s,u_\ell)$, $\pi(f)(u_\ell,v_r):= f(v_\ell,u_r)$ for all vertices $v,u$ in $G$.  Flow $f$ is \emph{symmetric} if $\pi(f) = f$. For every graph $G$, there exists a symmetric $\Phi$-optimal
  flow in $\FG(G)$: given any $\Phi$-optimal flow $f'$, the flow $f'' = \frac 1 2 (f' + \pi(f'))$ is symmetric, feasible (because the set of feasible flows is convex) and has objective value at most $\Phi(f')$ by convexity of~$\Phi$.

\begin{proof}[Proof of Theorem~\ref{thm:lip-ext-degree-list} (item 2)]
  Suppose a graph $G_1$ on $n$ vertices is obtained by removing a node
  $\vnew$ along with its associated edges from a graph $G_2$ (on $n+1$
  vertices).

Let $f_1, f_2$ be $\Phi$-optimal symmetric flows for the flow
  graphs $\FG(G_1)$ and $\FG(G_2)$, respectively.

Observe that $f_1$ is a feasible flow in $\FG(G_2)$.  Consider the flow $\Delta = f_2-f_1$. Note that   $\Delta$
  is a maximum signed\snote{Why do we need ``signed''?}\anote{Because
    some entries of $f_2-f_1$ can be negative since some flow can be re-routed.} flow in the residual graph of flow
  $f_1$ for $\FG(G_2)$. In particular, $\Delta$ satisfies flow and capacity constraints,
  but not necessarily positivity. Since $\|\dlext(G_1)-\dlext(G_2)\|_1=\|\outfl{\Delta}\|_1$, our goal is to prove $\|\outfl{\Delta}\|_1\leq 3\thresh.$

Next, we decompose $\Delta$ into three {\em subflows}. A {\em subflow} of a flow
$\Delta$ is a flow $\Delta'$ such that for all edges $e$, the flows $\Delta(e)$ and $\Delta'(e)$ cannot have different signs and $|\Delta'(e)|\leq \Delta(e)$. We start by decomposing $\Delta$ into subflows that form simple $s$-$t$ paths and simple cycles. Then we group them as follows:
\begin{itemize}
\item Let $\Delta^s$ be the sum of all flows from the initial decomposition that form paths and cycles using the edge $(s,\vnew_\ell)$.
\item Let $\Delta^t$ be the sum of all flows from the initial decomposition that form paths and cycles using the edge $(\vnew_r, t)$, but not $(s,\vnew_\ell)$.
\item Let $\Delta^0$ be the sum of the remaining flows, i.e., $\Delta^0=\Delta-\Delta^s-\Delta^t.$
\end{itemize}
Since, by definition of the subflow decomposition, $\|\outfl{\Delta}\|_1=\|\outfl{\Delta^s}\|_1+\|\outfl{\Delta^t}\|_1+\|\outfl{\Delta^0}\|_1,$ it remains to bound
 the three values in the sum. We do it in the following three lemmas.

\begin{lemma}\label{lem:delta-s}
$\|\outfl{\Delta^s}\|_1\leq 2D.$
\end{lemma}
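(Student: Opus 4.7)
The plan is to exploit the fact that the edge $(s,\vnew_\ell)$ in $\FG(G_2)$ has capacity exactly $D$, and that this single edge acts as the sole bottleneck for all of $\Delta^s$. The key observation is that $\vnew$ is not a vertex of $G_1$, so $f_1(s,\vnew_\ell)=0$ when $f_1$ is viewed as a flow in $\FG(G_2)$. Consequently, the forward residual edge $(s,\vnew_\ell)$ has full residual capacity $D$ under $f_1$, and no reverse residual edge for it exists.

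First I would view $\Delta = f_2-f_1$ as a non-negative flow on the residual network of $f_1$ in $\FG(G_2)$ and interpret the decomposition into ``simple $s$-$t$ paths and simple cycles'' within this residual graph. By definition of $\Delta^s$, every path or cycle in its decomposition traverses $(s,\vnew_\ell)$ in the forward direction. Hence the sum of the flow values of all paths and cycles comprising $\Delta^s$ equals the total flow placed by $\Delta^s$ on this one edge, which is at most its capacity $D$.

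Next I would classify how each such path/cycle contributes to $\outfl{\Delta^s}$. A simple $s$-$t$ path through $(s,\vnew_\ell)$ leaves $s$ exactly once, contributing $+1$ to coordinate $\vnew$ of $\outfl{\Delta^s}$ and $0$ elsewhere. A simple cycle through $(s,\vnew_\ell)$ must return to $s$ along some reverse residual edge $(u_\ell,s)$ with $f_1(s,u_\ell)>0$; translated back to $\FG(G_2)$, this contributes $+1$ to coordinate $\vnew$ and $-1$ to coordinate $u$ of $\outfl{\Delta^s}$. In either case, one unit of path/cycle flow contributes at most $2$ to $\|\outfl{\Delta^s}\|_1$, so summing over all units gives $\|\outfl{\Delta^s}\|_1 \leq 2D$.

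The main obstacle will be carefully justifying the residual-graph interpretation of the signed-flow decomposition into simple paths and cycles, and verifying that every path/cycle in $\Delta^s$ uses $(s,\vnew_\ell)$ only in the forward direction (which follows because no reverse residual edge exists there). Once these bookkeeping points are handled, the per-unit contribution bound of $2$ combined with the capacity constraint $D$ on $(s,\vnew_\ell)$ yields the result immediately.
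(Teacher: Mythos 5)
Your argument is correct and follows the paper's proof almost verbatim: decompose $\Delta^s$ into simple $s$-$t$ paths and simple cycles through $(s,\vnew_\ell)$, note each path contributes its flow value and each cycle at most twice its flow value to $\|\outfl{\Delta^s}\|_1$, and use $\Delta^s(s,\vnew_\ell)\leq D$. The extra residual-graph bookkeeping you supply (that $f_1(s,\vnew_\ell)=0$ forces only forward use of $(s,\vnew_\ell)$, and that a cycle must re-enter $s$ via a reverse edge $(u_\ell,s)$) is a correct and slightly more explicit justification of what the paper states tersely.
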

\begin{proof}
Recall that $\Delta^s$ can be decomposed into simple $s$-$t$ paths and simple cycles that use the edge $(s,\vnew_\ell)$. Each such path contributes the value of its flow to $\|\outfl{\Delta^s}\|_1$, and each such cycle contributes at most twice the value of its flow. Since the total flow $\Delta^s(s,\vnew_\ell)$ is at most $D$, we get that $\|\outfl{\Delta^s}\|_1\leq 2D.$
\end{proof}

\begin{lemma}\label{lem:delta-t}
$\|\outfl{\Delta^t}\|_1\leq D.$
\end{lemma}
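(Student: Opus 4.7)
My plan is to closely mirror the argument of Lemma~\ref{lem:delta-s}. I would decompose $\Delta^t$ into simple $s$-$t$ paths and simple cycles in the residual graph of $f_1$ in $\FG(G_2)$, noting that by construction every path and cycle uses the edge $(\vnew_r,t)$ but not $(s,\vnew_\ell)$.

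First I would bound the total weight of the pieces. Because $\vnew$ is not a vertex of $G_1$, we have $f_1(\vnew_r,t)=0$, so $\Delta(\vnew_r,t)=f_2(\vnew_r,t)\leq D$. Taking a canonical decomposition in which $(\vnew_r,t)$ is used in the forward direction by every piece, the sum of the weights of the paths and cycles in $\Delta^t$ is at most $\Delta^t(\vnew_r,t)\leq D$.

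Next I would account for how each piece contributes to $\|\outfl{\Delta^t}\|_1$. Each simple $s$-$t$ path leaves $s$ through exactly one forward edge $(s,u_\ell)$ with $u\neq \vnew$ and contributes its weight $w_P$ to $\|\outfl{\Delta^t}\|_1$. The key difference from Lemma~\ref{lem:delta-s} is that cycles in $\Delta^t$ are not forced to pass through $s$: since they avoid $(s,\vnew_\ell)$, their only mandatory vertex is $t$ (reached via $(\vnew_r,t)$), and their residual return trip from $t$ back to $\vnew_r$ can stay inside $V_\ell \cup V_r \cup \{t\}$. I would argue that one may always choose a decomposition in which no cycle of $\Delta^t$ passes through $s$, so these cycles contribute $0$ to $\|\outfl{\Delta^t}\|_1$, and the bound $\|\outfl{\Delta^t}\|_1 \leq \sum_P w_P \leq D$ follows.

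The main obstacle will be justifying this re-decomposition. A cycle $C$ in $\Delta^t$ that does traverse $s$ must use $(s,a_\ell)$ forward, $(s,b_\ell)$ reverse, $(\vnew_r,t)$ forward, and $(c_r,t)$ reverse for some $a,b,c\neq \vnew$; as a signed flow it coincides with the difference $P_1-P_2$ of two $s$-$t$ paths---$P_1$ using $(s,a_\ell)$ and $(\vnew_r,t)$, $P_2$ using $(s,b_\ell)$ and $(c_r,t)$---and so $w_C$ units of it can be absorbed into the path portion of the decomposition. Carefully tracking this absorption (using the symmetry $\pi(f_i)=f_i$ of the optimal flows to pair cycles consistently) should yield a decomposition of $\Delta^t$ with no $s$-traversing cycles, completing the proof.
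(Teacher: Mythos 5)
Your decomposition setup, the bound $\Delta^t(\vnew_r,t)\leq D$, and the observation that each $s$-$t$ path contributes exactly its weight all match the paper. But the way you propose to handle $s$-traversing cycles is a genuinely different route, and it has a gap that I don't think can be patched as you describe. Writing an $s$-traversing cycle $C$ of weight $w_C$ as $P_1 - P_2$ does not reduce its contribution to $\|\outfl{\Delta^t}\|_1$: the signed flow $P_1 - P_2$ has entries $+w_C$ on $(s,a_\ell)$ and $-w_C$ on $(s,b_\ell)$, so its contribution is still $2w_C$ (and because the decomposition respects the signs of $\Delta$, there is no cancellation with other pieces). Moreover, after the re-decomposition your budget argument breaks: $P_2$ does \emph{not} use $(\vnew_r,t)$, so $\Delta^t(\vnew_r,t)\leq D$ no longer controls $\sum_P w_P$. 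At best this line of reasoning would recover $\|\outfl{\Delta^t}\|_1 \leq 2D$, not $D$, and the appeal to the symmetry $\pi(f_i)=f_i$ does not obviously salvage it.

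The missing idea is that such an $s$-traversing cycle simply cannot occur, and the reason is max-flow optimality, not re-decomposition. Because the decomposition respects the signs of $\Delta = f_2 - f_1$, the \emph{reverse} of any simple cycle $C$ in it is a cycle in the residual graph of $f_2$ on $\FG(G_2)$. If $C$ passed through both $s$ and $t$, one arc of the reversed cycle would be an $s$-to-$t$ augmenting path for $f_2$, so $f_2$ would not be a maximum flow in $\FG(G_2)$---contradicting Lemma~\ref{lem:optimality} applied to the $\Phi$-optimal $f_2$. Hence every cycle in $\Delta^t$ avoids $s$ entirely, cycles contribute $0$ to $\|\outfl{\Delta^t}\|_1$, and only the $s$-$t$ paths remain, giving $\|\outfl{\Delta^t}\|_1 \leq \Delta^t(\vnew_r,t)\leq D$. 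You should invoke Lemma~\ref{lem:optimality} here rather than attempt a re-decomposition.
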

\begin{proof}
Recall that $\Delta^t$ can be decomposed into simple $s$-$t$ paths and cycles that use the edge $(\vnew_r,t),$ but not $(s,\vnew_\ell)$. Each such path contributes the value of its flow to $\|\outfl{\Delta^s}\|_1$.  Any such cycle contributes 0 to $\|\outfl{\Delta^s}\|_1$ because any simple cycle in $\Delta$ that starts from $t$
  cannot reach $s$. If it did, one could find an augmenting $s$-$t$
  path in $\Delta$, implying that $f_2$ is not a net value optimal flow in $\FG(G_2)$ and, by Lemma~\ref{lem:optimality}, contradicting $\Phi$-optimality of $f_2$ in $\FG(G_2)$.

Since the total flow $\Delta^t(\vnew_r,t)$ is at most $D$, we get that $\|\outfl{\Delta^t}\|_1\leq D.$
\end{proof}

\begin{lemma}\label{lem:delta0}
$\|\outfl{\Delta^0}\|_1=0.$
\end{lemma}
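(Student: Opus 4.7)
The plan is to exploit the $\Phi$-optimality of $f_1$ and $f_2$ simultaneously by perturbing each flow along individual cycles of the path/cycle decomposition of $\Delta^0$. I will show that every cycle in the decomposition carries zero flow on every edge incident to $s$ or $t$, which immediately yields $\|\outfl{\Delta^0}\|_1=0$.

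First I would rule out $s$-$t$ paths in the decomposition of $\Delta^0$, following the reasoning used for $\Delta^t$ in Lemma~\ref{lem:delta-t}. Since $f_1$ assigns zero flow to every edge of $\FG(G_2)$ incident to $\vnew_\ell$ or $\vnew_r$, there is no reverse residual edge at $\vnew$; any simple path or cycle touching $\vnew_\ell$ must therefore enter via $(s,\vnew_\ell)$, and any touching $\vnew_r$ must leave via $(\vnew_r,t)$. Both edges are excluded from $\Delta^0$ by definition, so $\Delta^0$ is supported entirely on edges of $\FG(G_1)$. An $s$-$t$ path in $\Delta^0$ would then be an augmenting path for $f_1$ in $\FG(G_1)$, contradicting the net-flow optimality of $f_1$ from Lemma~\ref{lem:optimality}. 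Hence the decomposition of $\Delta^0$ consists of cycles only.

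For a single cycle $C$ in this decomposition my next step is to compare first-order optimality at $f_1$ and $f_2$. For sufficiently small $\epsilon>0$, the flow $f_1+\epsilon C$ is feasible in $\FG(G_1)$ because $C$ lies in the residual graph of $f_1$; the flow $f_2-\epsilon C$ is feasible in $\FG(G_2)$ because $C$ is a signed subflow of $\Delta=f_2-f_1$, so edge-wise $C(e)$ shares sign with $\Delta(e)$ and $|C(e)|\le|\Delta(e)|$. The $\Phi$-optimality of $f_1$ and $f_2$ in their respective feasible polytopes then gives $\nabla\Phi(f_1)\cdot C\ge 0$ and $\nabla\Phi(f_2)\cdot C\le 0$, and hence
\[
0 \;\le\; \bigl(\nabla\Phi(f_1)-\nabla\Phi(f_2)\bigr)\cdot C \;=\; -2\sum_{e}\Delta(e)\,C(e),
\]
where the sum runs over source/sink edges of the form $(s,v_\ell)$ and $(v_r,t)$ (the only edges on which $\Phi$ depends, via $\partial\Phi/\partial f(e)=-2(\thresh-f(e))$). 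Because $C(e)\Delta(e)\ge 0$ term-by-term, every term must vanish; combined with $|C(e)|\le|\Delta(e)|$ this forces $C(s,v_\ell)=0$ and $C(v_r,t)=0$ for all $v$.

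Summing over the cycles comprising $\Delta^0$ gives $\Delta^0(s,v_\ell)=0$ for every $v$, hence $\outfl{\Delta^0}=0$ and $\|\outfl{\Delta^0}\|_1=0$. The main care will go into verifying the feasibility of $f_2-\epsilon C$ in $\FG(G_2)$: it is precisely the signed-subflow property of $C$ with respect to $\Delta$ that guarantees this feasibility, and the same property is what converts the aggregate inequality $\sum_e\Delta(e)\,C(e)\le 0$ into the pointwise conclusion $C(e)=0$ at the end.
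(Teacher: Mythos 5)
Your proof is correct and rests on the same core idea as the paper's: apply the first-order optimality conditions of $\Phi$ at both $f_1$ (over feasible flows in $\FG(G_1)$) and $f_2$ (over feasible flows in $\FG(G_2)$) to perturbations along $\Delta^0$, and then use the sign structure of subflows of $\Delta$ to force the relevant inner products to vanish. The paper does this with $\Delta^0$ as a single block: it observes $f_1+\Delta^0$ is feasible in $\FG(G_1)$ and $f_2-\Delta^0$ is feasible in $\FG(G_2)$, writes the two convexity inequalities in terms of $\langle\Delta^0,\Dvec-\cdot\rangle$, subtracts to obtain $\langle\Delta^0,\Delta^s+\Delta^t\rangle<0$, and derives a contradiction because $\Delta^0$ and $\Delta^s+\Delta^t$, being subflows of $\Delta$, agree in sign on every edge. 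You instead decompose $\Delta^0$ into cycles (after ruling out $s$--$t$ paths via net-flow optimality of $f_1$) and run the gradient argument per cycle $C$, concluding $\sum_e\Delta(e)C(e)\le 0$ with each summand $\ge 0$. The cycle decomposition is in fact unnecessary: your gradient computation applies verbatim to $\Delta^0$ itself (with $f_1+\epsilon\Delta^0$ and $f_2-\epsilon\Delta^0$), which would shorten the argument and bring it even closer to the paper's; the preliminary step about excluding $s$--$t$ paths is then not needed either. Your version does have the small virtue of arriving at the conclusion constructively (each term of the inner product vanishes) rather than by contradiction, and of making explicit the gradient formula $\partial\Phi/\partial f(e)=-2(\thresh-f(e))$ that the paper leaves implicit in its geometric language about angles.
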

\begin{proof}
The flow  $\Delta^0$ does not use the edges $(s,\vnew_\ell)$ and $(\vnew_r,t)$ since all flow in $\Delta$ along $(s,\vnew_\ell)$ and $(\vnew_r,t)$ has been used by $\Delta^s+\Delta^t$. Consequently, $\Delta^0$ has no flow passing through $\vnew_\ell$ and $\vnew_r$. Therefore, $\Delta^0$ is a feasible flow for the residual graph of $f_1$ in
  $\FG(G_1)$. We conclude that $f_1+\Delta^0$ is feasible in $\FG(G_1)$.

Suppose for the sake of contradiction that $\|\outfl{\Delta^0}\|_1>0.$ Then we can use convexity of $\Phi$ to prove the following inequalities:
\begin{eqnarray}
\label{eq:delta0-1}\langle \Delta^0,\Dvec-f_1\rangle &\leq& 0.\\
\label{eq:delta0-2}\langle \Delta^0,\Dvec-(f_2-\Delta^0)\rangle &>& 0.
\end{eqnarray}
To prove (\ref{eq:delta0-1}), consider the polytope of feasible flows in $\FG(G_1)$. Both $f_1$ and $f_1+\Delta^0$ are in the polytope. Moreover, $f_1$ is the unique $\Phi$-optimal flow in $\FG(G_1)$. Since $\Phi$ is minimized at $\Dvec$, a tiny step from $f_1$ in the direction of $f_1+\Delta^0$ takes us further from $\Dvec$. In other words, the angle between the vectors $(f_1,\Dvec)$ and $(f_1,f_1+\Delta^0)$ is at least $90^\circ$, implying (\ref{eq:delta0-1}).

To prove (\ref{eq:delta0-2}), consider the polytope of feasible flows in $\FG(G_2)$. Both $f_2$ and $f_2-\Delta^0$ are in that polytope. Moreover, $f_2$ is the unique $\Phi$-optimal flow in $\FG(G_2)$. Since $\Phi$ is minimized at $\Dvec$, a tiny step from $f_2-\Delta^0$ in the direction of $f_2$ takes us closer to $\Dvec$. In other words, the angle between the vectors $(f_2-\Delta^0,f_2)$ and $f_2-\Delta^0,\Dvec)$ is less than $90^\circ$, implying (\ref{eq:delta0-2}).

Subtracting (\ref{eq:delta0-2}) from (\ref{eq:delta0-1}) and using the fact that $\Delta=f_2-f_1=\Delta^s+\Delta^t+\Delta^0$, we get
\begin{eqnarray}
\nonumber \langle \Delta^0,\Dvec-(f_2-\Delta^0)\rangle -\langle \Delta^0,\Dvec-f_1\rangle &>& 0;\\
\nonumber\langle \Delta^0,-(f_2-f_1-\Delta^0)\rangle &>&0;\\
\label{eq:delta0-contradiction}\langle \Delta^0,\Delta^s+\Delta^t\rangle &<&0.
\end{eqnarray}
But $\Delta^0$ and $\Delta^s+\Delta^t$ are both subflows of $\Delta$, so they  cannot have opposite signs, on any edge, contradicting (\ref{eq:delta0-contradiction}). Therefore, $\|\outfl{\Delta^0}\|_1=0.$
\end{proof}

We now complete the proof of Theorem~\ref{thm:lip-ext-degree-list} (Item 2). Recall that $\Delta=\Delta^s+\Delta^t+\Delta^0$ and that $\Delta^s, \Delta^t,$ and $\Delta^0$ are subflows of $\Delta$. From Lemmas~\ref{lem:delta-s}--\ref{lem:delta0}, we get
$\|\dlext(G_1)-\dlext(G_2)\|_1=\|\outfl{\Delta}\|_1 = \|\outfl{\Delta^s}\|_1+ \|\outfl{\Delta^t}\|_1+\|\outfl{\Delta^0}\|_1\leq 3\thresh,$
as desired.
\end{proof}

\ifnum\full = 1 %%% MATCHES DL DETAILS

\subsection{From the Degree List to the Degree Distribution}
\label{sec:DLtoDD}

Let $p_G$ denote the degree distribution of the graph $G$, i.e.,
$p_G(k) = \big|\{v: \deg_v(G) =k\}\big | / |V|$. Similarly, $\cum_G$
denotes the {\em cumulative} degree distribution (CDF), i.e., $\cum_G(k) =
\big|\{v: \deg_v(G) \geq k\}\big| / |V|$.

We can modify the extension of the degree list to get extensions of
the \emph{degree histogram} $n\cdot p_G$ or the cumulative degree
histogram (CDH)
$n\cdot P_G$. If we consider two \emph{integral} degree lists that are
at $\ell_1$ distance $t$, then
the $\ell_1$ distance between their CDH's is
at most $t$ (similarly for degree histograms). However, since our extension
of the degree list may produce fractional lists, we need to extend the
CDH to fractional degree lists so that the map
from lists to CDHs remains
Lipschitz in the $\ell_1$ norm.

We do this first for the CDH; the extension of the degree
histogram is an easy modification.
Given an integer $k\in [\thresh]$, let
$$[x]_k=\max\{0, \min\{1,x-(k-1)\}\} =
\begin{cases}
  0& \text{if }x\leq k-1,\\
  x-(k-1) & \text{if }k-1\leq x \leq k,\\
  1& \text{if }x \geq k.\\
\end{cases}
$$
Define the map $H$ as follows: for a nonnegative real number $a$,
$H(a)=([a]_1,[a]_2,...,[a]_{\lceil{a}\rceil})$. (This is a vector of
length $\lceil{a}\rceil$ whose $\ell_1$ norm is exactly $a$.) Given a finite sequence $(a_1,...,a_n)\in
[0,\thresh]^*$, let $H(a_1,...,a_n) = \sum_i H(a_i)$, where we pad
shorter sequence with 0's to allow summation. If the input numbers are
in $[0,\thresh]$, the sequence has length at most $\thresh$.

\begin{lemma}
  The function $H$ is 1-Lipschitz in the $\ell_1$ norm. That is, $\|H(a) -
  H(a') \|_1\leq \|a-a'\|_1$ for all vectors $a,a'\in
  [0,\thresh]^*$. Moreover, for every graph $G$, $H(\deglist(G)) =
  n\cdot P_G$ where $n=|V_G|$.
\end{lemma}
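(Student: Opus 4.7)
The plan is to verify both claims directly from the definition of $H$. The main observations are that each coordinate $[a]_k$ of $H(a)$ is a truncated ramp of slope $0$ or $1$ in $a$, that $H$ is monotone in $a$ coordinatewise, and that $\|H(a)\|_1 = a$ for scalar $a \geq 0$. These will make the scalar Lipschitz bound tight, and the vector case will follow by triangle inequality.

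First I would handle the scalar case. Fix $a, a' \geq 0$ and assume without loss of generality $a \leq a'$. Because $[\cdot]_k$ is non-decreasing for every $k$, we have $[a]_k \leq [a']_k$, so
\begin{equation*}
\|H(a) - H(a')\|_1 \;=\; \sum_{k} \bigl([a']_k - [a]_k\bigr) \;=\; \|H(a')\|_1 - \|H(a)\|_1 \;=\; a' - a,
\end{equation*}
using $\|H(b)\|_1 = \sum_k [b]_k = b$ for $b \geq 0$ (the $\ell_1$ norm of a truncated ramp decomposition of a nonnegative scalar is just the scalar itself). So $H$ is actually an $\ell_1$ isometry on scalars.

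Next I would lift this to vector inputs. For $a, a' \in [0,\thresh]^*$, pad the shorter sequence with $0$'s so that both have the same length $N$. Then, using $H(a) = \sum_i H(a_i)$ and the triangle inequality,
\begin{equation*}
\|H(a) - H(a')\|_1 \;=\; \Bigl\|\sum_{i=1}^N \bigl(H(a_i) - H(a'_i)\bigr)\Bigr\|_1 \;\leq\; \sum_{i=1}^N \|H(a_i) - H(a'_i)\|_1 \;=\; \sum_{i=1}^N |a_i - a'_i| \;=\; \|a-a'\|_1,
\end{equation*}
where the second equality uses the scalar case (and for the padded $0$ entries, both sides contribute $0$). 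This gives the 1-Lipschitz property.

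Finally, for the identity $H(\deglist(G)) = n \cdot P_G$, note that the degree list of $G$ has integer entries $d_1,\dots,d_n$. For an integer $d \geq 0$, $[d]_k$ equals $1$ if $d \geq k$ and $0$ otherwise, so $H(d)$ is simply the indicator vector $(\mathbf{1}[d \geq 1], \mathbf{1}[d \geq 2], \ldots)$. Summing over vertices,
\begin{equation*}
H(\deglist(G))_k \;=\; \sum_{v \in V_G} \mathbf{1}[\deg_v(G) \geq k] \;=\; \bigl|\{v : \deg_v(G) \geq k\}\bigr| \;=\; n \cdot P_G(k).
\end{equation*}
There is no real obstacle here; the only mild care needed is in the padding convention so that the triangle-inequality step is valid across sequences of different lengths.
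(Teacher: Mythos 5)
Your proof is correct and follows essentially the same route as the paper: the paper's own argument is a single sentence invoking the facts that $\|H(a)\|_1 = a$ for scalar $a \geq 0$ and that $H(a)$ is the all-ones sequence when $a$ is an integer, and you have simply made explicit the intermediate steps (coordinatewise monotonicity of $[\cdot]_k$ to get the scalar isometry, then triangle inequality and padding to lift to vectors) that the paper leaves to the reader.
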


\begin{proof}
  This follows from the fact that $H(a)$ has $\ell_1$ norm $a$ for
  every nonnegative real number,  and equals the sequence $1^a$ when $a$ is an integer.
\end{proof}

Given $H$, which extends transforms degree lists to
the CDH, we can obtain an extension of the degree histogram via
$hist_\thresh(a) = H_\thresh(a)$, and $hist_i(a) = H_i(a) - H_{i+1}(a)$ for
$i<\thresh$. This increases $\ell_1$ distances by at most an
additional factor of 2.

\begin{theorem}
  The map
$$G\mapsto hist(\hat f_\thresh(G))$$
extends the degree histogram (as a map from $(\Gd,\dwire)$ to
$\ell_1^\thresh$) to $\G$, with stretch at most
3.
\end{theorem}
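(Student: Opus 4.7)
The plan is to decompose the claimed extension as $hist \circ \hat f_\thresh$, folding $H$ into $hist$ (since $hist_i = H_i - H_{i+1}$ for $i < \thresh$ and $hist_\thresh = H_\thresh$), and then verify the two defining properties of a Lipschitz extension: agreement on $\Gd$ and the stretch-$3$ Lipschitz bound.

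For agreement on $\Gd$, Theorem~\ref{thm:lip-ext-degree-list} already gives $\hat f_\thresh(G) = \deglist(G)$ for every $G \in \Gd$. The lemma immediately preceding the theorem statement then yields $H(\deglist(G)) = n \cdot P_G$, the unnormalized cumulative degree histogram. Since $G$ has maximum degree at most $\thresh$, taking successive differences $H_i - H_{i+1}$ together with the tail value $H_\thresh$ recovers exactly the counts of vertices of each degree in $\{1, \dots, \thresh\}$, which is the degree histogram.

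For sensitivity, I will chain three Lipschitz constants. Theorem~\ref{thm:lip-ext-degree-list} provides $\ell_1$-sensitivity at most $3\thresh$ for $\hat f_\thresh$ as a map from $(\G, \dwire)$ to $\ell_1^*$. The preceding lemma shows $H$ is $1$-Lipschitz from $\ell_1^*$ to $\ell_1^\thresh$. Finally, the differencing map from a CDH vector to a histogram vector is $2$-Lipschitz in $\ell_1$: by the triangle inequality applied termwise, $|(H_i - H_{i+1}) - (H'_i - H'_{i+1})| \leq |H_i - H'_i| + |H_{i+1} - H'_{i+1}|$, and summing over $i$ each coordinate of $H - H'$ is counted at most twice. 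Composition delivers $\ell_1$-sensitivity at most $6\thresh$ for the full extension.

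To deduce stretch at most $3$, I will compare against the Lipschitz constant of the true degree histogram on $\Gd$, which a direct accounting shows is essentially $2\thresh$: removing a vertex of degree $d \leq \thresh$ contributes $-1$ to $hist_d$, and each of its $d$ neighbors shifts a single unit of mass between two adjacent bins, so the total $\ell_1$ change is at most $2d + 1 \leq 2\thresh + 1$. The ratio $6\thresh / (2\thresh + 1) < 3$ then gives the claimed stretch. The only mildly nonroutine step is the $2$-Lipschitz property of the differencing map; everything else is an unpacking of the composition, with all substantive work already absorbed into the construction and analysis of $\hat f_\thresh$.
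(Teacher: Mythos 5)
Your proposal is correct and follows the same route the paper intends: decompose the map as the chain $\hat f_\thresh$ (sensitivity $\leq 3\thresh$ by Theorem~\ref{thm:lip-ext-degree-list}), then $H$ ($1$-Lipschitz by the preceding lemma), then the first-differencing map ($2$-Lipschitz), yielding a composition with $\ell_1$-sensitivity at most $6\thresh$; agreement on $\Gd$ follows from part~1 of Theorem~\ref{thm:lip-ext-degree-list} together with $H(\deglist(G)) = n\cdot P_G$. Your termwise triangle-inequality argument for the $2$-Lipschitz property of the differencing step is exactly the ``additional factor of $2$'' the paper invokes in one sentence.

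One small point deserves tightening. To bound the stretch you need, in addition to the $6\thresh$ upper bound on the extension, a \emph{lower} bound on the Lipschitz modulus of the degree histogram on $(\Gd, \dwire)$. Your ``direct accounting'' only proves the modulus is \emph{at most} $2\thresh+1$ (since cancellations between neighboring bins could a priori make it smaller); for the ratio $6\thresh/(2\thresh+1)$ to upper bound the stretch you must exhibit a witness pair of $\thresh$-bounded node neighbors whose histograms differ by roughly $2\thresh$. This is easy to supply --- e.g., take $G$ with exactly $\thresh$ vertices of degree $\thresh-1$ (and no vertex of degree $\geq \thresh$), and add a new vertex joined to all $\thresh$ of them; the histogram changes by exactly $2\thresh+1$ in $\ell_1$ --- but it should be stated, since without it the ratio could, in principle, be larger than $3$. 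With that witness in hand, your accounting actually gives a marginally cleaner statement than the paper's, since the paper implicitly compares against $2\thresh$ (the degree-list sensitivity) rather than the degree histogram's own modulus $2\thresh+1$.
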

\subsection{Differentially Private Approximations to the Degree
  Distribution}
\label{sec:DPdd}
There are two natural approaches to using the extension of $\deglist$
to release an approximate degree distribution. First, we may add noise
$\thresh/\eps$ to each entry of the sorted degree list, and project
(and/or) remove noise as in \cite{HayLMJ09,KarwaS12,LinKifer13}. The
second is to
release the $\thresh$-bounded degree histogram and add noise. The
error of the first approach is difficult to bound analytically, and so
we adopt the second here.

Given a degree threshold $\thresh$, consider the following mechanism:

\begin{algorithm}[h]\caption{Noisy Degree Histogram$(G,\eps,\thresh)$}
  $Y_i \sim \Lap(6\thresh / \eps)$ for $i=1,...,\thresh$\;
  \Return $\A_\thresh(G) = hist(\hat f_\thresh(G)) + (Y_1,...,Y_\thresh) $\;
\end{algorithm}
(We only need to release the first $\thresh$ entries of $hist$,
since the remaining entries are always 0.)

This mechanism introduces two sources of error: the extension error
$\hat f(G) - \deglist(G)$ and the random noise $\vec Y =
(Y_1,...,Y_d)$. The noise component is easy to understand and
bound. How can we characterize the error introduced by the extension?

\begin{lemma}\label{lem:l1extensionerror} For any graph $G$ and threshold $\thresh$, the extension's $\ell_1$ error
   satisfies
$$n \sum_{i> D} P_G(i)\leq \|\dlext (G) -\deglist(G)\|_1\leq 2n
\sum_{i> D} P_G(i) \, .$$
\end{lemma}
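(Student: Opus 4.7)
The plan is to convert both bounds into statements about the total ``excess degree'' $T(G) := \sum_v \max(0, \deg_v(G) - D)$. By switching the order of summation,
$$n \sum_{i > D} P_G(i) = \sum_{i > D} |\{v : \deg_v(G) \geq i\}| = \sum_v |\{i > D : \deg_v(G) \geq i\}| = T(G),$$
so it suffices to establish $T(G) \leq \|\dlext(G) - \deglist(G)\|_1 \leq 2 T(G)$.

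My first step is to show that $\dlext(G)_i \leq \deglist(G)_i$ for every index $i$. For each vertex $v$, the unit-capacity edges leaving $v_\ell$ in the bipartite middle layer total $\deg_v(G)$, so flow conservation forces $f(s, v_\ell) \leq \deg_v(G)$. Since both sequences are sorted nonincreasingly, this pointwise domination at the vertex level carries over to pointwise domination at each sorted index (a standard rearrangement fact: if $b_v \leq d_v$ for all $v$, then for every $i$ the $i$-th largest of $b$ is at most the $i$-th largest of $d$). Combining with Lemma~\ref{lem:optimality}, which guarantees that $\sum_v f(s, v_\ell)$ equals the value $|f|$ of a maximum $s$-$t$ flow in $\FG(G)$, I obtain
$$\|\dlext(G) - \deglist(G)\|_1 = \sum_v \deg_v(G) - \sum_v f(s, v_\ell) = 2|E(G)| - |f|.$$
The lower bound is then immediate: $\dlext(G)_i \leq D$ by the source-side capacity constraint, so every index with $\deglist(G)_i > D$ contributes at least $\deglist(G)_i - D$ to the $\ell_1$ distance, and summing over such indices yields exactly $T(G)$.

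For the upper bound, the key step is to exhibit a feasible flow in $\FG(G)$ of value at least $2(|E(G)| - T(G))$. I would obtain such a flow from a $D$-bounded subgraph $H \subseteq G$ with $|E(H)| \geq |E(G)| - T(G)$, constructed by greedy edge removal: while some vertex $v$ has $\deg_v > D$, delete any edge incident to $v$. Each removal decreases $T$ by at least $1$, because the contribution of $v$ drops by exactly $1$ (whether its degree was $D+1$ or larger) while no other contribution increases; hence the process terminates after at most $T(G)$ removals, leaving a $D$-bounded $H$. Given $H$, routing one unit of flow along $s \to u_\ell \to v_r \to t$ and one unit along $s \to v_\ell \to u_r \to t$ for each edge $\{u,v\} \in E(H)$ is feasible in $\FG(G)$: the capacities on $s \to v_\ell$ and $v_r \to t$ are used exactly $\deg_v(H) \leq D$ times, and each middle edge carries at most $1$ unit. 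The total value is $2|E(H)| \geq 2|E(G)| - 2T(G)$, so $|f| \geq 2|E(G)| - 2T(G)$ and therefore $\|\dlext(G) - \deglist(G)\|_1 \leq 2T(G)$.

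The main obstacle is the upper bound, since it requires quantitatively relating the max-flow value $|f|$ to the excess $T(G)$; the greedy potential-function argument above is the cleanest route I see, and it is the only step of the proof that goes beyond the capacity constraints and the $\Phi$-optimality already exploited earlier.
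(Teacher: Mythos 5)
Your proof is correct and takes the same route as the paper's: both reduce $n\sum_{i>D}P_G(i)$ to the excess $T(G)$, get the lower bound from the capacity cap $\dlext(G)_i \leq D$, and get the upper bound by constructing a $D$-bounded subgraph with at least $|E(G)|-T(G)$ edges. The only difference is one of rigor: the paper merely "recalls" that $\dlext$ has minimal $\ell_1$ error among $D$-bounded weighted subgraphs, whereas you derive the needed identity $\|\dlext(G)-\deglist(G)\|_1 = 2|E(G)|-|f|$ explicitly from pointwise domination plus Lemma~\ref{lem:optimality}, and you replace the paper's loose "remove $\deg_v - D$ edges for each vertex" by a clean greedy potential argument.
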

\begin{proof}
  Recall that $\hat f(G)$ has minimal $\ell_1$ error $\|\dlext
  (G) - \deglist(G)\|_1$ among all $\thresh$-bounded real vectors that
  are consistent with a weighted graph. In particular, one can
  consider a graph $G'$ which is obtained by removing $\deg_v
  -\thresh$ edges for each vertex $v$ with degree greater than
  $\thresh$. Each edge removal causes an change of 2 in
  $\deglist(G)$ in the $\ell_1$ norm. The number of edges removed is
  $\sum_{v:\ \deg_v> \thresh} (\deg_v-\thresh)\, .$
  An alternative formula for this sum can be obtained by summing over
  degrees instead of vertices:
  $$\sum_{v:\ \deg_v> \thresh} (\deg_v-\thresh) = \sum_{i>\thresh} n P_G(i)
  $$
  (since each vertex $v$ contributes $\max(0,\deg_v-\thresh)$ to the
  sum).  Multiplying by 2 yields the desired upper bound.

  To prove the lower bound, note that the vector $\hat
f_\thresh(G)$ is always less, coordinatewise, than the simple
projection that replaces the degree $\deg_v$ of each vertex $v$ by
$\min(\thresh,\deg_v)$. The $\ell_1$ error of $\hat f(G)$ (or indeed
of any function that projects onto a set of vectors with entries
bounded by $\thresh$) is therefore
at least $\sum_{v:\ \deg_v> \thresh} (\deg_v-\thresh) =
n\sum_{i>\thresh} P_G(i) $.
\end{proof}

Combining the two previous lemmas with the fact that the expected absolute
value of each $Y_i$ is $4\thresh/\eps$, we obtain the following theorem.

\begin{theorem}\label{thm:ell1errorfixedthreshold}
  The expected $\ell_1$ error of algorithm   $\A_\thresh$ on input $G$ is
  at most $\displaystyle 2 n \sum_{i> D} P_G(i) + \frac{6\thresh^2 }{\epsilon} \, .$
\end{theorem}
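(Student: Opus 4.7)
The plan is to apply the triangle inequality to split the expected error of $\A_\thresh(G)$ into two pieces: the deterministic extension error $\|hist(\dlext(G)) - hist(\deglist(G))\|_1$, and the $\ell_1$ norm of the Laplace noise vector $(Y_1,\ldots,Y_\thresh)$. Both pieces should then fall out of the preceding lemmas. The natural target against which to measure error is the ``$\thresh$-truncated'' histogram $hist(\deglist(G))$, whose $i$-th entry equals $n\cdot p_G(i)$ for $i<\thresh$ and $n\cdot P_G(\thresh)$ for $i=\thresh$.

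The noise piece is immediate: each $Y_i\sim\Lap(6\thresh/\eps)$ has $\E[|Y_i|]=6\thresh/\eps$, and summing over the $\thresh$ coordinates contributes exactly $6\thresh^2/\eps$, matching the second term of the claim.

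For the extension piece, where the real content lies, I would feed the bound $\|\dlext(G)-\deglist(G)\|_1\leq 2n\sum_{i>\thresh}P_G(i)$ from Lemma~\ref{lem:l1extensionerror} through the $hist$ map, using the fact (noted just before the theorem) that $hist$ is $O(1)$-Lipschitz in $\ell_1$. Combined with the noise bound, this gives the theorem.

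The subtle point that will need care is pinning down the precise leading constant on the extension term: a naive composition of the $2$-Lipschitzness of $hist$ with Lemma~\ref{lem:l1extensionerror} would give $4n\sum_{i>\thresh}P_G(i)$ rather than the stated $2n\sum_{i>\thresh}P_G(i)$. Recovering the tighter constant should use the one-sided structure of the pair $(\dlext(G),\deglist(G))$: all entries of $\dlext(G)$ are already bounded by $\thresh$, while the ``excess'' mass of $\deglist(G)$ above $\thresh$ collapses into the single top bucket of $hist(\deglist(G))$, so the effective Lipschitz constant of $hist$ restricted to this pair is $1$ rather than $2$.
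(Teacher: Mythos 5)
Your decomposition (extension error through $hist$, plus the $\ell_1$ norm of the Laplace noise) is exactly what the paper does, which is just the one-line remark that one should ``combine the two previous lemmas'' with the per-coordinate noise bound. The noise piece is right: $\E|Y_i| = 6\thresh/\eps$ under the paper's parameterization, giving $6\thresh^2/\eps$ in total. (You may have noticed the paper's preamble to the theorem says the expected absolute value is $4\thresh/\eps$; that is a typo, inconsistent with the $\Lap(6\thresh/\eps)$ in the algorithm and with the stated bound.)

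You are also right that a naive composition of the $2$-Lipschitzness of $hist$ with the upper bound of Lemma~\ref{lem:l1extensionerror} gives $4n\sum_{i>\thresh}P_G(i)$, and that the one-sided structure is what rescues the factor. Your stated mechanism --- that the effective Lipschitz constant of $hist$ on this particular pair is $1$ --- is a correct conclusion, but the intuition as phrased (``the excess mass collapses into the top bucket'') isn't yet a proof; here is a clean way to make it rigorous using exactly the ingredients in the paper. Let $\bar b$ be the sorted list of $\min(\thresh,\deg_v)$. For every $k\le\thresh$ one has $[\min(\thresh,\deg_v)]_k = [\deg_v]_k$, so $H_k(\bar b)=H_k(\deglist(G))$ for $k\le\thresh$, and therefore $hist(\bar b)=hist(\deglist(G))$ exactly (this is the precise sense in which the excess collapses). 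Next, the proof of Lemma~\ref{lem:l1extensionerror} gives $\dlext(G)\le\bar b\le\deglist(G)$ coordinatewise, so $\|\dlext(G)-\deglist(G)\|_1 = \|\dlext(G)-\bar b\|_1 + \|\bar b-\deglist(G)\|_1$. Since $\|\bar b-\deglist(G)\|_1 = n\sum_{i>\thresh}P_G(i)$ and the lemma's upper bound is $2n\sum_{i>\thresh}P_G(i)$, one gets $\|\dlext(G)-\bar b\|_1\le n\sum_{i>\thresh}P_G(i)$. Finally apply the genuine $2$-Lipschitzness of $hist$ to the pair $(\dlext(G),\bar b)$: $\|hist(\dlext(G))-hist(\deglist(G))\|_1 = \|hist(\dlext(G))-hist(\bar b)\|_1 \le 2\|\dlext(G)-\bar b\|_1 \le 2n\sum_{i>\thresh}P_G(i)$. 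So the constant $2$ comes from using both the upper and lower bounds of Lemma~\ref{lem:l1extensionerror} (not from a reduced Lipschitz constant of $hist$), but your overall approach is correct and is in fact more careful than the paper's terse proof.
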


This theorem bounds the error of the algorithm for a given degree
threshold $\thresh$. In the sequel, we show how we can select a
(nearly) optimal threshold differentially privately.

\fi %%% MATCHES DL DETAILS

\section{Exponential Mechanism For Scores With Varying Sensitivity}
\label{sec:newexpmech}

The exponential mechanism of \citet{MT07} is a basic tool for designing
differentially private algorithms. We present here a generalization
for score functions with different sensitivities.

Suppose the data set comes from a universe $U$ equipped with an
neighbor relation (e.g., Hamming or set-difference distance for
standard\anote{is this a good term?} data sets,
or vertex distance on graphs). We assume that the set of possible answers is finite and index it by elements of $[k]$.\snote{Added this sentence to clarify what $k$ is. Can we deal with infinite sets of answers?}
Given a collection of functions $q_1,...,q_k$ from  $U$ to $\re$ and a private data set
$x\in U$, the goal is to minimize $q_i(x)$, that is, to find an index $\hat
\i$ such that $q_{\hat \i}(x)\approx \min_i q_i(x)$.
Define
$$\Delta_i \defeq \max_{x,x'\in U\text{ adjacent}}
|q_i(x)-q_i(x')| \qquad \text{and}\qquad
 \Delta_{max} \defeq\max_i \Delta_i
.$$

The exponential mechanism achieves the following accuracy guarantee:
for every $\beta>0$,
with probability $1-\beta$, the output $\hat \i$ satisfies
$q_{\hat \i}(x)\leq \min_i q_i(x) + \Delta_{max}\cdot
\frac{2\log(k/\beta)}{\eps}$.\snote{Define its efficient
  implementation, ReportNoisyMean and give references.}\anote{Now in
  preliminaries of the full version.}

A limitation of this guarantee is that it depends on the maximum
sensitivity of the score functions $q_i(\cdot)$. In the context of
threshold selection for graph algorithms, such a guarantee is
meaningless for sparse graphs. This poor utility bound is not merely an artifact
of the analysis. The problem is inherent in the algorithm. For
example, consider the setting with $k=2$, where the two score functions have sensitivity
$\Delta_1=1$ and $\Delta_2\gg 1$. Further, consider a data set $x$ with $q_1(x)=0$ and
$q_2(x)=\Delta_2/\eps$. On input $x$, the exponential mechanism will select $\hat \i
=2$ with constant probability, resulting in an excess error of
$\Delta_2/\eps$, which may be arbitrarily larger than $\Delta_1$.

In contrast, we give an algorithm whose excess error scales
with
%depends on
\snote{Is proportional to?}\anote{better?} the
sensitivity of the \emph{optimal} score function $\Delta_{i^*}$, where
$i^*=\argmin_i q_i(x)$. Our mechanism requires as input an upper bound $\beta$ on the desired probability of
a bad outcome; the algorithm's error guarantee depends on this $\beta$.

\newtheorem*{detailedEMThm}{Theorem~\ref{thm:genEM}}
\begin{detailedEMThm}[Formal]
  For all parameters $\beta\in (0,1)$, $\eps>0$, the
  generalized exponential mechanism (Algorithm~\ref{alg:genEM}) is
  $(\eps,0)$-differentially private (with respect to the neighbor relation on $U$). For all inputs $x$, the output $\hat \i$ satisfies\snote{Consider changing notation for scores to $q_i(x)$.}\anote{done.}
  \begin{equation}
  q_{\hat \i}(x) \leq \min_i \paren{ q_i(x) +  \Delta_i \cdot
  \tfrac{4\log(k/\beta)}{\epsilon} }\, .\label{eq:genEM}
\end{equation}

\end{detailedEMThm}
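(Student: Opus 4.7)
The plan is to prove the two parts of the theorem --- privacy and utility --- by adapting the standard exponential-mechanism analysis to the setting of per-function sensitivities $\Delta_i$. Throughout, I will let $w_i(x)$ denote the sampling weights used by the generalized mechanism, so that $\Pr[\hat \i = i] = w_i(x)/Z(x)$ with $Z(x) = \sum_j w_j(x)$, and I will exploit the fact that the weights are calibrated so that changing $x$ to a neighbor shifts the inner exponent of each $w_i$ by at most $\eps/2$ uniformly in $i$.

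For the privacy part, I would verify the individual Lipschitz condition $w_i(x)/w_i(x') \in [e^{-\eps/2}, e^{\eps/2}]$ for every index $i \in [k]$ and every pair of neighbors $x, x'$. This is immediate from the calibration, since $|q_i(x) - q_i(x')| \leq \Delta_i$ and the exponent of $w_i$ is normalized by $\Delta_i$. A standard calculation then gives $Z(x)/Z(x') \in [e^{-\eps/2}, e^{\eps/2}]$ (as a positively weighted average of bounded ratios), so each normalized probability $\Pr[\hat \i = i]$ changes by a factor at most $e^{\eps}$ between $x$ and $x'$, yielding $(\eps, 0)$-differential privacy.

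For the utility part, I would fix the input $x$, set $T = 4\log(k/\beta)/\eps$, and pick any $i^* \in \argmin_i (q_i(x) + \Delta_i T)$. The bound~\eqref{eq:genEM} is equivalent to showing that the ``bad'' set $B = \{i : q_i(x) > q_{i^*}(x) + \Delta_{i^*} T\}$ is selected with probability at most $\beta$. Since $i^* \notin B$, I would write
\begin{displaymath}
\Pr[\hat \i \in B] \;=\; \frac{\sum_{i \in B} w_i(x)}{Z(x)} \;\leq\; \sum_{i \in B} \frac{w_i(x)}{w_{i^*}(x)},
\end{displaymath}
and then establish the per-index bound $w_i(x)/w_{i^*}(x) \leq \beta/k$ for every $i \in B$. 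Summing over the at most $k$ bad indices yields $\Pr[\hat \i \in B] \leq \beta$, which is equivalent to the desired inequality \eqref{eq:genEM}.

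The main obstacle is the per-index ratio bound when $\Delta_i$ differs substantially from $\Delta_{i^*}$. In the uniform-sensitivity setting, $q_i(x) - q_{i^*}(x) > \Delta \cdot T$ on every $i \in B$, so the standard weight $\exp(-\eps q_i(x)/(2\Delta))$ already produces a ratio at most $\exp(-\eps T/2) = (\beta/k)^2$. With heterogeneous $\Delta_i$, however, the analogous weight $\exp(-\eps q_i(x)/(2\Delta_i))$ gives a ratio proportional to $\exp\bigl(-\tfrac{\eps}{2}(q_i/\Delta_i - q_{i^*}/\Delta_{i^*})\bigr)$, which need not be small when $\Delta_i \gg \Delta_{i^*}$ --- this is precisely the failure mode of the vanilla exponential mechanism exhibited in the motivating two-index example earlier in the section. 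The generalized mechanism circumvents this by using a score that combines $q_i$ with a sensitivity penalty proportional to $\Delta_i T$, measured relative to the best competitor so that the weight on $i^*$ serves as a concrete benchmark. The technical heart of the argument is to show that this composite score has sensitivity at most $1$ (so that the privacy analysis above applies with constant $\eps/2$ in the exponent) while evaluating to at least $T/2 = 2\log(k/\beta)/\eps$ on every $i \in B$; the factor of $4$ in $T$ (versus the $2$ of the vanilla exponential mechanism) absorbs a constant-factor slack incurred in combining these two effects.
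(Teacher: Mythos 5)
Your proposal matches the paper's high-level architecture: both run the ordinary exponential mechanism on the sensitivity-$1$ normalized score $s(i;x)=\max_{j}\frac{(q_i(x)+t\Delta_i)-(q_j(x)+t\Delta_j)}{\Delta_i+\Delta_j}$ with $t=2\log(k/\beta)/\eps$, and your privacy argument is the standard one the paper also invokes. The utility argument, however, has a concrete gap in the choice of benchmark index. You divide by $w_{i^*}(x)$ where $i^*\in\argmin_i\bigl(q_i(x)+\Delta_i T\bigr)$ with $T=4\log(k/\beta)/\eps=2t$. But the normalized score does not vanish at this $i^*$: since $i^*$ minimizes the score penalized by $\Delta_i T$ rather than by $\Delta_i t$, one only gets $0\le s(i^*;x)\le t$, and the upper end is attainable. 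Your (correct) observation that $s(i;x)>t$ for $i\in B$ therefore does not yield $w_i(x)/w_{i^*}(x)\le\beta/k$, because the relevant exponent gap $s(i;x)-s(i^*;x)$ can be arbitrarily small. For instance, take $k=3$, $\Delta_1=\Delta_3=1$, $\Delta_2=M\gg1$, $q_1=0$, $q_2=2-2M+\eta$ (tiny $\eta>0$), $q_3=2.1$, and $t=1$: then $i^*=1$, $3\in B$, but $s(1;x)=\frac{M-1-\eta}{M+1}\to 1$ while $s(3;x)=1.05$, so $s(3;x)-s(i^*;x)\to 0.05$ and $w_3(x)/w_{i^*}(x)$ stays bounded away from $\beta/k$.

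The repair is to benchmark against $\tilde\i\in\argmin_i\bigl(q_i(x)+\Delta_i t\bigr)$, which in general differs from $i^*$ and is exactly the index where every term inside the $\max$ defining $s(\tilde\i;x)$ is nonpositive, so $s(\tilde\i;x)=0$. Then $Z(x)\ge w_{\tilde\i}(x)$ and $w_i(x)/w_{\tilde\i}(x)\le e^{-\eps t/2}=\beta/k$ for all $i\in B$, and your bad-set union bound goes through. The paper's own proof avoids the bad-set decomposition entirely: it applies the usual exponential-mechanism utility guarantee to $s(i;\cdot)$ directly (using $\min_i s(i;x)=s(\tilde\i;x)=0$) to conclude $s(\hat\i;x)\le t$ with probability $1-\beta$, and then, for every $j$, multiplies $s(\hat\i;x)\ge\frac{(q_{\hat\i}(x)+t\Delta_{\hat\i})-(q_j(x)+t\Delta_j)}{\Delta_{\hat\i}+\Delta_j}$ through by $\Delta_{\hat\i}+\Delta_j$ and substitutes $t$, obtaining $q_{\hat\i}(x)\le q_j(x)+\Delta_j\cdot\frac{4\log(k/\beta)}{\eps}$ simultaneously for all $j$. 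That route is cleaner, but once the benchmark is corrected your approach is an equivalent recasting of the same argument.
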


In particular, our algorithm is competitive with the sensitivity of the true
minimizer $i^*=\argmin_iq_i(x)$ (since the right-hand side of
\eqref{eq:genEM} is at most $q_{i^*}(x)+\Delta_{i^*} \cdot
  \tfrac{4\log(k/\beta)}{\epsilon}$). In the case that all the
  $\Delta_i's$ are the same, our algorithm simplifies to running the usual
  exponential mechanism with $\epsilon'=\epsilon/2$; this justifies
  the ``generalized'' name.

\begin{algorithm}[tH]\label{alg:genEM}
  \caption{Generalized Exponential Mechanism}
  \KwIn{Data set $x$ from universe $U$, parameters $\beta \in
    (0,1)$ and $\eps>0$, \\
score
    functions $q_1,...,q_k$ from $U$ to $\re$.}
  Set $t = \frac {2\log(k/\beta)}{\eps}$\;
\For{$i=1$ to $k$}{
  $\Delta_i \defeq \max_{x,x'\in U\text{ adjacent}}
|q_i(x)-q_i(x')| $.  \tcc{An upper bound on $\Delta_i$ suffices. \\ Generally, $\Delta_i$
  is known exactly.}}
\For{$i=1$ to $k$}{
  $\displaystyle s(i) \gets \max_{j\in [k]} \frac{(q_i(x) + t \Delta_i)
    -(q_j(x) + t\Delta_j)}{\Delta_i +\Delta_j}$\tcc{$s(i)$ has
    sensitivity at most 1.}
}
  \Return $\hat \i \gets ExponentialMechanism(s(i), i\in [k],\epsilon)$\;
\end{algorithm}

The intuition behind the algorithm is as follows: since the score
function $q$ has different sensitivity for each $i$, we would like to
find an alternative score function which is less sensitive. One simple
score would be to compute, for each $j$, the \emph{distance}, in the space of data sets, from
the input $x$ to the
nearest data set $y$ in which $q_j(y)$ is smallest among the values
$\{q_i(y)\}_{i\in [k]}$ (this idea is inspired by the GWAS algorithms
of \cite{JohnsonS13}. This score has two major drawbacks: first,
it is hard to compute in general; second, more subtly, it will tend to
favor indices $j$ with very high sensitivity (since they become
optimal with relatively few changes to the data).

Instead, we use a substitute measure which is both easy to compute
(given the scores $q_i(x)$ for $i\in [k]$) and appropriately penalizes
scores with large sensitivity. Given a value $t>0$ (to be set later), define the \emph{normalized score} as
\begin{equation}
s(i;x) = \max_{j \in [k]} \frac{(q_i(x) + t \Delta_i)
    -(q_j(x) + t\Delta_j)}{\Delta_i +\Delta_j}
= \max_{j \in [k]} \paren{\frac{q_i(x)
    -q_j(x) }{\Delta_i +\Delta_j} + t \cdot \frac{\Delta_i - \Delta_j}{\Delta_i +\Delta_j} }
\, .\label{eq:s}
\end{equation}

The first term inside the maximum on the right-hand side is an
approximation to the Hamming distance from $x$ to the nearest data set
$y$ where score $q_j(\cdot)$ becomes smaller than $q_i(\cdot)$. The
second term (containing $t$ and independent of the data set) penalizes indices $i$ with larger
sensitivity.

We obtain an index $\hat \i$ by running the usual exponential mechanism
on the normalized scores $s(i)$.
Our first lemma bounds the sensitivity of the normalized score.

\begin{lemma}
  For each $i$, and for any $t\in \re$, the normalized score
  $s(i;\cdot)$ has
  sensitivity at most 1.
\end{lemma}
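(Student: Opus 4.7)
The plan is to prove the sensitivity bound by exploiting the alternative form of $s(i;x)$ given in equation~\eqref{eq:s}, namely
$$s(i;x) = \max_{j \in [k]} g_{i,j}(x), \qquad \text{where}\qquad g_{i,j}(x) \defeq \frac{q_i(x) - q_j(x)}{\Delta_i + \Delta_j} + t \cdot \frac{\Delta_i - \Delta_j}{\Delta_i + \Delta_j}.$$
Since the sensitivity of a maximum of finitely many functions is bounded by the maximum of their individual sensitivities, it suffices to show that each $g_{i,j}(\cdot)$ has sensitivity at most $1$, and then take the maximum over $j \in [k]$.

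For fixed $i$ and $j$, the second summand in $g_{i,j}(x)$ depends only on the (data-independent) quantities $\Delta_i, \Delta_j$, and $t$, so it contributes nothing to sensitivity. For the first summand, I would fix an arbitrary pair of adjacent datasets $x, x' \in U$ and apply the triangle inequality together with the definitions of $\Delta_i$ and $\Delta_j$:
$$\bigl|(q_i(x) - q_j(x)) - (q_i(x') - q_j(x'))\bigr| \;\leq\; |q_i(x)-q_i(x')| + |q_j(x)-q_j(x')| \;\leq\; \Delta_i + \Delta_j.$$
Dividing by $\Delta_i + \Delta_j$ yields $|g_{i,j}(x) - g_{i,j}(x')| \leq 1$.

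Finally, I would invoke the standard observation that if $h_1,\ldots,h_k$ are real-valued functions on $U$ each with sensitivity at most $c$, then $h(x) := \max_j h_j(x)$ also has sensitivity at most $c$ — for adjacent $x,x'$, pick $j^\star \in \argmax_j h_j(x)$ and note $h(x) - h(x') \leq h_{j^\star}(x) - h_{j^\star}(x') \leq c$, and symmetrically in the other direction. Applying this with $c=1$ to the family $\{g_{i,j}\}_{j \in [k]}$ gives the claim.

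There is no real obstacle: the trick is purely algebraic, and the appearance of $\Delta_i + \Delta_j$ in the denominator is precisely what is needed to absorb the worst-case changes of $q_i$ and $q_j$ simultaneously under a single neighbor step. The role of the additive penalty $t(\Delta_i - \Delta_j)/(\Delta_i+\Delta_j)$ is not to help with sensitivity at all — it is what will later force the exponential mechanism to favor low-sensitivity indices — so it can be safely ignored in this lemma.
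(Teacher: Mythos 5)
Your proof is correct and follows the same approach as the paper's: bound the sensitivity of each inner term $\frac{q_i(x)-q_j(x)}{\Delta_i+\Delta_j}$ by $1$ (via the definitions of $\Delta_i,\Delta_j$), note that the $t$-dependent term is data-independent, and conclude via the fact that a pointwise maximum of sensitivity-$1$ functions has sensitivity at most $1$. You simply spell out the triangle-inequality step and the max-preserves-sensitivity observation that the paper leaves implicit.
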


\begin{proof}
  First, fix indices $i,j\in [k]$. The ratio $\frac{q_i(x) -q_j(x)
  }{\Delta_i +\Delta_j}$ has sensitivity at most 1 since $q_i(\cdot)$
  and $q_j(\cdot)$ can vary by at most $\Delta_i$ and $\Delta_j$,
  respectively, when $x$ changes to an adjacent data set. As long as
  $t$ does not depend on $x$, the function $s(i;\cdot)$ is a maximum
  of sensitivity-1 functions, which means its sensitivity is at most
  1.
\end{proof}

\begin{proof}[Proof of Theorem~\ref{thm:genEM}]
  The algorithm sets $t = 2\ln(k/\beta)/\eps$, regardless of the data
  $x$. Since the normalized scores have sensitivity at most 1, the
  application of the usual exponential mechanism (or its more
  efficient alternative, ``report noisy min'') is
  $(\eps,0)$-differentially private.

  To analyze utility, let $\tilde \i$ denote the index that minimizes
  the penalized score $q_i(x)+ t \Delta_i$. Then
 $$s(\tilde
  \i;x) =0,$$
  since each of the terms in the maximum defining $s$ is nonpositive
  for $\tilde \i$ (and the term for $j=\tilde \i$ is 0). By the usual
  analysis of the exponential mechanism, we have that with probability
  at least $1-\beta$,
  $$s(\hat \i;x) \leq  \underbrace{s(\tilde \i;x)}_{0} + \frac{2
    \ln(k/\beta)}{\eps}\, .$$
  Now consider an arbitrary index $j$. Since $s(\hat \i;x)$ is at
  least $\frac{(q_{\hat \i}(x) + t \Delta_{\hat \i})
    -(q_j(x) + t\Delta_j)}{\Delta_{\hat\i} +\Delta_j}$, we can multiply by
  $\Delta_{\hat \i} +\Delta_j$ to obtain:
  \begin{eqnarray*}
    q_{\hat \i}(x)
    &\leq& q_j(x) + t\paren{\Delta_j-\Delta_{\hat\i}} +
    \tfrac{2 \ln(k/\beta)}{\eps}\cdot \paren{\Delta_{\hat\i} +
      \Delta_j} \\
    &=& q_j(x) +\Delta_{j} \paren{\tfrac{2 \ln(k/\beta) }{\eps} +
    t} + \Delta_{\hat\i}\paren{\tfrac{2 \ln(k/\beta) }{\eps}-t} \, .
  \end{eqnarray*}
  Substituting $t=\tfrac{2 \ln(k/\beta) }{\eps}$ yields the desired
  result.
\end{proof}

\ifnum\full =1 %%% MATCHES EM DETAILS
\subsection{Threshold Selection for Lipschitz Extensions}
\label{sec:threshold}

Suppose we have a collection of candidate functions\snote{Why not just
  $f_i$?}\anote{To match notation elsewhere, where $\thresh$ is the subscript
for the extension.} $\{f_{\Delta_i}\}_{i \in
  [k]}$ for approximation a function $f$, each with sensitivity $\Delta_i$. Moreover, the approximation functions
are all underestimates, that is,
$$f_{\Delta_i}(G) \leq f(G) \qquad \text{ for all } G\text{ and
}\Delta_i\, .$$
Let $$err(\Delta_i) = |f(G)-f_\Delta(G)| +
\Delta/\eps.$$  This is a simple proxy for the (expected) error in
approximating $f(G)$ that one gets by using $f_{\Delta}(G) +
\Lap(\Delta/\eps)$. It exaggerates the expected error by a factor of
at most 2, since the expected error is at most $err(\Delta)$ by the
triangle inequality, and at least $\min(|f(G)-f_\Delta(G)| ,
\Delta/\eps))$.

The functions $err(\Delta_i)$ don't necessarily have bounded
sensitivity (since we make no assumption on how $f$ varies).
However, the differences
$err(\Delta_i)-err(\Delta_j)$ do have sensitivity at most
$\Delta_i+\Delta_j$, which allows us to employ the generalized
exponential mechanism (alternatively, since the functions
$f_{\Delta_i}$ are all underestimates, we may use $q_i(x) =
-f_\Delta(G)+\Delta/\eps$).

\begin{corollary} Running the generalized exponential mechanism with
  score $q_i(x) = err(\Delta_i)$ and sensitivities $\Delta_i$ is
  differentially private and
  yields a threshold $\hat \Delta $ such that, for every
  $\Delta^* \in \{\Delta_i\}$, with probability at
  least $1-\beta$,
  $$err(\hat \Delta)\leq err(\Delta^*) + \frac{4\Delta^*
    \log(k/\beta)}{\eps}
  \leq err(\Delta^*) \cdot O\paren{\ln  \paren{\frac k  \beta }}
  \, .$$
\end{corollary}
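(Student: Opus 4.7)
The plan is to reduce to Theorem~\ref{thm:genEM} by replacing $err(\Delta_i)$ with a shifted score that has bounded global sensitivity. Since each $f_{\Delta_i}$ is an underestimate of $f$, we have $err(\Delta_i) = f(G) - f_{\Delta_i}(G) + \Delta_i/\eps$. I would define the surrogate scores $q_i(x) \defeq -f_{\Delta_i}(G) + \Delta_i/\eps = err(\Delta_i) - f(G)$. The shift by $-f(G)$ is independent of $i$, so every difference $q_i(x)-q_j(x)$ equals $err(\Delta_i)-err(\Delta_j)$, which means the normalized scores $s(i;x)$ in Algorithm~\ref{alg:genEM} coincide whether we feed it $q_i$ or $err(\Delta_i)$; in particular, the algorithm's output distribution is identical under the two inputs. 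Since the constant offset $\Delta_i/\eps$ does not affect sensitivity, $q_i$ inherits global sensitivity at most $\Delta_i$ from $f_{\Delta_i}$, so privacy follows immediately from Theorem~\ref{thm:genEM}.

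For utility, I would fix an arbitrary target $\Delta^*=\Delta_{i^*}$ and apply the utility guarantee of Theorem~\ref{thm:genEM} to the surrogate scores: with probability at least $1-\beta$,
\[
q_{\hat \i}(x) \;\le\; q_{i^*}(x) \;+\; \Delta^* \cdot \tfrac{4\log(k/\beta)}{\eps}\,.
\]
Adding $f(G)$ to both sides translates back to $err(\hat \Delta) \le err(\Delta^*) + \tfrac{4\Delta^*\log(k/\beta)}{\eps}$, yielding the first inequality in the corollary.

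For the second (multiplicative) inequality, I would use the crude but sufficient observation $err(\Delta^*) \ge \Delta^*/\eps$, which is immediate from the definition of $err$ (since the $\Delta^*/\eps$ summand is present and the $|f(G)-f_{\Delta^*}(G)|$ summand is nonnegative). This gives $\tfrac{4\Delta^*\log(k/\beta)}{\eps} \le 4\log(k/\beta)\cdot err(\Delta^*)$, so $err(\hat\Delta)\le \bigl(1+4\log(k/\beta)\bigr)\cdot err(\Delta^*) = O\!\paren{\log(k/\beta)}\cdot err(\Delta^*)$.

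The only conceptual obstacle is that $err$ itself has unbounded global sensitivity (because $f$ may vary arbitrarily with $G$), so we cannot invoke Theorem~\ref{thm:genEM} with $err$ as the score directly. The constant-shift reduction above sidesteps this entirely: the mechanism only ever uses differences of scores, and those differences are identical under the shift while simultaneously being sensitivity-controlled through the surrogates $q_i$. Everything else is a direct substitution into the guarantee of Theorem~\ref{thm:genEM}.
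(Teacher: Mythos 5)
Your proof is correct and follows essentially the same route the paper sketches: the paper itself notes that although $err(\Delta_i)$ has unbounded sensitivity, the differences $err(\Delta_i)-err(\Delta_j)$ have sensitivity at most $\Delta_i+\Delta_j$, and parenthetically offers exactly your surrogate $q_i(x) = -f_{\Delta_i}(G)+\Delta_i/\eps$; your write-up just makes the shift-invariance of the normalized scores explicit and carries out the substitution carefully, including the trivial bound $err(\Delta^*)\ge\Delta^*/\eps$ for the multiplicative form.
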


\paragraph{Selecting from a continuous interval of thresholds}
The extensions we consider satisfy a further guarantee of
\emph{monotonicity}, namely, if $\Delta_1<\Delta_2$, we have
\begin{equation}
f_{\Delta_1}(G) \leq  f_{\Delta_2}(G) \leq f(G)\, .\label{eq:monodelta}
\end{equation}
If we want to select among an interval $[1,\Delta_{max}]$ of possible
thresholds, then this guarantee ensures that selecting among the
powers of a fixed constant (e.g., $1, 2, 4, ..., 2^{\lfloor
  \log_2(\Delta_{max})\rfloor}$) will still give a multiplicative
approximation to be best choice of $\Delta$, since for all values of $\Delta\geq 0$,
$$err(\Delta) \leq 2 err(\Delta/2).$$

We obtain the following proposition.

\begin{prop}\label{prop:selectGen}
  If the collection of functions $\{f_{\Delta}\}_{\Delta \in
    [1,\Delta_{max}]}$ forms a monotone family of approximations to
  $f$ (as in \eqref{eq:monodelta}), then applying the generalized
  exponential mechanism to the powers of $2$ in the interval $[1,\Delta_{max}]$ yields a threshold $\hat \Delta$ such that, for every
  $\Delta^*\in [1,\Delta_{max}]$, with probability at least $1-\beta$,
  $$err(\hat \Delta) = err(\Delta^*) \cdot
  O\paren{\ln\ln(\Delta_{max})  + \ln \frac 1\beta }
  \, .$$
%  where  $\displaystyle \Delta^* = \min_{\Delta\in [1,\Delta_{max}]} err(\Delta)$.
\end{prop}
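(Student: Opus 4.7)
The plan is to apply the preceding corollary to a discrete set of candidate thresholds that covers $[1,\Delta_{max}]$ up to a factor of 2, and then use the monotonicity property \eqref{eq:monodelta} to lift the guarantee from the discrete set to an arbitrary $\Delta^* \in [1,\Delta_{max}]$. Concretely, I would take
\[
S \;=\; \bigl\{\,2^i \;:\; 0 \leq i \leq \lfloor \log_2 \Delta_{max}\rfloor\,\bigr\},
\]
so that $k \defeq |S| = O(\log \Delta_{max})$, and feed $\{f_\Delta\}_{\Delta \in S}$ with their sensitivities into the generalized exponential mechanism. Privacy is immediate from Theorem~\ref{thm:genEM}.

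For utility, the preceding corollary yields, with probability at least $1-\beta$, that simultaneously for every $\Delta\in S$,
\[
err(\hat \Delta) \;\leq\; err(\Delta) + \tfrac{4\Delta \log(k/\beta)}{\eps}\,.
\]
The first key observation is that $err(\Delta)\geq \Delta/\eps$ directly from the definition $err(\Delta)=|f(G)-f_\Delta(G)|+\Delta/\eps$; hence $\tfrac{4\Delta\log(k/\beta)}{\eps}\leq 4\log(k/\beta)\cdot err(\Delta)$, and the inequality simplifies to
\[
err(\hat \Delta) \;\leq\; err(\Delta)\cdot\bigl(1 + 4\log(k/\beta)\bigr) \;=\; err(\Delta)\cdot O\bigl(\log(k/\beta)\bigr)
\]
for every $\Delta\in S$.

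The second key step is to relate the error at a discretized threshold to the error at an arbitrary $\Delta^* \in [1,\Delta_{max}]$. Given such a $\Delta^*$, let $\tilde\Delta \in S$ be the smallest power of 2 with $\tilde\Delta \geq \Delta^*$ (taking $\tilde\Delta=2^{\lfloor\log_2\Delta_{max}\rfloor}$ in the degenerate case $\Delta^* > 2^{\lfloor\log_2\Delta_{max}\rfloor}$, which is within a factor of 2 of $\Delta^*$). By monotonicity \eqref{eq:monodelta}, the approximation error $|f(G)-f_\Delta(G)|$ is nonincreasing in $\Delta$, so $|f(G)-f_{\tilde\Delta}(G)|\leq |f(G)-f_{\Delta^*}(G)|$; and since $\tilde\Delta\leq 2\Delta^*$, the noise term satisfies $\tilde\Delta/\eps \leq 2\Delta^*/\eps$. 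Adding the two bounds gives $err(\tilde\Delta)\leq 2\,err(\Delta^*)$. Combining this with the previous display applied to $\Delta=\tilde\Delta$ yields
\[
err(\hat\Delta)\;\leq\; 2\,err(\Delta^*)\cdot O\bigl(\log(k/\beta)\bigr) \;=\; err(\Delta^*)\cdot O\bigl(\log\log \Delta_{max} + \log\tfrac{1}{\beta}\bigr),
\]
using $\log k = O(\log\log \Delta_{max})$. This is the desired bound.

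The only subtle part of the argument is isolating exactly why discretizing by powers of 2 suffices: it requires both directions of the monotone behavior in \eqref{eq:monodelta} (the approximation error shrinking in $\Delta$ and the noise growing linearly in $\Delta$) to combine into the ``factor-2'' slack $err(\tilde\Delta)\leq 2\,err(\Delta^*)$. Everything else is a direct invocation of the corollary and bookkeeping on the size of $S$.
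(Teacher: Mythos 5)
Your proposal follows the same route the paper (only) sketches: invoke the preceding corollary on the grid of powers of $2$, observe $err(\Delta)\geq\Delta/\eps$ to turn the additive $\frac{4\Delta\log(k/\beta)}{\eps}$ term into a multiplicative $O(\log(k/\beta))$ factor, and use monotonicity \eqref{eq:monodelta} to transfer the guarantee from the grid to an arbitrary $\Delta^*\in[1,\Delta_{max}]$ via $err(\tilde\Delta)\leq 2\,err(\Delta^*)$ (the paper phrases the same fact as $err(\Delta)\leq 2\,err(\Delta/2)$). The main line of reasoning is correct and fills in the bookkeeping the paper omits.

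One small loose end, which your write-up tries to address but does not resolve: in the ``degenerate case'' $\Delta^*\in\bigl(2^{\lfloor\log_2\Delta_{max}\rfloor},\Delta_{max}\bigr]$ you take $\tilde\Delta=2^{\lfloor\log_2\Delta_{max}\rfloor}<\Delta^*$. Your step ``$|f(G)-f_{\tilde\Delta}(G)|\leq|f(G)-f_{\Delta^*}(G)|$'' then points the wrong way: monotonicity \eqref{eq:monodelta} gives $f_{\tilde\Delta}(G)\leq f_{\Delta^*}(G)\leq f(G)$, so the approximation error at the \emph{smaller} threshold $\tilde\Delta$ is at least that at $\Delta^*$, and there is no a priori bound on how much larger it can be. The argument genuinely needs a grid point at or above $\Delta^*$, i.e.\ the factor-of-$2$ slack $\Delta^*\leq\tilde\Delta\leq 2\Delta^*$. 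The clean fix is to include $2^{\lceil\log_2\Delta_{max}\rceil}$ in $S$ (or, equivalently, replace $\lfloor\cdot\rfloor$ by $\lceil\cdot\rceil$ in the grid), which adds at most one candidate and does not change $k=O(\log\Delta_{max})$. The paper's own text (and Proposition~\ref{prop:selectDH}) uses the $\lfloor\cdot\rfloor$ grid and glosses over this same corner case, so you are in good company, but since you explicitly flagged the degenerate case you should state the correct fix rather than an incorrect justification.
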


This generalizes and improves on the result of \citet{ChenZ13}, who gave
a method for selecting a sensitivity threshold that was specific
to their Lipschitz extensions and  within a $\log(n)$
multiplicative factor (as opposed to $\log \log n$) of the optimal error.

\subsection{Selecting a Threshold for the Degree Distribution}

Consider the algorithm for releasing the degree distribution discussed
in Section~\ref{sec:degree-list-extension}. Recall that the
algorithm's $\ell_1$ error is at
most
\begin{equation}
err(\thresh) \defeq \|\dlext(G) -\deglist(G)\|_1 + 6\thresh^2/\eps\, .\label{eq:histerror}
\end{equation}

This error function is closely related to the error of the
approximation to the number of edges from
\citet{KNRS13}. Specifically, let $g(G)$ denote the number of edges in
$G$, and $g_\thresh$ denote the Lipschitz extension of $g$ from $\Gd$
to $\G$. Then $$g_\thresh(G) = \|\dlext(G)\|_1
\qquad
\text{and}\qquad
g(G) -
g_\thresh(G) = \|\dlext(G) -\deglist(G)\|_1 \, .$$

We can therefore use the process above for selecting a threshold for a
one-dimensional function to select a threshold for releasing the
degree distribution.

\begin{prop} \label{prop:selectDH}
  Given a graph $G$ and parameter $\eps$, let $\thresh^* = \argmin_{\thresh \in [1,n]} err(\thresh)$.
  Applying the generalized
  exponential mechanism with $q_i(G)= err(2^i)$, for $i\in
  \{1,2,4,...,2^{\lfloor \log_2(n)\rfloor}\}$ is
  $(\eps,0)$-differentially private and yields a threshold $\hat
  \thresh$ such that, for every $\Delta^*\in [1,\Delta_{max}]$, with
  probability at least $1-\beta$,
  $$E \paren{err(\hat \thresh)} \leq
  2err(\thresh^*) + \frac{8\thresh^* \ln (\ln (n) /\beta)}{\eps}
=
 err(\thresh^*) \cdot O\paren{\ln\ln(n) + \ln \frac 1\beta }
\, .$$
\end{prop}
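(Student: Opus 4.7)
The plan is to invoke the generalized exponential mechanism (Theorem~\ref{thm:genEM}) on the $k = \lfloor\log_2 n\rfloor+1 = O(\log n)$ dyadic candidate thresholds $\{2^0, 2^1,\ldots\}$ and then pay a constant factor to pass from dyadic thresholds to the continuous optimum $\thresh^*$.

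For privacy, the natural scores $q_i(G)=err(2^i)$ do not have individually bounded global sensitivity, since $err(\thresh) = g(G) - g_\thresh(G) + 6\thresh^2/\eps$ contains the edge count $g(G)$, whose node-sensitivity is $n$. The key observation (anticipated at the end of Section~\ref{sec:threshold}) is that $g(G)$ is independent of $i$, so running the generalized exponential mechanism on the shifted scores $\tilde q_i(G) = -g_{2^i}(G) + 6\cdot 4^i/\eps$ produces an identically distributed output. These shifted scores have sensitivities $\Delta_i = O(2^i)$ by the edge-count extension of~\cite{KNRS13}, so Theorem~\ref{thm:genEM} applies directly and yields $\eps$-node-differential privacy.

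For utility, Theorem~\ref{thm:genEM} gives, with probability at least $1-\beta$, the bound $\tilde q_{\hat\i}(G) \leq \min_i \paren{\tilde q_i(G) + \Delta_i \cdot \tfrac{4\log(k/\beta)}{\eps}}$. Adding $g(G)$ to both sides converts this to $err(2^{\hat\i}) \leq \min_i\paren{err(2^i) + O\paren{\tfrac{2^i\log\log n}{\eps}}}$. To compare with the continuous optimum, I would take the unique $i^*$ with $\thresh^* \leq 2^{i^*} \leq 2\thresh^*$. The extension error $\|\dlext(G)-\deglist(G)\|_1$ is non-increasing in $\thresh$ (Lemma~\ref{lem:l1extensionerror}), so it can only decrease on passing from $\thresh^*$ to $2^{i^*}$; meanwhile the noise term grows by at most a factor of $4$. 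Hence $err(2^{i^*}) \leq 4\,err(\thresh^*)$ and $\Delta_{i^*} = O(\thresh^*)$. Substituting $i=i^*$ yields $err(\hat\thresh) \leq 4\,err(\thresh^*) + O\paren{\tfrac{\thresh^*\log\log n}{\eps}}$, which matches the claimed multiplicative form since $err(\thresh^*) \geq 6(\thresh^*)^2/\eps$ implies $\thresh^*/\eps = O(err(\thresh^*))$ for $\thresh^*\geq 1$.

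The main obstacle is really the privacy reduction: $err$ as a score function has unbounded sensitivity, and only the shift-by-$g(G)$ trick rescues the direct application of Theorem~\ref{thm:genEM}. A minor secondary point is that the quadratic (rather than linear) growth of the noise term produces the doubling inequality $err(\Delta)\leq 4\,err(\Delta/2)$ rather than the factor-of-$2$ version used in the proof of Proposition~\ref{prop:selectGen}, but only absolute constants differ and the $O(\log\log n + \log(1/\beta))$ asymptotic is unchanged.
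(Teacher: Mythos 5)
Your proof is correct and follows essentially the same route the paper takes: the shift by $g(G)$ to obtain scores with finite sensitivity is exactly the alternative formulation the paper offers at the end of Section~\ref{sec:threshold} ($q_i(x) = -f_\Delta(G) + \Delta/\eps$), the sensitivity bound $\Delta_i = O(2^i)$ comes from the edge-count extension, and the passage from the dyadic grid to the continuous $\thresh^*$ uses the monotone-family argument of Proposition~\ref{prop:selectGen}. You also correctly caught that the quadratic noise term forces a doubling constant of $4$ rather than the $2$ used in Proposition~\ref{prop:selectGen} (so the statement's leading constant of $2\,err(\thresh^*)$ is optimistic), which does not affect the $O(\log\log n + \log(1/\beta))$ multiplicative form.

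One small misattribution: monotonicity of the extension error $\|\dlext(G)-\deglist(G)\|_1$ in $\thresh$ does not follow from Lemma~\ref{lem:l1extensionerror}, which only sandwiches the error between $n\sum_{i>D}P_G(i)$ and $2n\sum_{i>D}P_G(i)$; since the upper bound at a larger threshold need not be below the lower bound at a smaller one, the sandwich is too loose to conclude monotonicity. The correct justification is the monotonicity property \eqref{eq:monodelta} applied to $g_\thresh(G)=\|\dlext(G)\|_1$: raising $\thresh$ only increases source and sink capacities in $\FG(G)$, so the max-flow value (which, by Lemma~\ref{lem:optimality}, equals $\|\dlext(G)\|_1$) is nondecreasing, hence $g(G)-g_\thresh(G)=\|\dlext(G)-\deglist(G)\|_1$ is nonincreasing.
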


\section{Error Analysis on $\alpha$-Decaying Graphs}
\label{sec:decayerror}

Our techniques provide a significantly more accurate way to release
the degree distributions of graphs while satisfying node-level
differential privacy. To illustrate this, we study the accuracy of our
method on graphs that satisfy $\alpha$-decay, a mild condition on the
tail of the degree distribution.

\subsection{$\alpha$-Decay}\label{sec:assumption}
Recall that  $\Av(G) = 2|E|/|V|$ is the average degree of $G$.

\begin{assumption}[$\alpha$-decay]\label{assum:decay}
Fix $\alpha\geq 1$.  A graph $G$ satisfies {\em $\alpha$-decay} if for all\footnote{Our results hold even when this condition is satisfied only for sufficiently large $t$\drop{namely, at least polynomial in the number of nodes}. For simplicity, we use a stronger assumption in our presentation.} real numbers  $t>1$, $\cum_G(t\cdot \Av)  \leq  t^{-\alpha}$.
\end{assumption}

Note that \emph{all} graphs satisfy 1-decay (by Markov's
inequality). The assumption is nontrivial for $\alpha >1$, but it is
nevertheless satisfied by almost all widely studied classes of
graphs. So-called ``scale-free'' networks (those that exhibit a
heavy-tailed degree distribution) typically satisfy $\alpha$-decay for
$\alpha \in (1,2)$.\anote{citation?} Random graphs satisfy
$\alpha$-decay for essentially arbitrarily large $\alpha$ since their
degree distributions have tails that decay exponentially (more
precisely, for any $\alpha$ we can find a constant $c_\alpha$ such
that, with high probability, $\alpha$-decay holds when $t>c_\alpha$).
Regular graphs satisfy the assumption with $\alpha=\infty$. The
following lemma bounds the number of \emph{edges} adjacent to nodes
with degree above a given threshold.

\begin{lemma}\label{lem:implication-of-decay}
Consider a graph $G$ on $n$ nodes that satisfies $\alpha$-decay for
$\alpha>1$, and let $\thresh>\Av(G)$. Then
$$\sum_{v:\ \deg_v(G)> \thresh} \leq
\frac{\Av(G)^\alpha}{(\alpha+1)\thresh^{\alpha-1}} \cdot n\, .$$
\end{lemma}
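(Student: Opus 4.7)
The plan is to convert the sum into an integral of the CDF tail $\cum_G$ and then plug in the decay bound. The natural quantity to bound here (as used in \lemref{l1extensionerror}) is $\sum_{v:\deg_v>\thresh} (\deg_v - \thresh)$, the "excess" degree, which is also the $\ell_1$ projection error of truncating degrees at $\thresh$.

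First I would write the sum in layer-cake form. For each vertex $v$, $\deg_v - \thresh = \int_\thresh^{\deg_v}\! dx = \int_\thresh^\infty \mathbf{1}[\deg_v > x]\, dx$ when $\deg_v > \thresh$, and summing over $v$ yields
\[
\sum_{v:\deg_v>\thresh}(\deg_v - \thresh) \;=\; \sum_v \int_\thresh^\infty \mathbf{1}[\deg_v > x]\, dx \;=\; n\int_\thresh^\infty \cum_G(x)\, dx.
\]
(Reinterpreting $\cum_G$ as a right-continuous step function of a real argument so that the integral makes sense; equivalently, one can write this as a sum over integer $i>\thresh$ of $n\cum_G(i)$, as already used in \lemref{l1extensionerror}.)

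Next I would invoke $\alpha$-decay inside the integral. Since $\thresh>\Av(G)$, for every $x\geq \thresh$ we have $t := x/\Av(G) > 1$, so Assumption~\ref{assum:decay} gives
\[
\cum_G(x) \;=\; \cum_G\!\bigl(t\cdot \Av(G)\bigr) \;\leq\; t^{-\alpha} \;=\; \bigl(\Av(G)/x\bigr)^\alpha.
\]
Substituting into the integral and evaluating,
\[
n\int_\thresh^\infty \cum_G(x)\, dx \;\leq\; n\,\Av(G)^\alpha \int_\thresh^\infty x^{-\alpha}\, dx \;=\; \frac{n\,\Av(G)^\alpha}{(\alpha-1)\,\thresh^{\alpha-1}},
\]
which is valid because $\alpha>1$. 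This yields a bound of the stated form $\frac{\Av(G)^\alpha}{c(\alpha)\,\thresh^{\alpha-1}}\cdot n$ for an appropriate constant $c(\alpha)$ depending only on $\alpha$.

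There is no real obstacle: the whole argument is a one-line layer-cake identity followed by substitution of the hypothesis and a routine power-law integral. The only subtlety is keeping track of the fact that the decay hypothesis is only usable above $\Av(G)$, which is exactly why the lemma requires $\thresh>\Av(G)$; this is what lets us apply the tail bound uniformly on the whole interval of integration.
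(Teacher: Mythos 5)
Your proof is correct, and in fact the paper states Lemma~\ref{lem:implication-of-decay} without giving any proof at all, so yours is the only argument available to compare against. The layer-cake rewriting $\sum_{v:\deg_v>\thresh}(\deg_v-\thresh)=n\sum_{i>\thresh}P_G(i)$ (which the paper itself establishes inside the proof of Lemma~\ref{lem:l1extensionerror}), followed by substituting the $\alpha$-decay tail bound $P_G(x)\leq(\Av/x)^\alpha$ for $x>\Av$ and then integrating, is the intended route; the hypothesis $\thresh>\Av(G)$ is exactly what licenses using the decay bound over the whole range of integration.

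Your calculation also exposes what appear to be two typos in the lemma as printed. First, the displayed sum has no summand; the intended quantity (the one actually invoked in the subsequent error analysis, and the one your proof addresses) is $\sum_{v:\deg_v(G)>\thresh}(\deg_v(G)-\thresh)$, equivalently $n\sum_{i>\thresh}P_G(i)$. Second, the constant in the denominator cannot be $\alpha+1$: the power-law integral gives
\[
n\,\Av(G)^\alpha\int_\thresh^\infty x^{-\alpha}\,dx \;=\; \frac{n\,\Av(G)^\alpha}{(\alpha-1)\,\thresh^{\alpha-1}},
\]
and since for, say, $\alpha=2$ and large $\thresh$ the sum $\sum_{k>\thresh}k^{-\alpha}$ is genuinely $\approx 1/\thresh$, the tighter $1/((\alpha+1)\thresh^{\alpha-1})$ claimed by the paper is actually false. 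Your hedged phrasing ``for an appropriate constant $c(\alpha)$'' is the right instinct; the correct constant is $c(\alpha)=\alpha-1$. This discrepancy is harmless downstream because Section~\ref{sec:alpha-analysis} uses the lemma only inside $O(\cdot)$, but it is worth flagging that your argument gives the statement the paper should have written.
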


\subsection{Error Analysis}
\label{sec:alpha-analysis}

\citet{KNRS13} gave algorithms for releasing the degree distribution
using a projection-based technique. Their algorithm required knowledge
of the decay parameter $\alpha$ (which was used to select the
projection threshold). They bounded the $\ell_1$ error of their
algorithm in estimating the degree distribution, and showed that it
went to 0 as long as $\alpha>2$ and $\Av$ was polylogarithmic in
$n$. More precisely, they gave an expected error bound of
$$\E \|\hat p - p_G \|_1 = \tilde O\paren{ {\Av^\frac{3\alpha}{\alpha+1} }/\paren{{\eps^2 n^{\frac{\alpha-2}{\alpha+1}}}}}\, .$$

Combining the noisy Lipschitz extension of the degree histogram
(Theorem~\ref{thm:ell1errorfixedthreshold}) with the threshold
selection algorithm (Proposition~\ref{prop:selectDH}), we get an
algorithm $\A_{combo}$ with much better accuracy guarantees that,
additionally, does not need to know the parameter $\alpha$.

\begin{algorithm}[ht]\label{alg:combo}\caption{Degree Histogram
    Estimation For Unknown Threshold}
$\hat D \gets$ Generalized Exponential
Mechanism$(q_\thresh(\cdot), \thresh \in\{1,2,4,....,2^{\lfloor \log n \rfloor}\})$ using score
$q_\thresh( G) = err(\thresh)$ and sensitivity bound $\thresh$ \;

$\hat h \gets \dlexthat(G) + (Y_1,\ldots,Y_{\hat \thresh})$
where $Y_i\sim \Lap(4\hat \thresh / \epsilon)$ i.i.d. \;

\Return $\hat p = \hat h / \|\hat h\|_1$\;

\end{algorithm}

%\anote{Write the whole algorithm (with normalization)}

\begin{theorem}
  Given inputs $G\in \G$ and $\eps>0$, the algorithm $\A_{combo}$
  produces an estimate $\hat p$ such that, if $G$ satisfies $\alpha$
  decay for $\alpha>1$, then
$$\E\|\hat p - p_G \|_1 = O\paren{ {\Av^\frac{2\alpha}{\alpha+1} (\ln
    \ln n)
  }/\paren{\eps n}^{\frac{\alpha-1}{\alpha+1}}} \, .
$$
\end{theorem}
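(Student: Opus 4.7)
The plan is to combine three earlier results: the per-threshold expected error bound (Theorem~\ref{thm:ell1errorfixedthreshold}), the tail estimate for $\alpha$-decaying graphs (Lemma~\ref{lem:implication-of-decay}), and the adaptive threshold-selection guarantee (Proposition~\ref{prop:selectDH}). First I would identify the oracle threshold $\thresh^*$ that minimizes the expected error under $\alpha$-decay; then I would use Proposition~\ref{prop:selectDH} to argue that the data-adaptive threshold $\hat\thresh$ is competitive with $\thresh^*$ up to a $\log\log n$ factor; finally I would convert unnormalized histogram error into normalized distribution error via the renormalization step in $\A_{combo}$.

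For the first step, by Lemma~\ref{lem:l1extensionerror} the extension error satisfies $\|\dlext(G)-\deglist(G)\|_1 \le 2n\sum_{i>\thresh}\cum_G(i)$, and by Lemma~\ref{lem:implication-of-decay} this is $O\bigl(\Av^\alpha n/\thresh^{\alpha-1}\bigr)$ whenever $\thresh>\Av$. Combined with the noise contribution $6\thresh^2/\eps$ from Theorem~\ref{thm:ell1errorfixedthreshold}, the per-threshold error is
$$err(\thresh) \;=\; O\!\left(\frac{\Av^\alpha n}{\thresh^{\alpha-1}} + \frac{\thresh^2}{\eps}\right).$$
Balancing the two summands gives $\thresh^* \asymp (\Av^\alpha n \eps)^{1/(\alpha+1)}$, which lies in $[1,n]$ for $\alpha>1$ and $\Av \le n$, and
$$err(\thresh^*) \;=\; O\!\left(\Av^{2\alpha/(\alpha+1)}\, n^{2/(\alpha+1)}\, \eps^{-(\alpha-1)/(\alpha+1)}\right).$$
Restricting the search to powers of $2$ only costs a constant factor, by the monotonicity argument sketched around Proposition~\ref{prop:selectGen}.

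For the second step, I would set $\beta=1/n$ in Proposition~\ref{prop:selectDH} to obtain, with probability at least $1-1/n$ over the selection step, a threshold $\hat\thresh$ with $err(\hat\thresh)=err(\thresh^*)\cdot O(\log\log n)$. Conditional on this $\hat\thresh$, Theorem~\ref{thm:ell1errorfixedthreshold} bounds the expected $\ell_1$ error of the unnormalized noisy histogram $\hat h$ from $n\cdot p_G$ by $err(\hat\thresh)$. A short triangle-inequality calculation, using $\|n\cdot p_G\|_1=n$ and $\bigl|\|\hat h\|_1-n\bigr|\le \|\hat h-n\cdot p_G\|_1$, shows that renormalization at most doubles the normalized $\ell_1$ error: $\|\hat p - p_G\|_1 \le 2\|\hat h - n\cdot p_G\|_1/n$. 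On the complementary $1/n$-probability event, the trivial bound $\|\hat p - p_G\|_1\le 2$ contributes only $O(1/n)$ in expectation, which is dominated by the main term. Substituting the value of $err(\thresh^*)$ and dividing by $n$ gives the claimed expression.

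The main obstacle will be juggling the two layers of randomness (adaptive threshold choice and Laplace noise) together with the fact that Proposition~\ref{prop:selectDH} is phrased as a high-probability statement about $\E[err(\hat\thresh)]$ rather than a clean unconditional expectation bound; this forces splitting the analysis into a good event and a small failure event, absorbed by the deterministic worst-case bound $\|\hat p-p_G\|_1\le 2$. The remaining work--verifying $\thresh^*\in[\Av,n]$ so that Lemma~\ref{lem:implication-of-decay} applies, bookkeeping the constants in the balance calculation, and handling the renormalization--is routine.
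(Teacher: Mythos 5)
Your overall plan matches the paper's proof closely: both balance the extension error $O(\Av^\alpha n/\thresh^{\alpha-1})$ (from Lemma~\ref{lem:implication-of-decay} and Lemma~\ref{lem:l1extensionerror}) against the noise cost $O(\thresh^2/\eps)$ to obtain $\thresh^* \asymp (\Av^\alpha \eps n)^{1/(\alpha+1)}$; both invoke Proposition~\ref{prop:selectDH} to show the adaptive threshold is competitive up to a $\ln\ln n$ factor; and both control the renormalization by $\|\hat n\|_1$ via exactly the triangle-inequality chain $\|\hat p - \tilde p\|_1 = |\hat n - n|/n \le \|\tilde p - p_G\|_1$, giving at most a factor-2 blowup. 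That part of your argument is sound, and your arithmetic for $err(\thresh^*)$ and its normalization are correct.

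The genuine discrepancy is in how the randomness of $\hat\thresh$ is handled, and your proposed fix does not recover the claimed exponent on the logarithm. You read Proposition~\ref{prop:selectDH} as a high-probability guarantee, set $\beta = 1/n$, and absorb the $1/n$-probability failure event via $\|\hat p - p_G\|_1 \le 2$. But the proposition's bound contains a $\ln(\ln(n)/\beta)$ factor, so $\beta = 1/n$ turns $\ln\ln n$ into $\Theta(\ln n)$ on the good event, which is an $O(\ln n)$ final bound rather than the stated $O(\ln\ln n)$. Taking $\beta$ constant instead repairs the logarithm but then the $\Theta(\beta)$-probability failure event contributes $\Theta(1)$ to $\E\|\hat p - p_G\|_1$, which is not dominated by a main term that tends to zero. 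The paper sidesteps this dilemma entirely: it applies Proposition~\ref{prop:selectDH} as a direct bound $\E[err(\hat\thresh)] \le 2\,err(\thresh^*) + 8\thresh^*\ln\ln(n)/\eps$, folds it in via the law of total expectation $\E\|\tilde p - p_G\|_1 \le \tfrac 1n \E_{\hat\thresh}[err(\hat\thresh)]$, and never splits into good and bad events. (The proposition's statement mixes ``with probability $1-\beta$'' with ``$\E$'', which is confusing, but the proof of the theorem requires the expectation form; under your high-probability reading the theorem as stated does not follow by your route.) It would strengthen your write-up to notice explicitly that the dominant term $err(\thresh^*) \geq (\thresh^*)^2/\eps$ is polynomial in $n$ and thus dwarfs $\thresh^*\ln\ln n/\eps$ — the paper uses this to drop the additive correction — rather than leaving it as ``substituting the value and dividing gives the claim.''
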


In particular, this error is $o(1)$ as $n$ goes to infinity if
$\alpha>1$ and  $\Av^{\frac{2\alpha}{\alpha-1}}=o(\eps n)$.

\begin{proof}
  Fix a graph $G$ that satisfies $\alpha$ decay for $\alpha>1$, and
  let $\Av$ denote its average degree.  Conditioned on selecting a
  given degree threshold $\thresh$,
  Theorem~\ref{thm:ell1errorfixedthreshold} guarantees that the
  $\ell_1$ error of our algorithm in estimating $hist(\deglist(G))=n
  \cdot p_G$ is at most $err(\thresh)= 2 n\sum_{i> D} P_G(i) +
  \frac{6\thresh^2 }{\epsilon}$ (defined as in \eqref{eq:histerror}).

  Although the true size of the graph $n$ is not known to the
  algorithm it is convenient to divide everything by $n$ so that we
  can compare to the true degree distribution $p_G$. Let $\tilde p =
  \hat h / n$ denote
  the estimate of $p_G$ one gets by normalizing the estimated degree
  histogram by the true vertex count $n$
  rather than $\|\hat h\|_1$. We will account
  for the estimation of $n$ at the end of the proof.  Dividing by $n$,
  we get a bound of the form
  $\frac{err(\thresh)}{n}=O\paren{\sum_{i> D} P_G(i) +
    \frac{\thresh^2 }{\eps n}}$ on the error in estimating $p_G$. By
  Lemma~\ref{lem:implication-of-decay}, this bound is at most
  $O\paren{\Av^\alpha /((\alpha+1)\thresh^{\alpha-1}) +
  \thresh^2/(n\epsilon)}$.
  In particular, if we set $\thresh^* = \paren{\Av^\alpha \eps n
  }^{1/(\alpha+1)}$ (which makes the two terms in the sum equal) then
  the expected $\ell_1$ error of $\tilde p$ is at most
$$\frac{err(\thresh^*)}{n}=O(  {\Av^\frac{2\alpha}{\alpha+1}
  }/\paren{\eps n}^{\frac{\alpha-1}{\alpha+1}}).$$

  In the algorithm, we do not select $\thresh^*$ but rather a
  differentially private alternative $\hat \thresh$.  By the law of
  conditional expectations, the overall expected error of $\tilde p$
  is at most the expectation of $\frac{err(\hat \thresh)}{n}$, that is,
  $$\textstyle \E \|\tilde p  - p_G\|_1  = \E_{\hat\thresh}\paren{
  \E_{noise} \paren{\|\tilde p - p_G\|_1 \ \Big| \ \hat \thresh}}
\leq \frac 1 n \E_{\hat\thresh} \paren{err(\hat \thresh)}\,.$$
By Proposition~\ref{prop:selectDH}, the expectation of
  $err(\hat\thresh)$ is at most $2err(\thresh^*) + 8 \thresh^* \ln \ln
  (n)/\eps$. For $\alpha<\infty$, the reference
  threshold $\thresh^*$ is polynomially large in $n$. Thus, the first
  term
  $err(\thresh^*)$ (which is at least $(\thresh^*)^2/\eps$) dominates
  the second term, and the final error bound is
  $$\textstyle \E \|\tilde p- p_G\|_1  = O(  {\Av^\frac{2\alpha}{\alpha+1}
  }/\paren{\eps n}^{\frac{\alpha-1}{\alpha+1}})\, .$$

  Finally, we analyze the difference between $\A_{combo}$ and
  $\A'$. Let $\hat n$ denote the estimated number of edges in $G$,
  that is $\hat n = \|\hat h\|_1$. Note that  for any given
  realization of the algorithm's random choices, if $\tilde p$ is a
  good approximation to the true distribution $p_G$, then $\hat n$
  must be good estimate of the true number of vertices:
  $$|\hat n - n|  = |\ \|\hat h
  \|_1 - n\ | \leq n |\ \|\tilde p\|_1 -1\ | \leq n \|\tilde p -  p_G \|_1
  \, .$$
  This allows us to bound the difference between $\hat
  p$ and $\tilde p$. Since
$
\tilde p - \hat p = \hat p \paren{\frac{\hat n}{n} -1 } $, the
$\ell_1$ norm of $\tilde p - \hat p$ is at most $\|hat p\|_1 \|\tilde
p -  p_G \|_1 = \|\tilde p -  p_G \|_1$. Thus, the error of $\hat p$
in estimating $p_G$ is never more than twice the error of $\tilde p$:
\begin{displaymath}
\| \hat p -p_G\|_1 \leq   2\| \tilde p -p_G\|_1
\qquad
\text{ and thus }
\qquad
 \textstyle \E \|\hat p- p_G\|_1  = O(  {\Av^\frac{2\alpha}{\alpha+1}
  }/\paren{\eps n}^{\frac{\alpha-1}{\alpha+1}})\, . \qedhere
\end{displaymath}
\end{proof}

\fi %%% MATCHES EM DETAILS

\begin{small}
\bibliographystyle{abbrvnat}
\bibliography{Lipschitz-extensions}
\end{small}

\ifnum\full =1 %%% DOUBLING

\appendix
\section{Doubling dimension of graph metrics}
\label{sec:doubling}

\begin{definition}
  The doubling dimension $\rho$ of a metric space $(X,d)$ is the
  smallest integer such that every ball in $X$ can be covered using at
  most $2^\rho$ balls of half the radius.
\end{definition}

Consider the set $\Gn$ of graphs on at most $n$ vertices. If we equip
$\Gn$ with the edge adjacency metric, we get a set essentially
equivalent to the $\binom n 2$-dimensional Hamming cube (in fact, a
union of $n$ different Hamming cubes corresponding to graphs of sizes
$1,2,...,n$).  This metric has doubling dimension $\Theta(n^2)$.

Intuitively, the doubling dimension of the vertex-adjacency metric on
$\Gnd$ should be similar. We sketch a weaker statement here, namely
that the doubling dimension is $\Omega(n)$. This bound shows that
constructions with stretch bounded by the doubling dimension still
have very high stretch when used on the vertex metric.

\begin{lemma}
  The doubling dimension of the vertex-adjacency metric on $\Gnd$ for
  $\thresh\geq 1$ is $\Omega(n)$. If we collapse the set $\Gnd$ by
  identifying isomorphic graphs, then the statement continues to hold
  for $\thresh\geq 4$.
\end{lemma}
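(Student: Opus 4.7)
The plan is to exhibit a bi-Lipschitz embedding of the $m$-dimensional Hamming cube $\{0,1\}^m$ into $\Gnd$ for some $m=\Omega(n)$, and then invoke the fact that $\{0,1\}^m$ itself has doubling dimension $\Omega(m)$; this lower bound transfers (up to a constant depending only on the distortion) to any ambient space containing a bi-Lipschitz image of the cube. Concretely, $\{0,1\}^m$ admits a packing of $2^{\Omega(m)}$ points inside a Hamming ball of radius $r$ at pairwise distance strictly greater than $r$ (for instance via a constant-weight code from the Gilbert--Varshamov bound). Under a bi-Lipschitz embedding with constant distortion, the images of these points form a packing in $\Gnd$ that, after iteratively halving radii $O(1)$ times, forces doubling dimension $\Omega(m)=\Omega(n)$.

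For the labeled case ($\thresh\geq 1$), I would take $m=\lfloor n/2\rfloor$ and, for each $b\in\{0,1\}^m$, let $M_b$ be the graph on vertex set $[n]$ containing the edge $\{2i-1,2i\}$ exactly when $b_i=1$. Each $M_b$ is a matching, hence in $\Gnd$. The upper bound $\dwire(M_b,M_{b'})\leq 2|b\triangle b'|$ comes from removing and re-inserting one endpoint of each differing pair to toggle the corresponding edge. For the lower bound $\dwire(M_b,M_{b'})\geq|b\triangle b'|$, observe that any single add/remove operation can change at most one edge of a matching (each vertex has degree at most one), while the differing pairs are vertex-disjoint, so at least one operation per differing pair is required. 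This yields an embedding of the Hamming cube into $\Gnd$ with distortion at most $2$.

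For the isomorphism-class version ($\thresh\geq 4$), I would use a ``rigid skeleton plus decorations'' construction. Take a path on vertices $v_1,\ldots,v_m$ and attach two pendant vertices to $v_1$, so that $v_1$ has degree $3$ and the path-reversal symmetry is destroyed. For each $b\in\{0,1\}^{m-1}$ and each $i\in\{2,\ldots,m\}$, attach a pendant leaf to $v_i$ exactly when $b_{i-1}=1$. The total number of vertices is at most $2m+2$, so $m=\Omega(n)$, and the maximum degree is at most $3\leq\thresh$. Every such graph has trivial automorphism group modulo swapping the two marker pendants at $v_1$, so distinct $b$ yield non-isomorphic graphs and the canonical labeling of the spine is recoverable from the isomorphism class. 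The labeled distance bounds then transfer to the iso-class metric with only constant-factor changes, yielding a constant-distortion embedding of $\{0,1\}^{m-1}$.

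The main obstacle is the lower bound on $\dwire$ in the iso-class setting: one must rule out ``relabeling shortcuts'' that could move a decoration between positions without cost. The rigidity of the skeleton closes this gap, since the two pendants at $v_1$ uniquely identify the endpoints of the spine and force every isomorphism between representatives to respect the positional coordinates $1,\ldots,m$. Once this is established, the vertex-disjointness argument from the labeled case carries over, and the doubling-dimension bound follows from the generic packing transfer described in the first paragraph.
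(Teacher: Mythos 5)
Your high-level strategy is the same as the paper's: embed a large piece of a Hamming cube into $\Gnd$ via a ``rigid skeleton plus pendant decorations'' construction, and then transfer the $\Omega(m)$ doubling-dimension lower bound for the $m$-cube through the (nearly) bi-Lipschitz embedding. The paper's sketch also observes that one should restrict to a linear-rate, linear-distance code rather than the whole cube, and it explicitly flags that a fully rigorous rigidity argument is delicate. Your matching construction for the labeled case ($\thresh\geq 1$) is a correct and pleasant addition the paper does not spell out: each add/remove can affect only one edge of a matching and the differing pairs are vertex-disjoint, so the labeled distance is exactly $2|b\triangle b'|$ and the embedding has no distortion at all.

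However, your iso-class construction contains a concrete gap. The claim that ``every such graph has trivial automorphism group modulo swapping the two marker pendants at $v_1$'' is false. Take $m=4$ and $b=(0,1,0)$, so the only decoration is a pendant $q_3$ at $v_3$. Then $v_1$ and $v_3$ both have degree $3$ and are each adjacent to two leaves ($v_1$ to $p_1,p_1'$ and $v_3$ to $q_3,v_4$), and the map $v_1\leftrightarrow v_3$, $v_2\mapsto v_2$, $v_4\leftrightarrow p_1$, $q_3\leftrightarrow p_1'$ is a nontrivial automorphism. More generally the ``two pendants at $v_1$'' are not enough to pin down $v_1$ once the path is decorated, because $v_{m-1}$ (when $b_{m-1}=0$ and $b_{m-2}=1$) is also a degree-$3$ vertex adjacent to two leaves. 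This undercuts the step where you claim the spine coordinates are recoverable from the isomorphism class and that the vertex-disjointness lower bound therefore transfers. The paper sidesteps this by building the skeleton from a random $3$-regular graph $G_0$ with no nontrivial automorphisms (which has the stronger property that small vertex deletions cannot create accidental symmetries) and then restricting the attached pendant sets to a good error-correcting code, precisely so that one can rule out ``relabeling shortcuts'' even after deleting a sublinear number of skeleton vertices; and even then the paper explicitly says the complete argument is delicate and omits it. Your construction is likely salvageable --- for instance, give $v_1$ three pendants (degree $4$, consistent with $\thresh\geq 4$) so it is the unique maximum-degree vertex, and restrict $b$ to a code so that surviving symmetries among a few positions cannot collapse distances --- but as written the rigidity step is asserted rather than proved, and the specific assertion about trivial automorphism groups is wrong.
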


\begin{proof}[Proof Sketch]
  Assume $n$ is even, w.l.o.g.\ To prove the theorem, we embed the
  Hamming cube $Ham_{n/2}$ into $\Gnd$, which shows that $\Gnd$ has
  doubling dimension $\Omega(n)$. Let $G_0$ be a uniformly random
  regular graph of degree 3 on $n/2$ vertices. For every subset
  $S\subseteq [\frac n 2]$, let $G_S$ be the graph on $\frac n 2 +|S|$
  vertices obtained by starting from $G_0$ and adding one vertex for
  each element in $S$ and connecting it to the corresponding vertex in
  $G_0$.

  The vertex distance between two such graph $G_S$ and $G_T$ is
  $\Omega(|S \triangle T|)$, as with the Hamming metric, as long as
  $S$ and $T$ are sufficiently far from each other. This is sufficient
  to prove the main result as we may select $S$ and $T$ from an
  error-correcting code with linear rate and minimum distance.

  A
  complete proof is delicate, since one must account for the
  possibility that one can get from $G_S$ to a graph that is
  isomorphic to a subgraph of  $G_T$ by deleting fewer than $|S\triangle T|$ vertices
  from $G_0$. We omit the details.
\end{proof}

\fi %%% DOUBLING

\ifnum\extras=1 %MATCHES EXTRA APPENDICES

\section{$\ell_2$ Lipschitz extension of the adjacency matrix}
\label{sec:adjacencyext}

In this section, we give a Lipschitz extension of the adjacency matrix
of the graph in the $\ell_2$ norm -- in other words, an algorithm that
``projects'' to a fractional graph which changes smoothly as a
function of the input.

\anote{Does this work directly for the Laplacian, too? Might be nice
  for revealing statistics about the graph spectrum?}

Given a (possibly directed) $G=(V,E)$ with adjacency matrix $A$,
a fractional subgraph of $G$ is an assignments of weights in $[0,1]$
to the edges of $G$. We represent it as a matrix $X$ which has an
entry $x_{uv} \in [0,1]$ for each 1 in $A$, and 0's elsewhere.
When $G$ is undirected, we restrict $X$ to be symmetric.

The fractional in-degree of a vertex $u$ in $X$ is  $d_v^{in}(X) =
\sum_{u \in V} x_{uv}$; out-degree is defined similarly.
Given $G$, we let $C_{G,\thresh}$ of set of fractional subgraphs with in- and
out-degree bounded above by $\thresh$, that is
$$C_{G,\thresh} = \left\{ X \in \re^{V \times V} \ : \
\begin{matrix}
  0\leq X_{i,j} \leq A_{i,j} &\text { for all }i,j \, , \\
  \sum_j
  X_{i,j}\leq \thresh & \text{ for all }i \, , \\
\sum_i
  X_{i,j}\leq \thresh &\text{ for all }j \\
\end{matrix}
\right\} \, .$$

% Given a graph $G$, let $F^{in}_{G,\thresh}$ denote the polytope of possible
% in-degree lists, and $F^{out}_{G,\thresh}$ the possible out-degree
% lists of fractional graphs in $C_{G,\thresh}$ (for symmetric graphs, we drop the superscripts). Note
% that these sets are different from the sets of vectors in $[0,\thresh]$ that are
% dominated by the in- and out-degree lists of the graph $G$.

We will denote by $\re^{*\times *}$ the set of real-valued square
matrices of finite dimension. We assume that for each such matrix $X$,
the rows and columns are indexed by a set $V$, and we compare two
matrices $X$ and $X'$ indexed by sets $V,V'$ by extending each to be
indexed by $V\cup V'$, filling in new entries with 0. (In the language
of graphs, this corresponds to padding the two graphs with isolated
vertices to make the vertex sets identical). We use $\re^{*\times
    *}_{\ell_2}$ to denote the metric space with the  $\ell_2$
metric.

Consider the function that maps a graph $G$ to its adjacency
matrix. Restricted to directed graphs with maximum in- and out-degree
at most $\thresh$, this function has $\ell_1$ sensitivity $2\thresh$,
and $\ell_2$ sensitivity $\sqrt{2\thresh}$.

Given a graph $G=(V,E)$, let $\hat A_\thresh(G)$ be given by the $\ell_2$
projection of the $|V|\times |V|$ all-ones matrix, denoted $\onesmat{|V|}$, onto
the polytope $C_{G,\thresh}$, that is:
$$\hat A_\thresh(G) \defeq \argmin_{X \in C_{G,\thresh}} \| \onesmat{|V|}- X\|_2^2
\,.$$
The projection is uniquely defined since $C_{G,\thresh}$ is convex and closed.

\begin{theorem} For every $\thresh \geq 0$,
  $\hat A_\thresh$ is an extension of $A:\Gd \to \re^{*\times
    *}_{\ell_2}$ to all of $\G$ with Lipschitz
  constant at most $4\thresh$, that is, with stretch at most 2.
\end{theorem}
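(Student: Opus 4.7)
The plan is to establish two claims: (i) $\hat A_\thresh(G) = A(G)$ for every $G \in \Gd$, and (ii) $\|\hat A_\thresh(G_1) - \hat A_\thresh(G_2)\|_2 \leq 4\thresh$ for any node-neighboring graphs. Claim (i) follows immediately from the shape of the objective. Writing $\|\onesmat{|V|} - X\|_2^2 = \sum_{i,j}(1 - X_{ij})^2$, the function is strictly decreasing in each $X_{ij} \in [0,1]$, so it is minimized at the coordinate-wise maximum of $C_{G,\thresh}$. When $G \in \Gd$, the adjacency matrix $A(G)$ itself lies in $C_{G,\thresh}$ (the degree constraints are satisfied by hypothesis) and dominates every other feasible $X$ coordinate-wise, so it is the unique minimizer.

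For claim (ii), let $V_2 = V_1 \cup \{\vnew\}$ with $G_1 = G_2 - \vnew$, and write $X_i = \hat A_\thresh(G_i)$, viewing both as elements of $\re^{V_2 \times V_2}$ by padding $X_1$ with zeros on the $\vnew$ row and column. Two feasibility facts, each a direct check of the defining inequalities, are crucial: the padded $X_1$ lies in $C_{G_2,\thresh}$, and the restriction $Y_2 := X_2|_{V_1 \times V_1}$ (padded back to $V_2\times V_2$) lies in $C_{G_1,\thresh}$. Applying the standard first-order optimality condition for projection onto a closed convex set at the two minimizers then yields
\[
\langle \onesmat{V_1} - X_1,\, Y_2 - X_1 \rangle \leq 0 \qquad \text{and} \qquad \langle \onesmat{V_2} - X_2,\, X_1 - X_2 \rangle \leq 0.
\]
Since $Y_2 - X_1$ vanishes on the $\vnew$ row and column, I can freely replace $\onesmat{V_1}$ by $\onesmat{V_2}$ in the first inequality. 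Adding the two, writing $Y_2 - X_1 = (Y_2 - X_2) + (X_2 - X_1)$, and using the telescoping identity $\langle (\onesmat{V_2} - X_1) - (\onesmat{V_2} - X_2),\, X_2 - X_1 \rangle = \|X_1 - X_2\|_2^2$, yields the key bound
\[
\|X_1 - X_2\|_2^2 \;\leq\; \langle \onesmat{V_2} - X_1,\, X_2 - Y_2 \rangle.
\]

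The main step, and the one that actually pins down the stretch, is bounding this inner product. Here the structure of the problem helps decisively: $X_2 - Y_2$ is supported on exactly the $\vnew$ row and column (the entries discarded in restricting $X_2$ to $V_1\times V_1$), and on that same support $\onesmat{V_2} - X_1$ is identically $1$, since the padded $X_1$ vanishes there. Hence the inner product equals the total weight of $X_2$ on edges incident to $\vnew$, which is at most $d^{in}_\vnew(X_2) + d^{out}_\vnew(X_2) \leq 2\thresh$ by the capacity constraints defining $C_{G_2,\thresh}$. Therefore $\|X_1 - X_2\|_2 \leq \sqrt{2\thresh}$, which is within the stated $4\thresh$ bound (and is trivial when $\thresh = 0$). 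The main technical obstacle was really just recognizing the right test points to plug into the two variational inequalities; once $X_1$ is used as a test for $G_2$ and $Y_2$ for $G_1$, the restriction-padding symmetry does all the work.
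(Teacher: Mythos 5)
Your proof is correct, and it is close in spirit to the paper's argument but makes more efficient use of the convexity structure. Both proofs rest on the same two feasibility observations --- the padded $X_1$ lies in $C_{G_2,\thresh}$ and the restricted $Y_2 = X_2|_{V_1\times V_1}$ lies in $C_{G_1,\thresh}$ --- and both exploit optimality of the two projections. The paper, however, applies the (Pythagorean form of the) variational inequality only at $X_1$ and, for $X_2$, uses merely the weaker fact that $\|\onesmat{V_2}-X_2\|_2^2 \leq \|\onesmat{V_2}-X_1\|_2^2$; it then routes the calculation through the block decomposition $X_2 = x_0 + x_1$. You instead invoke the first-order variational inequality symmetrically at \emph{both} minimizers and sum, which telescopes cleanly to
\[
\|X_1-X_2\|_2^2 \;\leq\; \bigl\langle \onesmat{V_2}-X_1,\; X_2 - Y_2 \bigr\rangle ,
\]
and you then identify the right-hand side exactly as the $X_2$-mass on $\vnew$'s row and column, bounded by $2\thresh$.

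This yields $\|X_1-X_2\|_2^2 \leq 2\thresh$, a factor-of-two improvement over the paper's $4\thresh$ (indeed, the loss in the paper comes precisely from the weaker inequality used at the second minimizer; redoing their argument with the Pythagorean inequality at both minimizers recovers your $2\thresh$). Since the $\ell_2$ sensitivity of the adjacency map restricted to $\Gd$ is exactly $\sqrt{2\thresh}$, your bound actually shows stretch $1$, not $2$. One small clarification for your last sentence: the theorem's phrase ``Lipschitz constant at most $4\thresh$'' is really referring to a bound on the \emph{squared} $\ell_2$ distance --- the paper's own proof reduces to showing $\|x^*_{\mathrm{new}}-x^*_{\mathrm{old}}\|_2^2 \leq 4\thresh$ --- so the like-for-like comparison is your $2\thresh$ against their $4\thresh$, rather than $\sqrt{2\thresh}$ against $4\thresh$.
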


\begin{proof} Let $G,G'$ be vertex-adjacent graphs, where $G$ is obtained
  by removing a node $\vnew$ along with its associated edges from a
  graph $G_2$. Let $n$ denote the size of the smaller graph
  $G$. Order the vertices of the two graphs so that the first $n$
  vertices are the same for both graphs, and $\vnew$ is the $n+1$-st
  vertex in $G'$.

  Let $x^*_{old}=\hat A_\thresh(G)$ and $x^*_{new}=\hat
  A_\thresh(G')$. These matrices have different dimensions ($n$ and
  $n+1$), but by adding a row/column of zeros to $n$-dimensional
  matrices, we may view $C_{G,\thresh}$ as a subset of
  $C_{G',\thresh}$. With this view, there is no difference between
  minimizing the distance to $\onesmat{n}$ and the distance to
  $\onesmat{n+1}$. Thus, $x^*_{old},x^*_{new}$ are minimizers of the same
  function $\|\onesmat{n+1}-\cdot\  \|_2^2$ over the sets
  $C_{G,\thresh}\subseteq C_{G',\thresh}$, respectively. To prove the
  theorem, we must show that $\|x^*_{new} - x^*_{old}\|_2^2\leq 4\thresh$.

  Write $x^*_{new}$ in block form $x_0 + x_1$,
  where $(x_0)_{i,j}= (x^*_{new})_{i,j}$ for $i,j \in [n]$ and $(x_0)_{i,j}= 0$
  elsewhere, and $(x_1)_{i,j}= (x^*_{new})_{i,j}$ for pairs $i,j$
  where either $i$ or $j$ is $n+1$. Note that $x_0$ and $x_1$ have
  disjoint support, and the support of $x_1$ is disjoint from that of
  $x^*_{old}$.
Hence,
  $$\|x^*_{new} - x^*_{old}\|_2^2 = \|x_1\|_2^2+ \|x_0 - x^*_{old}\|_2^2 \,.$$

  We start by bounding the second term. Note that $x_0$ is in the set $C_{G,\thresh}$ (the feasible set for the
  smaller graph $G$), since
  $x_0$ has row- and column-sums bounded by $\thresh$ and the
  last row and column of $x_0$ are all 0's.

%
  % Since the function $\|\onesmat{n+1} - \cdot\
  % \|_2^2$ is 2-strongly convex with respect to the $\ell_2$ norm, and
  % since $x^*_{old}$ is its minimizer on the set $C_{G,\thresh}$,
  Since $x^*_{old}$ is the minimizer of $\|\onesmat{n+1} - X\
  \|_2$ on the set $C_{G,\thresh}$, the angle formed by the segments $\onesmat{n+1}
  \to x^*_{old} \to X$ is at least $90^\circ$, for any point
  $X\in C_{G,\thresh}$. (This holds because  $C_{G,\thresh}$ is
  convex. If the angle were acute, one could move along the segment
  from $x^*_{old} $ to $X$ and get closer to $\onesmat{n+1}$).
Thus,
  $$\| \onesmat{n+1} - X \|_2^2 \geq \| \onesmat{n+1} - x^*_{old}\|_2^2 + \|  x^*_{old}-X\|_2^2\,.$$
  Rearranging terms and substituting $X =  x_0$, we get
  \begin{equation}
 \| x_0-  x^*_{old}\|_2^2 \leq
\| \onesmat{n+1} - x_0\|_2^2 - \| \onesmat{n+1} - x^*_{old}\|_2^2\,.\label{eq:3}
\end{equation}

We can develop the distance from $\onesmat{n+1}$ to $x^*_{old}$ in terms of the distance from
$\onesmat{n+1}$ to $x^*_{new}$:
$$
\begin{matrix}
  \| \onesmat{n+1}- x_0\|_2^2
  &=& \| \onesmat{n+1} - x_0-x_1 \|_2^2
  &+&2 \ip{ (\onesmat{n+1}-x_0) , x_1 } &-& \|x_1\|_2^2  \ \\
  &\leq& \| \onesmat{n+1}- x^*_{new}\|_2^2 &+& 2\|x_1\|_1 &-&
  \|x_1\|_2^2 \,,
\end{matrix}
$$
where we have used the fact that the entries of $\onesmat{n+1}- x_0$
are in $[0,1]$ to bound the inner product with $x_1$.
Now, recall that both $x^*_{new}$ and $x^*_{old}$ can be viewed as
elements of the set $C_{G',\thresh}$ (the feasible set for $G'$), and $x^*_{new}$
minimizes the distance to $\onesmat{n+1}$ within that set. Plugging the
bound above into \eqref{eq:3}, we get
$$\| x_0-  x^*_{old}\|_2^2 \leq \underbrace{\| \onesmat{n+1}-
  x^*_{new}\|_2^2  -  \| \onesmat{n+1} - x^*_{old}\|_2^2}_{\leq 0} +
2\|x_1\|_1  - \|x_1\|_2^2\, .$$
Putting the pieces together, we have
  $$\|x^*_{new} - x^*_{old}\|_2^2 = \|x_1\|_2^2+ \|x_0 -
  x^*_{old}\|_2^2 \leq 2\|x_1\|_1 \,.
$$
Finally, the $\ell_1$ norm of $x_1$ is at most $2\thresh$ since the
  entries of row $n+1$ of $x^*_{new}$ add to at most $\thresh$, as do the
  entries of column $n+1$. We get $\|x^*_{new} - x^*_{old}\|_2^2 \leq
  4\thresh$, as desired.
\end{proof}

\section{Other Applications}
\label{sec:otherapps}

\begin{itemize}
\item
  edge types?

\item   $(in,out)$ degree pairs? (See Sofya's ``new'' Google notebook for
  10/09/14 and 10/15/14).

\item Analyze our algorithms on specific stochastic families, e.g.,
  Sesa paper on $\beta$ model? Product graphons?

\item Use $\ell_2$ extension for graphon approximation.
\end{itemize}

\subsection{Generalizations to Related Functions}
\label{sec:bluepink}

The ideas in the preceding two sections can be generalized to a variety
of settings. For example, suppose that vertices have

Suppose that edges are divided in $k$ types, so that the degree
distribution real y

\fi %MATCHES EXTRA APPENDICES

\end{document}